\newcommand{\cc}[1]{\mathsf{#1}}
\newcommand{\set}[1]{\{#1\}}
\newcommand{\dash}{\text{-}}
\DeclareMathOperator{\dom}{dom}
\DeclareMathOperator{\img}{img}
\newcommand{\N}{\mathbb{N}}
\DeclareMathOperator{\poly}{poly}
\theoremstyle{plain}
\newtheorem{proposition}{Proposition}[section]
\newtheorem{theorem}[proposition]{Theorem}
\newtheorem{corollary}[proposition]{Corollary}
\newtheorem{question}[proposition]{Open Question}
\newtheorem{lemma}[proposition]{Lemma}
\newtheorem{observation}[proposition]{Observation}
\theoremstyle{definition} 
\newtheorem{definition}[proposition]{Definition}
\newtheorem{conjecture}[proposition]{Conjecture}
\newtheorem{remark}[proposition]{Remark}
\newtheorem{example}[proposition]{Example}
\newtheorem{cform}{Classical AC}
\newtheorem{pform}{Polynomial-Time AC (PAC)}
\newtheorem{cresult}{Classical Result}
\newtheorem{panalogue}{Polynomial-Time Analogue}
\begin{document}

\title{Polynomial-Time Axioms of Choice and \\
Polynomial-Time Cardinality \\ 
{\normalsize \textit{Dedicated to the memory of Alan L. Selman}}}
\author{Joshua A. Grochow}

\maketitle

\pagestyle{myheadings}
\markboth{Polynomial-Time Axioms of Choice - Joshua A. Grochow}{Polynomial-Time Axioms of Choice - Joshua A. Grochow}

\begin{abstract}
There is no single canonical polynomial-time version of the Axiom of Choice (AC); several statements of AC that are equivalent in Zermelo-Fraenkel (ZF) set theory are already inequivalent from a constructive point of view, 
and are similarly inequivalent from a complexity-theoretic point of view. In this paper we show that many classical formulations of AC, when restricted to polynomial time in natural ways, are equivalent to standard complexity-theoretic hypotheses, including several that were of interest to Selman.  This provides a unified view of these hypotheses, and we hope provides additional motivation for studying some of the lesser-known hypotheses that appear here.

Additionally, because several classical forms of AC are formulated in terms of cardinals, we develop a theory of polynomial-time cardinality. Nerode \& Remmel (\emph{Contemp. Math.} 106, 1990 and Springer Lec. Notes Math. 1432, 1990) developed a related theory, but restricted to unary sets. Downey (Math. Reviews MR1071525) suggested that such a theory over larger alphabets could have interesting connections to more standard complexity questions, and we illustrate some of those connections here.

The connections between AC, cardinality, and complexity questions also allow us to highlight some of Selman's work. We hope this paper is more of a beginning than an end, introducing new concepts and raising many new questions, ripe for further research.
\end{abstract}

\section{Introduction} \label{sec:intro}
\begin{quotation}
\noindent ``The Axiom of Choice is obviously true, the well-ordering principle is obviously false, and who can tell about Zorn's Lemma?'' ---Jerry L. Bona, \cite[p.~145]{Handbook}
\end{quotation}

The Axiom of Choice (AC), in one of its many equivalent incarnations, says that for every collection $(X_i)_{i \in I}$ of non-empty sets $X_i$ (where $I$ is an arbitrary index set), there is a ``choice function'' $x\colon I \to \bigcup_{i \in I} X_i$ such that $x(i) \in X_i$ for all $i$. Despite being non-constructive, this axiom is extremely useful throughout much of mathematics. In classical, Zermelo--Fraenkel set theory (ZF), there are many statements that are equivalent to AC. We refer to the books by Herrlich \cite{Herrlich} and Rubin \& Rubin \cite{RubinRubin, RubinRubin2} for the classical AC and its many equivalent versions. Yet despite being equivalent in ZF, many versions often ``feel'' different, as captured nicely by Bona in the above quotation.

In this paper, we study several statements that are classically equivalent to AC and propose polynomial-time analogues that seem not to be equivalent to one another (under standard complexity assumptions). We show their relationship with some standard---and some less standard---complexity questions, most with relations to Alan Selman's work, which is part of why we thought it fitting to submit to this memorial volume. 

\begin{quotation}
\noindent \textbf{Informal Definition.} A \emph{polynomial-time axiom of choice} is any statement $S$ which, if the words ``polynomial-time,'' ``polynomially-bounded,'' and similar were removed from $S$, would be equivalent (in Zermelo-Fraenkel set theory) to the usual Axiom of Choice. 
\end{quotation}

Throughout this paper, our general philosophy is that ``set'' should become ``language in $\cc{P}$'' and ``function'' should become ``function in $\cc{FP}$.''  Where there are several obvious choices for how to formulate a polynomial-time version of AC equivalent to a given standard formulation, we generally avoid formulations that are trivially true or trivially false.

Because several statements classically equivalent to AC are phrased in terms of cardinals, we also develop a polynomial-time theory of ``p-cardinals'' in Section~\ref{sec:cardinality}. The basic idea is as follows: Cantor's idea of cardinality was that two sets have the same ``size'' iff there is a bijection between them, so by analogy, we say two languages $A \subseteq \Sigma^*, B \subseteq \Gamma^*$ should have the same p-cardinality if there is a polynomial-time computable, polynomial-time invertible bijection between $A$ and $B$. The key difference between this and the notion of p-isomorphism is that a p-isomorphism $A \to B$ must be given by a \emph{total} p-computable, p-invertible bijection $f\colon \Sigma^* \to \Gamma^*$ such that $f(A)=B$, whereas in our case we focus more on the sets themselves, only requiring our bijection to be computable and invertible in polynomial time \emph{when restricted to the sets} $A,B$. Outside these sets it need not be injective nor surjective.

Our notion of p-cardinality turns out to be related to several complexity-theoretic notions of interest in Selman's work, including p-enumerability (Observation~\ref{obs:enum} and Proposition~\ref{prop:enum}), the Isomorphism Conjecture (Section~\ref{sec:iso}), p-rankability (Section~\ref{sec:countable}), and immunity (in Propositions~\ref{prop:cardvsiso} and \ref{prop:NPimmune}). We suspect it has further relationships with other of Selman's interests including disjoint $\cc{NP}$ pairs \cite{GSSZdisjoint, GSTWdisjoint, GHSWdisjoint, GSSdisjoint}, mitoticity \cite{GPSZmitosis, GOPS07, GNSW17}, splitting \cite{GPSZSplitting}, and p-selectivity \cite{SelmanPsel, Selman81Psel, NaikSelmanPsel, HT}, but leave these as exciting future directions.

\subsection{Related Work}
Constructive versions of AC, even in the context of computability, have been studied previously, e.\,g., \cite{Carl, DM}, but we are not aware of other work studying polynomial-time analogues of AC.

For p-cardinality, the most closely related work is that of Nerode \& Remmel \cite{NerodeRemmelPET, NerodeRemmelIsol, NerodeRemmel1, NerodeRemmel2}. They developed not only a theory of polynomial-time cardinals, called ``polynomial equivalence types'' or ``PETs'' (in analogy with previously studied recursive equivalence types \cite{DekkerMyhill}, see \cite{Dekker}), but also began to develop a theory of polynomial-time structures such as vector spaces, groups, etc., in analogy with work on computable structures (see, e.\,g., \cite{CMT, AshKnight, MontalbanBook}). However, their polynomial-time theory was developed for unary languages. 

While the choice to use unary languages enabled many analogies with the recursive case to go through and led to a rich and interesting theory, it was mostly disconnected from more standard complexity questions such as $\cc{P}$ versus $\cc{NP}$, because the latter are typically formulated over an alphabet of size at least 2. Downey, in his review \cite{DowneyReview} of \cite{NerodeRemmelIsol} writes:
\begin{quotation}
``What is not yet clear is the relationship, if any, of complexity questions (such as $\cc{P} \stackrel{?}{=} \cc{NP}$) with \cite{NerodeRemmelIsol}. For instance, it would be truly fascinating if the techniques could be used to prove $\cc{P} = \cc{NP}$ to be independent of [intuitionistic Zermelo--Fraenkel set theory] via an analogue of McCarty's work \cite{McCarty, McCartyThesis}. Perhaps what is needed is the development of the theory over the standard languages $\subseteq \{0,1\}^*$.'' --R. Downey \cite{DowneyReview}
\end{quotation}
Here (Section~\ref{sec:cardinality}) we pursue one such development of the theory over standard languages, that is, over alphabets of size at least 2. While we do not yet realize Downey's suggestion about independence from IZF, we lay some possible foundations of such a theory. In fact, this paper originally was focused on AC, with p-cardinality as just a side note, and it was our discovery of Downey's quote that spurred us on to develop some of the foundations of p-cardinality further.

\subsection{Selman's influence on my work}
I only met Alan personally a few times---I think once when I visited Steve Homer at Boston University (to give my first talk on complexity theory research), and once or twice at the Conference on Computational Complexity. But even from those few meetings one could tell that he was nice, generous with his ideas and time, had broad interests, and seemed genuinely very happy to talk about complexity theory. 

But his influence on my work extends far beyond just our few chance meetings. The general theme is this: when I started graduate school and was talking with Lance Fortnow and other members of the Theory group at U. Chicago (at the time: Stuart Kurtz, Ketan Mulmuley, Janos Simon, Laci Babai, and shortly after I joined, also Sasha Razborov), I was just over-the-moon with structural complexity theory. Properties of complexity classes, properties of (formal) languages, p-isomorphism, complexity cores, p-selectivity---I loved it all (still do!). Early on in my graduate career Selman's two edited volumes \cite{CTR, CTR2} were very influential on my thinking. Every chapter of \cite{CTR} is related to something I have worked on at one point in my life (even if not resulting in any publications). They are still some of my favorite advanced complexity books on my shelf---and yes, I actually have the physical books, on a physical bookshelf.

While working on my first complexity paper \cite{FortnowGrochow}, Lance pointed me to Selman's taxonomy of function classes \cite{SelmanFunctions, Selman93Functions, Selman96Functions}, and one can see that $\cc{NPMV}$ and its kin play an important role in that paper. Personal admission: I never really understood the definition of $\cc{TFNP}$ until reading Selman's work on $\cc{NPMV}, \cc{NPSV}$, and their relatives. One of Selman's results was in fact part of the motivation for \cite{FortnowGrochow} (discussed in my master's thesis \cite[Sec.~3.2]{GrochowMasters}): although most function classes behave like their decision class counterparts, we have $\cc{P}^{\cc{NP}[log]} = \cc{P}_{tt}^{\cc{NP}}$ \cite{hemachandraPhD, wagner, bussHay} but Selman showed that for the corresponding function classes this is unlikely:

\begin{theorem}[{Selman \cite{SelmanFunctions}}]
$\cc{FP}^{\cc{NP}[log]} = \cc{FP}_{tt}^{\cc{NP}}$ implies $\cc{NP} = \cc{RP}$ and $\cc{P} = \cc{UP}$.
\end{theorem}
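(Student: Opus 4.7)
The plan is to exploit a simple principle: a function in $\cc{FP}^{\cc{NP}[\log]}$ has, on each input of length $n$, only $2^{O(\log n)} = n^{O(1)}$ possible oracle-response transcripts, hence only polynomially many candidate outputs, exactly one of which is correct. Whenever that output is a polynomially verifiable object, such as a would-be witness, enumerating all candidates and verifying each reduces the associated decision problem to $\cc{P}$. The hypothesis $\cc{FP}^{\cc{NP}[\log]} = \cc{FP}_{tt}^{\cc{NP}}$ is exactly what lets us feed into this engine the naturally parallel ``assemble a witness bit by bit'' functions.

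For $\cc{P} = \cc{UP}$, I would take a $\cc{UP}$-verifier $V$ for a language $L$ and define $f_V \in \cc{FP}_{tt}^{\cc{NP}}$ via the non-adaptive batch of $\cc{NP}$ queries $Q_i(x) := $ ``$\exists y\ V(x,y)=1$ with $y_i = 1$'' (together with one query for $\exists y\ V(x,y)=1$). By the uniqueness clause of $\cc{UP}$, when a witness exists these answers reconstruct it exactly. By the hypothesis, $f_V \in \cc{FP}^{\cc{NP}[\log]}$; enumerating the $n^{O(1)}$ candidate outputs of any such machine and testing each via $V$ decides $L$ in deterministic polynomial time.

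For $\cc{NP} = \cc{RP}$, I would combine the same candidate-enumeration trick with the Valiant--Vazirani isolation lemma, which supplies a randomized polynomial-time map $R$ so that $\phi \notin \text{SAT} \Rightarrow R(\phi) \notin \text{SAT}$, while $\phi \in \text{SAT} \Rightarrow \Pr[R(\phi)\text{ is uniquely satisfiable}] \geq 1/\poly(|\phi|)$. Now let $g(\phi')$ be the analogous bitwise function for the standard SAT verifier, so $g \in \cc{FP}_{tt}^{\cc{NP}} = \cc{FP}^{\cc{NP}[\log]}$. The resulting $\cc{RP}$ algorithm on input $\phi$ computes $\phi' := R(\phi)$, enumerates the $n^{O(1)}$ candidates for $g(\phi')$, and accepts iff some candidate actually satisfies $\phi'$. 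One-sidedness is automatic: if $\phi \notin \text{SAT}$ then $\phi'$ is unsatisfiable so no candidate verifies; if $\phi \in \text{SAT}$ then with inverse-polynomial probability $\phi'$ is uniquely satisfiable, in which case $g(\phi')$ really equals that unique witness and passes the verification. Standard amplification then yields $\text{SAT} \in \cc{RP}$.

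The main subtlety I expect is ensuring that $g$ is genuinely a total $\cc{FP}_{tt}^{\cc{NP}}$ function rather than one defined only under a uniqueness promise: when $\phi'$ has multiple satisfying assignments, the bitwise-OR output of $g$ need not satisfy $\phi'$. This is in fact exactly what is needed—spurious candidates are filtered out by the final verification step, and we only rely on $g$ producing a bona fide witness in the isolated case guaranteed by Valiant--Vazirani. Beyond this conceptual bookkeeping the argument is essentially a clean application of the ``$n^{O(1)}$ candidates plus cheap verification'' principle, which is arguably the reusable content of Selman's observation.
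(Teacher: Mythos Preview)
The paper does not supply its own proof of this theorem; it is quoted as a result of Selman \cite{SelmanFunctions} in the introductory ``Selman's influence'' section, purely as motivation. So there is nothing to compare against.

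That said, your argument is correct and is essentially the standard proof. The engine you identify---an $\cc{FP}^{\cc{NP}[\log]}$ machine has only $2^{O(\log n)} = n^{O(1)}$ possible oracle-answer transcripts, hence only polynomially many candidate outputs, each of which can be generated by simulating the machine deterministically under every transcript---is exactly the point. For $\cc{P}=\cc{UP}$ you correctly observe that the bitwise witness-extraction function lies in $\cc{FP}_{tt}^{\cc{NP}}$ and that uniqueness makes its output a genuine witness, so enumerating and verifying the candidates decides $L$ in $\cc{P}$. For $\cc{NP}=\cc{RP}$ you correctly precompose with Valiant--Vazirani isolation and handle the non-unique case by noting that spurious candidates are harmlessly filtered by verification. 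Your discussion of the ``totality versus promise'' subtlety is apt: $g$ is a total $\cc{FP}_{tt}^{\cc{NP}}$ function regardless of uniqueness, and you only need it to output the actual witness in the isolated case.
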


After that first complexity paper, I was hooked (if I wasn't before). One of my favorite open oracle questions is still from Selman's work. It remains open whether $\cc{NP} = \cc{UP}$ collapses $\cc{PH}$. But more than that, we don't even know that it requires non-relativizing techniques. We wish to highlight this question here:

\begin{question}[Folklore] \label{q:NPUP}
Build an oracle relative to which $\cc{NP} = \cc{UP}$ but $\cc{PH}$ is infinite.
\end{question}

Hemaspaandra, Naik, Ogihara, and Selman \cite{HNOS96} showed that a closely related statement does in fact collapse $\cc{PH}$ to $\cc{\Sigma_2 P}$. Namely, they showed that this collapse of $\cc{PH}$ follows from the assumption that $\cc{NPMV}_g \subseteq_c \cc{NPSV}_g$. All terms are defined in Section~\ref{sec:prelim}, but for now, we can say that the latter statement is equivalent to: for every language $A \subseteq \Sigma^* \times \Sigma^*$ in $\cc{P}$, there is another language $B$ ``refining'' $A$, in the sense that for all $x \in \Sigma^*$, if $(x,y) \in A$, then there exists some $y_x$ such that $(x,y_x) \in A$ and $(x,y_x)$ is the unique element of $B$ with first coordinate $x$. This implies $\cc{NP}=\cc{UP}$, and feels philosophically very closely related (indeed, it is the kind of statement that  someone not familiar with it might mistake for the definition of $\cc{NP}=\cc{UP}$). 
I spent at least a summer, possibly more, in graduate school under Lance's guidance trying to answer Question~\ref{q:NPUP}. The issue is that most naive ways of thinking about how to get $\cc{NP}=\cc{UP}$---and I was definitely naive at the time, probably still am---also imply $\cc{NPMV}_g \subseteq_c \cc{NPSV}_g$, and thus that $\cc{PH} = \cc{\Sigma_2 P}$. To get around this barrier, we first sought to build an oracle relative to which $\cc{NP} = \cc{UP}$ but $\cc{PH} \neq \cc{\Sigma_2 P}$. We did not manage to do that, but I mention this because it led us to another of Selman's works. Rather than trying to fully separate the levels of $\cc{PH}$, which seems to require using the full strength of $\cc{AC}^0$ circuit lower bounds, we just sought to separate its second level, and Baker and Selman had built an oracle separating just two levels of $\cc{PH}$ \cite{BakerSelman}, so we looked to their work for inspiration. 

At this point I don't remember why, but around the same time I was also led to read Selman's work on disjoint pairs, mitoticity, immunity, and p-selectivity (e.\,g., \cite{GOPS07, GSSdisjoint, GS, HNOSPsel, NaikSelmanPsel}). In fact, Selman's work on search-to-decision reductions was part of what spurred my interest, leading me to discuss them in the context of Group Isomorphism with Youming Qiao in graduate school over a decade ago; we just recently made progress on that problem \cite{GQp}. 

Even though we didn't solve Question~\ref{q:NPUP}, it was a glorious several months being fully immersed in Selman's work, that had an influence on my thinking and my career, and will stick with me for the rest of my life. Thanks Alan---your intellectual legacy lives on, and you are missed.

\section{Preliminaries} \label{sec:prelim}
Throughout, $\Sigma$ (and sometimes also $\Gamma$) is a finite alphabet of size at least 2, containing at least the symbols $0,1$. $\epsilon$ denotes the empty string, the unique string of length zero. $|x|$ denotes the length of a string. Length-lexicographic ordering is defined by $x <_{lex} y$ if $|x| < |y|$, or if $|x|=|y|$ and $x$ lexicographically precedes $y$, where lexicographic ordering is given according to some fixed (but unspecified, implicit) total order on $\Sigma$. $\Sigma^*$ denotes the set of all finite strings over the alphabet $\Sigma$. 

We move back and forth freely between $\Sigma^*$ and $\N$ as follows. The natural number associated to $x \in \Sigma^*$ is the number of strings that are $\leq_{lex} x$, and vice versa, so e.\,g., if $\Sigma = \{0,1,2\}$ we have the association

\begin{tabular}{ccccccccccccccccccccc}
0 & 1 & 2 & 3 & 4 & 5 & 6 & 7 & 8 & 9 & 10 & 11 & 12 & 13 & 14 & 15 & 16 & 17 & $\dotsb$ \\
$\epsilon$ & 0 & 1 & 2 & 00 & 01 & 02 & 10 & 11 & 12 & 20 & 21 & 22 & 000 & 001 & 002 & 010 & 011 & $\dotsb$
\end{tabular}

\noindent For two strings $x,y$ and a natural number $n$, when we write $x+y$ or $x+n$, we first convert all strings to natural numbers using the above convention, then add them, and then convert back to the corresponding string, if needed.

For a language $A \subseteq \Sigma^*$ we abuse notation by writing $A(x)=1$ iff $x \in A$ and $A(x)=0$ iff $x \notin A$. $\overline{A}$ denotes the complement of $A$, $\overline{A} = \{x \in \Sigma^* : x \notin A\}$. 

A polynomial-time many-one reduction from $A$ to $B$ is a p-computable total function $f$ such that $x \in A \Leftrightarrow f(x) \in B$ for all $x \in \Sigma^*$, equivalently, $A(x) = B(f(x))$. In this case we write $A \leq_m^p B$. If $A \leq_m^p B$ and $B \leq_m^p A$ we write $A \equiv_m^p B$ and say that $A$ and $B$ have the same polynomial-time many-one degree (of difficulty). A bounded truth-table, or $\leq_{btt}^p$, reduction is a nonadaptive Turing reduction that makes $O(1)$ queries (we only use $\leq_{btt}^p$ in reference to other people's results). $A \subseteq \Sigma^*$ and $B \subseteq \Gamma^*$ are p-isomorphic, denoted $A \cong^p B$, if there is a total, polynomial-time computable bijection $f\colon \Sigma^* \to \Gamma^*$ for which $f^{-1}$ is also computable in polynomial time, and such that $f(A)=B$.

For two languages $A, B \subseteq \Sigma^*$, $A \oplus B$ denotes the ``tagged'' disjoint union $\{0a : a \in A \} \cup \{1b : b \in B\}$ and $A \times B$ denotes the usual Cartesian product. Pairs, tuples, etc. of strings are encoded using any of the standard polynomial-time computable and polynomial-time invertible bijections $\Sigma^* \to \Sigma^* \times \Sigma^*$. We write $A=^*B$ if their symmetric difference $(A\backslash B) \cup (B \backslash A)$ is finite.

We follow notation for partial, multi-valued functions set down by Selman \cite{SelmanFunctions, HNOS96}.  A (possibly) partial, multivalued function $f$ is technically a relation $\subseteq \Sigma^* \times \Sigma^*$, but we think of the first coordinate of this relation as inputs and the second as outputs, and we use standard functional language and notation. Rather than the relation notation $x f y$, we write $f(x) \mapsto y$, but note that there may be more than one such $y$ for a given $x$ (so we are careful \emph{not} to write $f(x) = y$ unless there is exactly one output for $f(x)$). We write 
\[
set\dash f(x) = \{ y : f(x) \mapsto y\} \qquad\text{and}\qquad \dom(f) = \{x : set\dash f(x) \neq \emptyset \}.
\]

Nondeterministic Turing machine transducers---machines with separate output tapes---naturally compute partial, multi-valued functions. Given such a machine $M$, we define $f_M(x) \mapsto y$ iff there is some accepting computation of $M$ that halts with $y$ on its output tape, and we say that $f_M$ is computed by $M$. There is a related but subtly different function also associated with $M$ that will be useful, namely $acc_M(x) \mapsto y$ iff $y$ is the ordered list of nondeterministic choices made on an accepting branch of $M$; note that $\dom(f_M) = \dom(acc_M) = L(M)$, the language accepted by $M$.

Abusing ordinary function notation, for two partial, multi-valued functions $f,g$, we may write, e.g., $f(x,g(x))$ to indicate the partial, multivalued function $c(x)$ with $c(x) \mapsto z$ iff there is an output $g(x) \mapsto y$ such that $f(x,y) \mapsto z$. (This also corresponds to composition of relations.)

We refer to a nondeterministic, polynomial-time machine $M$ as ``an NP machine''. 

$\cc{NPMV}$ is the class of (possibly) partial, multi-valued functions computed by NP machines. $\cc{NPSV}$ is the subclass of $\cc{NPMV}$ consisting of those functions that are in fact single-valued, though they may still be partial and may have multiple accepting branches (as long as they all make the same output). $\cc{PF}$ is the class of functions computed by NP machines that use no nondeterminism, but may still be partial (viz., if they reject some input). $\cc{FP}$ is the class of total functions computed by deterministic polynomial-time machines.

A multivalued function $f$ is a \emph{refinement} of a multivalued function $g$ if $\dom(f) = \dom(g)$ and $set\dash f(x) \subseteq set\dash g(x)$ for all $x$ (this second condition is equivalent to $f(x) \mapsto y$ implying $g(x) \mapsto y$). If $\mathcal{F}_1, \mathcal{F}_2$ are two classes of functions, we write $\mathcal{F}_1 \subseteq_c \mathcal{F}_2$ if every $f \in \mathcal{F}_1$ has a refinement in $\mathcal{F}_2$.

For any function $f$---whether it be injective or not, single-valued or not, total or not, surjective or not---we define $f^{-1}$ to be the partial multivalued function defined by $f^{-1}(y) \mapsto x$ iff $f(x) \mapsto y$. Given a class $\mathcal{F}$ of functions, we say $f$ is invertible in $\mathcal{F}$ if $f^{-1}$ has a refinement in $\mathcal{F}$. By our symbolic conventions, note that even though $f^{-1}$ may be partial and multivalued, we always have $f^{-1}(f(x)) \mapsto x$. We also have $f(f^{-1}(x)) \mapsto x$ iff $x$ is in the image of $f$.

A function $f$ is (polynomially) \emph{honest} if there is a polynomial $p$ such that for every input $x$, there is some output $y$ such that $|x| \leq p(|y|)$. Note that honesty is a necessary condition to have some refinement of $f^{-1}$ be computable in polynomial space, let alone polynomial time.

For these function classes, there are a number of subscripts to use that denote subclasses:
\begin{itemize}
\item The subscript $t$ denotes the subclass of total functions, that is, whose domain is all of $\Sigma^*$

\item The subscript $h$ denotes the subclass of honest functions

\item The subscript $g$ denotes the subclass of functions whose \emph{graph}, $\set{ (x,y) : f(x) \mapsto y}$, is decidable in polynomial time.
\end{itemize}
These subscripts may be used in any combination, e.\,g. $\cc{NPMV}_{ght}$. 

Selman \cite{SelmanFunctions} and Hemaspaandra, Naik, Ogihara, and Selman \cite{HNOS96} showed that 
\[
\cc{NPSV}_t = \cc{FP}^{\cc{NP} \cap \cc{coNP}}. 
\]

\noindent \textbf{Theorem/Definition Q.} (\cite[Thm.~2]{FFNRPropQ}) Hypothesis Q is any of the following equivalent statements
\begin{enumerate}[label=Q\arabic*]
\item For all NP machines $M$ that accept $\Sigma^*$, there exists a polynomial-time computable function $g_M$ such that for all $x$, $g_M(x)$ outputs an accepting computation of $M$ on $x$.

\item All polynomial-time computable onto honest functions are invertible in $\cc{PF}$.

\item \label{Q:NPMV} $\cc{NPMV}_{t} \subseteq_c \cc{FP}$.

\item For all $S \in \cc{P}$ such that $S \subseteq \text{SAT}$, there exists a polynomial-time computable $g$ such that for all $x \in S$, $g(x)$ outputs a satisfying assignment of $x$.

\item $\cc{P} = \cc{NP} \cap \cc{coNP}$ and $\cc{NPMV}_t \subseteq_c \cc{NPSV}_t$.

\item For all NP machines $M$ such that $L(M) = \text{SAT}$, there is $f_M \in \cc{PF}$ such that for all $\varphi \in \text{SAT}$, $\varphi(f_M(\varphi, acc_M(\varphi)))=1$, that is, $f_M$ takes in $\varphi$ and any accepting computation of $M$ on $\varphi$, and outputs at least one satisfying assignment to $\varphi$, and all such outputs satisfy $\varphi$.

\item For all NP machines $M,N$ such that $L(M) \subseteq L(N)$, there is $f_M \in \cc{PF}$ such that for all $x \in L(M)$, $f_M(x, acc_M(x))$ makes some output, and all its outputs are accepting computations of $N(x)$.

\item For all $L \in \cc{P}$ and all NP machines $M$ accepting $L$, there is $f_M \in \cc{PF}$ such that $f_M(x)$ is an accepting computation of $M(x)$.
\end{enumerate}

\section{Polynomial-time Axioms of Choice} \label{sec:AC}
The formulations in this section are mostly based on choice functions for collections of sets.  If $S = \set{S_{i} : i \in I}$ is a collection of sets (where $I$ is an arbitrary index set), then a choice function for $S$ is a function $f$ such that $f(i) \in S_{i}$ for all $i \in I$.

\begin{definition}[Collection of languages, honestly non-empty, choice function]
Given a complexity class $\mathcal{C}$, a \emph{$\mathcal{C}$ collection of languages} is a single language $L \in \mathcal{C}$ such that each $L_x := \set{y : (x,y) \in L}$ is also in $\mathcal{C}$. 

A collection $L$ of languages is \emph{honestly non-empty} if there are polynomials $p,q$ of positive degree such that, for every $x$, $L_x$ contains at least one string $y$ of length $p(|x|) \leq |y| \leq q(|x|)$. We call $p,q$ ``honesty (lower, resp. upper) bounds'' for $L$.

A partial, multi-valued function $f$ is a \emph{choice function} for a collection $L$ of languages if: (1) for all $x$, $set\dash f(x) \subseteq L_x$, and (2) for every nonempty $L_x$, $set\dash f(x)$ is nonempty.
\end{definition}

\begin{remark}
The polynomial upper bound in the definition of honestly non-empty is easy enough to justify at this point: there should at least exist a $y$ in $L_x$ that did not require super-polynomial-space to write down (so, e.g., if $L \in \cc{PSPACE}$, then such a $y$ could be found by a polynomial-space machine). But the polynomial lower bound may seem less intuitive. It turned out to be a very natural and useful condition at several points, e.g. Observation~\ref{obs:DP} and Proposition~\ref{prop:npac2}(1).
\end{remark}

\begin{cform}[{See \cite[AC1, p.~5]{RubinRubin}}] Every collection of non-empty sets has a choice function. \end{cform}

\begin{pform} \label{PAC:choice} Every $\cc{P}$ collection of honestly non-empty languages has a choice function in $\cc{FP}$. \end{pform}

\begin{proposition} 
PAC\ref{PAC:choice} holds iff $\cc{NPMV}_{gt} \subseteq_c \cc{FP}$.  
\end{proposition}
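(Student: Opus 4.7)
The equivalence is essentially a dictionary between (i) a $\cc{P}$ language $L$ viewed through its slices $L_x$ and (ii) a total multivalued function $f$ whose graph is in $\cc{P}$: a slice $L_x$ corresponds to $set\dash f(x)$, a choice function corresponds to a refinement, and honest non-emptiness corresponds to totality together with a polynomial bound on $|y|$. Both directions will follow from this dictionary; the only technical wrinkle is that PAC\ref{PAC:choice} insists on a two-sided length window $p(|x|)\le|y|\le q(|x|)$ with $p$ of positive degree, whereas $\cc{NPMV}_{gt}$ only imposes a polynomial \emph{upper} bound on $|y|$, so the forward implication will require a short padding argument.

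For the direction $\cc{NPMV}_{gt} \subseteq_c \cc{FP} \Rightarrow$ PAC\ref{PAC:choice}, the plan is: given a $\cc{P}$ collection $L$ with honesty bounds $p,q$, define the multivalued function $f$ by $f(x) \mapsto y$ iff $(x,y)\in L$ and $p(|x|)\le|y|\le q(|x|)$. The graph of $f$ is $L$ intersected with a polytime-checkable length predicate, hence in $\cc{P}$; $f$ is in $\cc{NPMV}$ via the NP machine that guesses $y$ of length at most $q(|x|)$ and checks the two membership conditions; and $f$ is total because $L$ is honestly non-empty. Applying the hypothesis gives an $\cc{FP}$ refinement $g$ of $f$, and then $g(x)\in set\dash f(x)\subseteq L_x$, so $g$ is an $\cc{FP}$ choice function for $L$.

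For the converse, assume PAC\ref{PAC:choice} and let $f\in\cc{NPMV}_{gt}$ have graph $R\in\cc{P}$ and polynomial output-length bound $q$. The naive choice $L := R$ fails, because $R_x$ can contain very short witnesses (e.g.\ $\epsilon$), violating the positive-degree lower bound built into honest non-emptiness. The fix is the following padding: choose a polynomial $N$ of positive degree with $N(n)\ge q(n)+1$ and set
\[
L := \{(x,\ y\cdot 1\cdot 0^{N(|x|)-|y|-1}) : (x,y)\in R\}.
\]
Then $L\in\cc{P}$ (recover $y$ from a padded string by stripping trailing $0$'s and one $1$, then check $(x,y)\in R$); every slice $L_x$ is nonempty because $f$ is total; and every witness in $L_x$ has length exactly $N(|x|)$, so $L$ is honestly non-empty with both lower and upper bound $N$. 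Invoking PAC\ref{PAC:choice} produces an $\cc{FP}$ choice function for $L$, and composing with the polytime depadding map yields the desired $\cc{FP}$ refinement of $f$. The only step that requires any thought is this padding construction; the remaining verifications are routine.
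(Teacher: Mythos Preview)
Your proof is correct and follows essentially the same route as the paper. Both directions use the same dictionary; the only difference is in the padding step for the implication PAC\ref{PAC:choice} $\Rightarrow$ $\cc{NPMV}_{gt}\subseteq_c\cc{FP}$: you pad each output $y$ to a fixed length $N(|x|)$ via $y\mapsto y\cdot 1\cdot 0^{N(|x|)-|y|-1}$, whereas the paper simply replaces $y$ by the pair $(x,y)$ (so that every output has length at least $|x|$). Both tricks achieve the required positive-degree lower bound, and the remaining verifications are identical.
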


\begin{proof} 
($\Rightarrow$) Suppose PAC\ref{PAC:choice} holds. Let $f$ be an $\cc{NPMV}_{gt}$ function and let $L$ be the graph of $f$.  By definition of $\cc{NPMV}_{g}$, the graph of $f$ is in $\cc{P}$, so $L \in \cc{P}$.  Since $f$ is total, every $L_x$ is non-empty.  Since each branch of $f$ is polynomial-time, every output $y$ of $f(x)$ has $|y| \leq \poly(|x|)$. To get a polynomial lower bound as well, we modify $f$ to $\hat{f}$, defined by $\hat{f}(x) \mapsto (x,y) \Leftrightarrow f(x) \mapsto y$, so that every output now has $|x| \leq |\hat{f}(x)|$. Let $\hat{L}$ be the graph of $\hat{f}$; $\hat{L}$ is now a $\cc{P}$ collection of honestly non-empty languages. Thus, by assumption, there is a polynomial-time choice function $\hat{g}$ for $\hat{L}$. By construction of $\hat{L}$, $\hat{g}(x)$ always has the form $(x,y)$. Finally, define $g(x) = y \Leftrightarrow \hat{g}(x) = (x,y)$. It is clear that $g$ is also in $\cc{FP}$ and refines $f$.

($\Leftarrow$) Conversely, suppose $\cc{NPMV}_{gt} \subseteq_{c} \cc{FP}$, and let $L$ be a $\cc{P}$ collection of honestly non-empty languages, with honesty bounds $p,q$.  Define $f$ by
\[
set\dash f(x) = \set{y : p(|x|) \leq |y| \leq q(|x|) \text{ and } (x,y) \in L}
\]
Since $L \in \cc{P}$, the graph of $f$ is in $\cc{P}$: use the polynomial-time decider for $L$, together with verifying the length relationship between $x,y$.  Moreover, since $L_{x}$ is honestly non-empty for every $x$, $f$ is total, so $f$ is in $\cc{NPMV}_{gt}$.  By assumption $f$ has an $\cc{FP}$ refinement, which is clearly a polynomial-time choice function for $L$.
\end{proof}

The same argument can be trivially modified to several other standard complexity hypotheses, by varying the constraints in PAC\ref{PAC:choice}. We give two such here:

\begin{proposition} \label{prop:npac1}
Every $\cc{NP}$ collection of honestly non-empty languages has...
\begin{enumerate}
\item ...an $\cc{NPSV}_t$ choice function iff $\cc{NPMV}_t \subseteq_c \cc{NPSV}_t$.
\item ...an $\cc{FP}$ choice function iff $\cc{NPMV}_t \subseteq_c \cc{FP}$; note the latter is Hypothesis~\ref{Q:NPMV}.
\end{enumerate}
\end{proposition}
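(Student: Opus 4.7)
The plan is to adapt the proof of the preceding proposition essentially \emph{mutatis mutandis}, replacing ``$\cc{P}$ collection'' with ``$\cc{NP}$ collection'' throughout. The key observation is that an $\cc{NP}$ collection allows the graph $L$ to be in $\cc{NP}$ rather than $\cc{P}$, which matches exactly the graph condition of a general $\cc{NPMV}$ function (as opposed to $\cc{NPMV}_g$ in the previous proposition). Parts (1) and (2) can be handled in parallel since the two statements differ only in the class of the target refinement.

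For ($\Leftarrow$), given an $\cc{NP}$ collection $L$ with honesty bounds $p,q$, I would define
\[
set\dash f(x) = \{y : p(|x|) \leq |y| \leq q(|x|) \text{ and } (x,y) \in L\}.
\]
An NP machine can witness $f(x) \mapsto y$ by guessing $y$ of the appropriate length and verifying $(x,y) \in L$, so $f \in \cc{NPMV}$; honest non-emptiness ensures $f$ is total, so $f \in \cc{NPMV}_t$. The hypothesis ($\cc{NPMV}_t \subseteq_c \cc{NPSV}_t$ in case (1); $\cc{NPMV}_t \subseteq_c \cc{FP}$ in case (2)) then yields a refinement in the desired class, which by construction is a choice function for $L$.

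For ($\Rightarrow$), given $f \in \cc{NPMV}_t$ I would form $\hat f$ with $\hat f(x) \mapsto (x,y) \Leftrightarrow f(x) \mapsto y$, and let $\hat L$ be its graph. Then $\hat L \in \cc{NP}$ (certify membership by guessing an accepting computation of the NP machine defining $f$ and checking its output), and the pairing encoding yields $|x| \leq |(x,y)| \leq \poly(|x|)$, so $\hat L$ is an honestly non-empty $\cc{NP}$ collection with honesty bounds of the required form. By assumption, $\hat L$ has a choice function $\hat g$ in the target class, and since every element of $\hat L_x$ has the form $(x,y)$, defining $g(x)$ by extracting the second coordinate of $\hat g(x)$ gives a refinement of $f$ in the same class---using that both $\cc{NPSV}_t$ and $\cc{FP}$ are closed under deterministic polynomial-time postprocessing of their outputs.

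I do not anticipate any real obstacle here: the honesty-encoding trick of tagging each output with its input was already set up in the preceding proof, and is precisely what furnishes the polynomial lower honesty bound. The only essential change is bookkeeping of which class the graph lives in ($\cc{P}$ versus $\cc{NP}$), which cleanly distinguishes $\cc{NPMV}_{gt}$ from $\cc{NPMV}_t$. As the statement points out, part (2) is simply Hypothesis~\ref{Q:NPMV}, so the content of this proposition is really the reformulation of these classical hypotheses as polynomial-time axioms of choice.
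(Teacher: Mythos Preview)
Your proposal is correct and follows essentially the same argument as the paper's own proof: both directions are handled exactly as in the preceding proposition, with the only change being that the graph lands in $\cc{NP}$ rather than $\cc{P}$, yielding $\cc{NPMV}_t$ in place of $\cc{NPMV}_{gt}$. Your explicit remark that $\cc{NPSV}_t$ and $\cc{FP}$ are closed under deterministic polynomial-time postprocessing is a nice touch that the paper leaves implicit.
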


Thus we see that Hypothesis Q is equivalent to a mixed nondeterministic/deterministic polynomial-time version of the Axiom of Choice. We will see below that Polynomial-Time AC \ref{pac:Q} is also equivalent to Q.

\begin{proof}
Let $\mathcal{F} \in \set{\cc{NPSV}_t, \cc{FP}}$. 

($\Rightarrow$) Suppose every $\cc{NP}$ collection of honestly non-empty languages has an $\mathcal{F}$ choice function. Let $f$ be in $\cc{NPMV}_{t}$. As before, let $\hat{f}(x) \mapsto (x,y) \Leftrightarrow f(x) \mapsto y$, and let $\hat{L}$ be the graph of $\hat{f}$. Since every $\cc{NPMV}$ function has its graph in $\cc{NP}$, $\hat{L}$ is in $\cc{NP}$. As above, $\hat{L}$ is an honestly non-empty collection of languages. By assumption, there is a choice function $\hat{g} \in \mathcal{F}$ for $\hat{L}$. Define $g(x)$ by $g(x) = y \Leftrightarrow \hat{g}(x) = (x,y)$; for either choice of $\mathcal{F}$ it is clear that $\hat{g} \in \mathcal{F} \Leftrightarrow g \in \mathcal{F}$. As above, $g$ is clearly a refinement of $f$, and thus $\cc{NPMV}_t \subseteq_c \mathcal{F}$.

($\Leftarrow$) Conversely, suppose $\cc{NPMV_{t}} \subseteq_{c} \mathcal{F}$, and let $L$ be an $\cc{NP}$ collection of honestly non-empty languages, with honesty bounds $p,q$.  Define $f$ by
\[
set\dash f(x) = \set{y : p(|x|) \leq |y| \leq q(|x|) \text{ and } (x,y) \in L}.
\]
Since $L \in \cc{NP}$, the graph of $f$ is in $\cc{NP}$: use the polynomial-time verifier for $L$, together with verifying the length relationship between $x,y$.  Moreover, since $L_{x}$ is honestly non-empty for every $x$, $f$ is total, so $f$ is in $\cc{NPMV}_{t}$.  By assumption $f$ has an $\mathcal{F}$ refinement, which is clearly an $\mathcal{F}$ choice function for $L$.
\end{proof}

\begin{cform}[{See \cite[AC2, p.~5]{RubinRubin}, \cite[Prop.~2.1]{Herrlich}}] Given any set $X$ of \emph{pairwise disjoint} nonempty sets, there is a set $C$ that contains exactly one element from each set in $X$. \end{cform}

\begin{definition}[Transversal]
Given a collection $L$ of languages, a \emph{transversal} for $L$ is a language $C$ such that $C \cap L_x$ has exactly one element, for each $x$ such that $L_x$ is nonempty.
\end{definition}

\begin{pform} \label{pac:disjoint} Every polynomial-time collection of honestly nonempty languages that are pairwise disjoint, has a transversal in $\cc{P}$. \end{pform}

While we have not been able to get a polynomial-time upper bound on transversals of such collections, we can give an upper bound within the second level of $\cc{PH}$. More specifically, recall that $\cc{DP} = \{ L \subseteq \Sigma^* : L = L_1 \backslash L_2 \text{ for some } L_1, L_2 \in \cc{NP}\}$. $\cc{DP}$ is contained in $\cc{P}^\cc{NP}$, and even in $\cc{P}^{\cc{NP}[2]}$ (the machine only makes $2$ queries on inputs of length $n$). For any language $L$, we relativize $\cc{DP}$ to $\cc{DP}^L$, by taking the preceding definition and replacing $\cc{NP}$ with $\cc{NP}^L$.

\begin{observation} \label{obs:DP}
Every collection $L$ of pairwise disjoint, honestly non-empty languages has a transversal in $\cc{DP}^{L}$.
\end{observation}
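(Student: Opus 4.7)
My plan is to take $C$ to consist, for each $x$ with $L_x$ non-empty, of the length-lex smallest element of $L_x$ lying within the honesty window $[p(|x|),q(|x|)]$, and then show this $C$ has the form $C_1 \setminus C_2$ with $C_1,C_2 \in \cc{NP}^L$, placing it in $\cc{DP}^L$.

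First I would define, for each $x$ with $L_x \neq \emptyset$, the element $e(x)$ as the $<_{lex}$-minimum of the finite non-empty set $\{y \in L_x : p(|x|) \leq |y| \leq q(|x|)\}$; this exists by the honesty hypothesis. Set $C := \{e(x) : L_x \neq \emptyset\}$. Pairwise disjointness of the $L_x$'s forces $e(x) \notin L_{x'}$ whenever $x \neq x'$, so $C \cap L_x = \{e(x)\}$ for every non-empty $L_x$, and hence $C$ is a transversal.

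Next I would exhibit the claimed $\cc{DP}^L$ form by taking
\begin{align*}
C_1 &= \{y : \exists x\ (p(|x|) \leq |y| \leq q(|x|)\ \wedge\ (x,y) \in L)\}, \\
C_2 &= \{y : \exists x\,\exists y'\ (y' <_{lex} y\ \wedge\ p(|x|) \leq |y'|\ \wedge\ |y| \leq q(|x|)\ \wedge\ (x,y),(x,y') \in L)\}.
\end{align*}
Since $p$ has positive degree, the constraint $p(|x|) \leq |y|$ bounds $|x|$ by $\poly(|y|)$, and $y' <_{lex} y$ forces $|y'| \leq |y|$, so the existential witnesses are polynomial in $|y|$ and each membership condition in $L$ is a single oracle query; hence $C_1,C_2 \in \cc{NP}^L$.

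The final step is to verify $C = C_1 \setminus C_2$. If $y = e(x) \in C$, then $x$ witnesses $y \in C_1$; and if $y$ were in $C_2$ via some $x',y'$, disjointness would force $x' = x$, making $y'$ a smaller element of $L_x$ inside the honesty window and contradicting minimality of $e(x)$. Conversely, if $y \in C_1 \setminus C_2$, let $x$ be the (by disjointness unique) witness for $y \in C_1$; any $y' \in L_x$ with $y' <_{lex} y$ and $|y'| \geq p(|x|)$ would place $y$ in $C_2$ via the same $x$, so no such $y'$ exists and $y = e(x) \in C$. The only non-routine point---and the essential use of the pairwise-disjointness hypothesis---is that disjointness collapses what would otherwise be a universal quantifier over $x$ (``for \emph{every} $x$ with $y \in L_x$, $y$ is minimal in $L_x$'') into an existential one, keeping the overall form inside $\cc{DP}^L$ rather than pushing it higher in the polynomial hierarchy.
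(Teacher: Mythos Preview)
Your proposal is correct and follows essentially the same approach as the paper: define the transversal as the lex-least element of each $L_x$ within the honesty window, then express it as the difference of two $\cc{NP}^L$ sets capturing ``lies in some $L_x$ in the window'' minus ``is beaten by a smaller element of the same $L_x$.'' Your verification of $C = C_1 \setminus C_2$ is in fact more careful than the paper's (which simply asserts the equality), and your closing remark about disjointness collapsing the implicit universal quantifier over $x$ to an existential one is a nice articulation of why the construction stays in $\cc{DP}^L$.
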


\begin{proof}
Let $p,q$ be honesty bounds for $L$, so that each $L_x$ contains some $y$ with $p(|x|) \leq |y| \leq q(|x|)$. Let $C$ consist of the lexicographically first element of $L_x$ with length at least $p(|x|)$, for each $x \in \Sigma^*$. Since the $L_x$ are pairwise disjoint, $C$ does not contain more than one element of each $L_x$, and thus it contains exactly one and is a transversal. 

We now show that membership in $C$ can be decided in $\cc{DP}^{L}$. Let $L_1 = \{ y : (\exists x)[q^{-1}(|y|) \leq |x| \leq p^{-1}(|y|) \text{ and } (x,y) \in L\}$ and let $L_2 = \{ y : (\exists y' <_{lex} y)(\exists x)[ \max\{q^{-1}(|y'|), q^{-1}(|y|)\} \leq |x| \leq \min\{p^{-1}(|y|), p^{-1}(|y'|)\} \text{ and } (x,y) \in L \text{ and } (x,y') \in L\}$.
Note that $L_1, L_2$ are both in $\cc{NP}^{L}$, and we have $C = L_1 \backslash L_2$.
\end{proof}

Thus for collections in $\cc{P}$, there always exists an $\cc{DP}$ transversal.

\begin{question} \label{q:pairwise_disjoint} 
Is there a $\cc{P}$ collection of pairwise disjoint, honestly non-empty polynomial-time languages such that every transversal for $L$ is $\cc{NP}$-hard? $\cc{NP}$-complete? $\cc{DP}$-complete? \end{question}

While we do not have a clean equivalence between PAC\ref{pac:disjoint} and a more standard complexity hypothesis, we do have such a clean equivalence between more standard hypotheses and some $\cc{NP}$ versions of PAC\ref{pac:disjoint}.

\begin{proposition} \label{prop:npac2}
\begin{enumerate}
\item $\cc{NPMV}_{gt} \subseteq_c \cc{NPSV}_{t}$ 
iff 
\begin{quotation}
\noindent (*) every $\cc{P}$ collection of honestly nonempty, pairwise disjoint languages has a transversal in $\cc{NP}$.
\end{quotation}

\item $\cc{NPMV}_t \subseteq_c \cc{NPSV}_t$ iff 
\begin{quotation}
\noindent (*$^\prime$) every $\cc{NP}$ collection of honestly nonempty, pairwise disjoint languages has a transversal in $\cc{NP}$.
\end{quotation}
\end{enumerate}
\end{proposition}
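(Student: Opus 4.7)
The plan is to follow the same two-step template used in the proof of Proposition~\ref{prop:npac1}, with one additional ``tagging'' trick to ensure the auxiliary collection is pairwise disjoint. Intuitively, the role played there by ``$g$ is single-valued'' is now played by ``$C$ meets each fiber in exactly one point,'' and pairwise disjointness is what lets a single-valued $g$ be reconstructed from a transversal.

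For the ($\Rightarrow$) direction of both parts, given $f \in \cc{NPMV}_{gt}$ (resp.\ $\cc{NPMV}_t$), I would set $\hat{L} := \set{(x,(x',y)) : x=x' \text{ and } f(x) \mapsto y}$. The fibers $\hat{L}_x$ are automatically pairwise disjoint across distinct $x$'s since every element is prefixed by $x$. Membership in $\hat{L}$ is testable in $\cc{P}$ (resp.\ $\cc{NP}$) using the graph (resp.\ verifier) for $f$, and honesty bounds follow from $f$'s polynomial output-length, together with the lower bound $|x|$ coming from the tag. By (*) (resp.\ (*$^\prime$)) there is a transversal $C \in \cc{NP}$. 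Define $g(x)$: guess a string $w$ of length $\leq q(|x|)$ and accept iff $w \in C \cap \hat{L}_x$; then project off the tag. Uniqueness of the $w$ in $C \cap \hat{L}_x$ makes $g$ single-valued, honest nonemptiness of $\hat{L}_x$ makes it total, and the refinement property $g(x) \mapsto y \Rightarrow f(x) \mapsto y$ is built into $\hat{L}$.

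For the ($\Leftarrow$) direction, given $L$ a $\cc{P}$ (resp.\ $\cc{NP}$) collection of honestly nonempty, pairwise disjoint languages with honesty bounds $p,q$, I would define $f$ exactly as in Proposition~\ref{prop:npac1}, namely $set\dash f(x) = \set{y : p(|x|) \leq |y| \leq q(|x|) \text{ and } (x,y) \in L}$, placing $f$ in $\cc{NPMV}_{gt}$ (resp.\ $\cc{NPMV}_t$). The assumed refinement gives $g \in \cc{NPSV}_t$ with $g(x) \in L_x$, and I take $C := \img(g)$. Pairwise disjointness of the $L_x$'s is exactly what forces $C \cap L_x = \set{g(x)}$ to be a singleton, so $C$ is a transversal. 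To place $C$ in $\cc{NP}$, on input $y$ one guesses $x$ with $q^{-1}(|y|) \leq |x| \leq p^{-1}(|y|)$ and then an accepting computation of the $\cc{NPSV}_t$ machine for $g$ on $x$, accepting iff the output equals $y$; honesty bounds keep this guess polynomially short.

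The step I expect to be most finicky, rather than truly hard, is verifying that the transversal membership test really stays in $\cc{NP}$ when $L$ itself is only in $\cc{NP}$ in part (2): one must be careful that ``$g(x) = y$'' is itself an $\cc{NP}$ predicate (which it is, because $g \in \cc{NPSV}_t$ has its graph in $\cc{NP}$), and that the quantification over $x$ does not inadvertently leak out of $\cc{NP}$, which the polynomial honesty bounds on $L$ prevent. Everything else is bookkeeping parallel to the proof of Proposition~\ref{prop:npac1}.
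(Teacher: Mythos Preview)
Your proposal is correct and follows essentially the same approach as the paper's proof: the tagging trick $\hat{L}_x = \{(x,y) : f(x)\mapsto y\}$ to force disjointness, and $C := \img(g)$ with the honesty-bounded guess of $x$ to place the transversal in $\cc{NP}$, are exactly what the paper does. One cosmetic point: your labels $(\Rightarrow)$ and $(\Leftarrow)$ are swapped relative to the order in which the iff is stated (you prove $(*) \Rightarrow \cc{NPMV}_{gt} \subseteq_c \cc{NPSV}_t$ under the heading ``$(\Rightarrow)$''), but the mathematics is unaffected.
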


\begin{proof}
(1) Suppose (*) holds, and let $f \in \cc{NPMV}_{gt}$. Let $L$ be $\{(x, (x,y)) : x \in \Sigma^*, f(x) \mapsto y\}$ be a modified version of the graph of $f$. By definition the graph of $f$ is in $\cc{P}$, so $L$ is also in $\cc{P}$. Since $f$ is total and each nondeterministic branch uses at most polynomial time, every $L_x$ is honestly nonempty.  Since every element of $L_x$ is of the form $(x,*)$, $L_x$ and $L_y$ are disjoint when $x \neq y$. Hence there is a set $C \in \cc{NP}$ that contains exactly one element from each $L_x$. We define an $\cc{NPSV}_{gt}$ function $g$ that, on input $x$, nondeterministically guesses $y$ and $w$, uses $w$ as a witness to verify that $(x,y) \in C$, and outputs $y$. This $g$ is an $\cc{NPSV}_t$ refinement of $f$. 

Conversely, suppose $\cc{NPMV}_{gt} \subseteq_c \cc{NPSV}_{t}$, and let $L$ be a $\cc{P}$ collection of honestly nonempty, pairwise disjoint languages with honesty bounds $p,q$.  Define $f$ by
\[
set\dash f(x) = \set{y : p(|x|) \leq |y| \leq q(|x|) \text{ and } (x,y) \in L}.
\]
Since $L \in \cc{P}$, the graph of $f$ is in $\cc{P}$.  Moreover, since $L_{x}$ is honestly non-empty for every $x$, $f$ is total.  Hence $f$ has an $\cc{NPSV}_{t}$ refinement $g$. Let $C$ be the image of $g$. It is clear that $C$ intersects each $L_x$ exactly once. The following nondeterministic machine shows that $C$ is in $\cc{NP}$: on input $y$, it guesses $x$ such that $p(|x|) \leq |y|$ (so in particular $|x| \leq \poly(|y|)$), guesses an accepting path of $g$, simulates the execution of $g(x)$ on that accepting path, and accepts if the output on that path is equal to $y$.

(2) We observe here how the preceding parts of the proof change for the second part. For $f \in \cc{NPMV}_t$, its graph (and modified graph as above) will be in $\cc{NP}$. (*$^\prime$) then allows us to conclude existence of a $C \in \cc{NP}$ as before, and the rest of the argument goes through. For the other direction of the argument, if $L$ is only in $\cc{NP}$, then $f$ will still be in $\cc{NPMV}_t$, so the assumption $\cc{NPMV}_t \subseteq_c \cc{NPSV}_t$ is still enough to get the $\cc{NPSV}_t$ refinement, and again the rest of the proof goes through.
\end{proof}

We note that it is only in the final step of the proof---in which a nondeterministic machine must guess a preimage $x \in g^{-1}(y)$---that nondeterminism seems to be needed, which is the only obstacle standing in the way of an equivalence with PAC\ref{pac:disjoint}. 

The preceding results let us show that two $\cc{NP}$ versions of AC are equivalent:

\begin{corollary}[``$\cc{NP}$ Axiom of Choice 1=$\cc{NP}$ Axiom of Choice 2'']
Every $\cc{NP}$ collection of honestly non-empty languages has an $\cc{NPSV}$ choice function iff every $\cc{NP}$ collection of pairwise disjoint honestly non-empty languages has an $\cc{NP}$ transversal. 
\end{corollary}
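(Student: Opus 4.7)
The plan is to recognize this corollary as a direct transitivity statement: both sides of the biconditional are equivalent to the same complexity-theoretic hypothesis, namely $\cc{NPMV}_t \subseteq_c \cc{NPSV}_t$, so they are equivalent to each other.

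First, I would reconcile the minor notational discrepancy between ``$\cc{NPSV}$ choice function'' in the corollary statement and ``$\cc{NPSV}_t$ choice function'' in Proposition~\ref{prop:npac1}(1). By the definition of an honestly non-empty collection, every $L_x$ is non-empty, and by the definition of choice function, this forces $set\dash f(x)$ to be non-empty for every $x$. Hence any choice function for such a collection is automatically total, so ``$\cc{NPSV}$'' and ``$\cc{NPSV}_t$'' coincide in this setting.

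With that in hand, the left-hand side of the biconditional is precisely the hypothesis of Proposition~\ref{prop:npac1}(1), which is equivalent to $\cc{NPMV}_t \subseteq_c \cc{NPSV}_t$. The right-hand side is exactly statement (*$^\prime$) from Proposition~\ref{prop:npac2}(2), which is likewise equivalent to $\cc{NPMV}_t \subseteq_c \cc{NPSV}_t$. Chaining these two equivalences yields the desired biconditional.

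There is really no obstacle here; the bulk of the work has already been done in establishing Propositions~\ref{prop:npac1} and \ref{prop:npac2}. The main conceptual content of the corollary is simply to highlight that the single complexity hypothesis $\cc{NPMV}_t \subseteq_c \cc{NPSV}_t$ captures both of the classical formulations AC1 (choice functions) and AC2 (transversals for pairwise disjoint collections) in their $\cc{NP}$ incarnations, thereby giving a unified view of these two classically equivalent forms of AC at the $\cc{NP}$ level.
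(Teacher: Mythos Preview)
Your proposal is correct and follows essentially the same route as the paper's own proof: both simply chain Proposition~\ref{prop:npac1}(1) and Proposition~\ref{prop:npac2}(2) through the common hypothesis $\cc{NPMV}_t \subseteq_c \cc{NPSV}_t$. Your added remark reconciling $\cc{NPSV}$ with $\cc{NPSV}_t$ via totality of any choice function for an honestly non-empty collection is a helpful clarification that the paper leaves implicit.
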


\begin{proof}
Proposition~\ref{prop:npac1}(1) says that the first condition in the statement here is equivalent to $\cc{NPMV}_t \subseteq_c \cc{NPSV}_t$, and Proposition~\ref{prop:npac2}(2) shows the latter is equivalent to the second condition in the statement here.
\end{proof}

\begin{observation}
PAC\ref{pac:disjoint} implies condition (*) from Proposition~\ref{prop:npac2}. 
\end{observation}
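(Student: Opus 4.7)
The plan is immediate, since this implication is essentially a free consequence of the inclusion $\cc{P} \subseteq \cc{NP}$. The hypotheses of PAC\ref{pac:disjoint} and of condition (*) are syntactically identical---both quantify over $\cc{P}$ collections of honestly nonempty, pairwise disjoint languages---and the conclusion of PAC\ref{pac:disjoint} (a transversal in $\cc{P}$) is strictly stronger than the conclusion of (*) (a transversal in $\cc{NP}$). So the proof is a one-line unpacking.

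Explicitly, I would assume PAC\ref{pac:disjoint} and let $L$ be an arbitrary $\cc{P}$ collection of honestly nonempty, pairwise disjoint languages. Applying PAC\ref{pac:disjoint} to $L$ produces a transversal $C$ for $L$ with $C \in \cc{P}$. Since $\cc{P} \subseteq \cc{NP}$, the same $C$ serves as an $\cc{NP}$ transversal for $L$, which witnesses (*). There is no obstacle, no nontrivial ``hard part'' to plan around, and in particular no need to construct or modify the transversal supplied by PAC\ref{pac:disjoint}.

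What is worth emphasizing in the write-up, however, is the conceptual payoff rather than the argument itself: combining this trivial implication with Proposition~\ref{prop:npac2}(1) shows that PAC\ref{pac:disjoint} implies the standard complexity hypothesis $\cc{NPMV}_{gt} \subseteq_c \cc{NPSV}_t$, thereby locating PAC\ref{pac:disjoint} within the Selman-style function-class hierarchy. The genuinely interesting question---whether (*), or equivalently $\cc{NPMV}_{gt} \subseteq_c \cc{NPSV}_t$, conversely implies PAC\ref{pac:disjoint}, or whether one can pin PAC\ref{pac:disjoint} down by a standard complexity hypothesis at all---is left open and is closely tied to Question~\ref{q:pairwise_disjoint}.
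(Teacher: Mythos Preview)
Your proposal is correct and matches the paper's own proof essentially verbatim: both observe that the hypotheses of PAC\ref{pac:disjoint} and (*) coincide, while the conclusion of PAC\ref{pac:disjoint} (a $\cc{P}$ transversal) immediately yields the weaker conclusion of (*) (an $\cc{NP}$ transversal) via $\cc{P} \subseteq \cc{NP}$.
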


\begin{proof}
The conclusion of PAC\ref{pac:disjoint} implies existence of a $\cc{P}$ transversal, while the conclusion of (*) merely requires an $\cc{NP}$ transversal, but otherwise the two are the same.
\end{proof}

\begin{cform}[{See \cite[AC5, p.~5]{RubinRubin}}]
For every function $f$ there is a function $g$ such that $\dom(g)=\img(f)$ and for every $x \in \dom(g)$, $f(g(x))=x$.
\end{cform}

\begin{pform} \label{pac:owf}
For every partial polynomial-time honest function $f \in \cc{PF}$, there is a $g \in \cc{PF}$ with $\dom(g) = \img(f)$ such that for every $x \in \dom(g)$, $f(g(x)) = x$.
\end{pform}

Recall that a (worst-case) \emph{one-way function} is a one-one, honest function $f \in \cc{PF}$ such that $f^{-1} \notin \cc{PF}$, so PAC\ref{pac:owf} is simply the statement that one-way functions do not exist. This was one of the topics Selman worked on:

\begin{theorem}[{Grollman and Selman \cite{GS} and Ko \cite{Ko85}}]
PAC\ref{pac:owf} holds iff $\cc{P} = \cc{UP}$.
\end{theorem}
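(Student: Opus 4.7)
The plan is to prove both directions by passing between UP machines and invertible honest partial functions.

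For the forward direction (PAC\ref{pac:owf} implies $\cc{P} = \cc{UP}$), I take any $L \in \cc{UP}$ witnessed by a UP machine $M$ and construct the partial function $f(x,w) := x$, defined iff $w$ encodes an accepting computation of $M$ on $x$. Then $f \in \cc{PF}$; it is honest, since $|(x,w)| \leq \poly(|x|) = \poly(|f(x,w)|)$; and it is one-one by the UP uniqueness of $w$. Because $\img(f) = L$, PAC\ref{pac:owf} produces $g \in \cc{PF}$ with $\dom(g) = L$, whose underlying deterministic machine decides $L$ in polynomial time by whether it accepts or rejects.

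For the reverse direction ($\cc{P} = \cc{UP}$ implies PAC\ref{pac:owf}), consistent with the paper's identification of PAC\ref{pac:owf} with the nonexistence of one-one one-way functions, I use a prefix-search argument. Let $f \in \cc{PF}$ be one-one and honest with honesty polynomial $p$, and define
\[
Q := \{(y,z) : \exists s \text{ with } |zs| \leq p(|y|) \text{ and } f(zs) = y\}.
\]
Since $f$ is one-to-one, each $y$ has at most one preimage $x$, so for each prefix $z$ of $x$ the completion $s$ is unique, and for non-prefixes no $s$ works; hence $Q \in \cc{UP}$, and by hypothesis $Q \in \cc{P}$. I then define $g(y)$ by initializing $z_0 := \epsilon$, rejecting if $Q(y, z_0) = 0$, and otherwise iteratively extending $z_i$ to $z_{i+1} := z_i \sigma$ for the unique $\sigma \in \Sigma$ with $Q(y, z_i \sigma) = 1$, halting and outputting $z_i$ when $f(z_i) = y$. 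This terminates in polynomial time since $|x| \leq p(|y|)$, and yields $g \in \cc{PF}$ with $\dom(g) = \img(f)$ and $f(g(y)) = y$.

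The key conceptual step---and what explains why the hypothesis $\cc{P} = \cc{UP}$ suffices rather than the trivially stronger $\cc{P} = \cc{NP}$---is the observation that the one-to-one-ness of $f$ is exactly what converts the natural $\cc{NP}$ prefix-search language $Q$ into a $\cc{UP}$ language. The main subtlety I expect is handling $y \notin \img(f)$ cleanly, which the procedure dispatches at the initial check $Q(y, \epsilon)$; the remaining verifications (that $f$ above is honest and in $\cc{PF}$, and that $g$ is in $\cc{PF}$) are routine.
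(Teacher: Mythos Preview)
The paper does not give its own proof here; the theorem is stated with attribution to Grollman--Selman and Ko and left at that. Your argument is exactly the standard proof from that literature and is correct.

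One minor observation: your reverse direction takes $f$ to be one-one, which is not explicit in the formal statement of PAC\ref{pac:owf} but matches how the paper interprets it (identifying PAC\ref{pac:owf} with the nonexistence of one-one one-way functions). This reading is essential to your argument, since without injectivity the prefix language $Q$ would lie only in $\cc{NP}$, and $\cc{P}=\cc{UP}$ would not obviously suffice to invert arbitrary honest partial functions; so you are right to flag that injectivity is precisely what makes $\cc{UP}$ the relevant class rather than $\cc{NP}$.
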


\section{Polynomial-time cardinality} \label{sec:cardinality}
\subsection{Definition and basic properties}

Several formulations of the Axiom of Choice are based on cardinalities. Here we develop polynomial-time analogues of these. This development is also partially motivated by the quotation from Downey \cite{DowneyReview} in Section~\ref{sec:intro}.

Let $\Sigma,\Gamma$ be two finite alphabets. For $A \subseteq \Sigma^*$ and $B \subseteq \Gamma^*$, by a ``partial function $f\colon A \to B$'' we mean a partial function $f\colon \Sigma^{*} \to \Gamma^{*}$ such that $A \subseteq \dom(f)$ and $f(A) \subseteq B$.  For inputs $x \notin A$, we impose no restrictions on $f(x)$---it might be in $B$, outside of $B$, or undefined. When we speak of properties of such a partial function---e.\,g., being computable in polynomial time, being injective, surjective, bijective, etc.---we only refer to these properties as they apply to inputs in $A$ and outputs in $B$.  For example, a partial function $f\colon A \to B$ is injective iff for all distinct $x,y \in A$, $f(x)$ and $f(y)$ are distinct.  Even if $f$ is defined on inputs outside of $A$, it may be non-injective on those inputs, or even have $f(x)=f(a)$ for some $x \notin A, a \in A$, and still be considered an injective partial function $A \to B$. We say a partial function $f\colon A \to B$ is polynomial-time invertible, or p-invertible, if $f^{-1}\colon f(A) \to A$ is single-valued (equivalently, $f$ is injective) and polynomial-time computable on $f(A)$. A partial function $f$ is \emph{length-increasing} if $|f(x)| > |x|$ for all $x \in \dom(f)$.

\newcommand{\card}[1]{{||#1||_p}}
\begin{definition} Two sets $A \subseteq \Sigma^*, B \subseteq \Gamma^{*}$ have the same \emph{p-cardinality} or are \emph{p-equipollent} if there are partial polynomial-time computable functions $f\colon A \to B$ and $g\colon B \to A$ such that $f \circ g = id_{B}$ and $g \circ f = id_{A}$. In this case we call $f$ a \emph{p-equipollence} $A \to B$.

We denote the p-cardinality of $A$---or, the same, its p-equipollence class---by $\card{A}$, so $\card{A}=\card{B}$ means that $A$ and $B$ are p-equipollent.

We say that the p-cardinality of $A$ is at most that of $B$, denoted $\card{A} \preceq \card{B}$, if $\card{A} = \card{B'}$ for some subset $B' \subseteq B$. $\card{A} \prec \card{B}$ denotes $\card{A} \preceq \card{B}$ and $\card{A} \neq \card{B}$.
 \end{definition}
 
The relation $\preceq$ between p-cardinals is transitive and reflexive, but we will see in Proposition~\ref{prop:not_poset} that it is \emph{not} a partial order, that is, there exist $A,B$ for which $\card{A} \preceq \card{B}$ and $\card{B} \preceq \card{A}$ but $\card{A} \neq \card{B}$; however see Theorem~\ref{thm:BH} for a partial positive result. This is why we maintain the notation $\preceq$ (rather than, say, $\leq$, as is typically done for cardinals). In Proposition~\ref{prop:not_total} we will also see that it is not total (there are incomparable p-cardinals). 

As we do with many-one reductions, it would make sense to introduce an equivalence relation $\card{A} \equiv \card{B}$ defined by $\card{A} \preceq \card{B}$ and $\card{B} \preceq \card{A}$; roughly speaking, the relationship between equivalence and equality of p-cardinals is analogous to the relationship between Karp equivalence and p-isomorphism of languages. The quotient relation $\leq$ on equivalence classes of p-cardinals is a partial order by construction, but Proposition~\ref{prop:not_total} still shows it to be only partial, not total. We will not use $\equiv$ much, though there is surely interesting theory to be explored there.

\emph{Notational note.} We have tried to consistently use the superscript$^p$ to denote a polynomial-time version of something, but in the case of cardinality, we wanted to use $||\cdot||$ as it is a standard symbol for cardinality, but wanted to avoid $||\cdot||^p$ because of its possible confusion with raising to the $p$-th power.

\begin{proposition} \label{prop:dtt}
For languages $A \subseteq \Sigma^*, B \subseteq \Gamma^*$, if $\card{A} = \card{B}$, then $A \equiv_m^p B$ unless $A = \Sigma^*$ or $B = \Gamma^*$. In the latter case, both of $A$ and $B$ are in $\cc{P}$.
\end{proposition}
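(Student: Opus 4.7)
My plan is to construct an explicit polynomial-time many-one reduction $A \leq_m^p B$ directly from the p-equipollence pair $(f,g)$; the reduction $B \leq_m^p A$ will then follow by symmetry. The only wrinkle is that $f$ and $g$ are partial, and on inputs outside $A$ or $B$ they may produce unpredictable outputs; so to turn $f$ into a total many-one reduction I need a ``safe'' target $b_0 \in \Gamma^* \setminus B$ to which to send non-members of $A$. This is exactly where the hypothesis $B \neq \Gamma^*$ is used. When that hypothesis fails, the situation changes character entirely, and the plan shifts to showing $A \in \cc{P}$ directly (the case $A = \Sigma^*$ being symmetric).

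For the main case, I would fix any $b_0 \in \Gamma^* \setminus B$ and define a total function
\[
F(x) = \begin{cases} y & \text{if } f(x) \text{ produces some output } y \text{ and } g(y) \text{ produces } x, \\ b_0 & \text{otherwise.} \end{cases}
\]
Since $f, g \in \cc{PF}$, $F$ is in $\cc{FP}$. The forward direction of correctness is immediate from the definition of a p-equipollence: for $x \in A$, setting $y := f(x)$ gives $y \in B$ and $g(y) = x$, so $F(x) = y \in B$. The subtle step is the converse: for $x \notin A$, I must argue $F(x) \notin B$. If the double-check fails then $F(x) = b_0 \notin B$; but if it succeeds I must rule out that the produced $y$ itself lies in $B$. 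The key observation is that $y \in B$ would force $g(y) \in A$ (since $g\colon B \to A$), i.e., $x = g(y) \in A$, contradicting $x \notin A$. A symmetric construction using $a_0 \in \Sigma^* \setminus A$ then yields $B \leq_m^p A$, establishing $A \equiv_m^p B$.

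For the corner case $B = \Gamma^*$, the plan is to directly exhibit a polynomial-time decider for $A$. Now $g\colon \Gamma^* \to A$ is total, and the identity $f \circ g = id_{\Gamma^*}$ forces $g$ to be injective with left inverse $f$ on all of $\Gamma^*$. Combined with $g \circ f = id_A$, this gives $A = \img(g)$: every $a \in A$ equals $g(f(a))$, and conversely $\img(g) \subseteq A$ by definition. Hence $x \in A$ iff the only possible preimage of $x$ under $g$, namely $f(x)$, is both produced and satisfies $g(f(x)) = x$---a polynomial-time test, so $A \in \cc{P}$; and trivially $B = \Gamma^* \in \cc{P}$.

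The main obstacle, as flagged above, is verifying the converse direction of correctness for $F$: specifically, that the ``double-check'' $g(f(x)) \stackrel{?}{=} x$ cannot succeed for $x \notin A$ unless the produced intermediate $y$ lies outside $B$. Once this single observation is in place, everything else---polynomial-time computability of $F$, the symmetric direction, and the corner-case decider---is mechanical.
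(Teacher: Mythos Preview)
Your proof is correct and follows essentially the same approach as the paper's: build the reduction by composing one direction of the equipollence with a ``double-check'' using the other direction, sending failures to a fixed non-element; the paper just happens to run the argument in the direction $B \leq_m^p A$ (using $a_0 \in \overline{A}$) rather than $A \leq_m^p B$.

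One small point worth tightening: your assertion ``$f, g \in \cc{PF}$'' is not literally what the definition gives you. The paper's convention is that polynomial-time computability of $f\colon A \to B$ is only promised on inputs in $A$ (and similarly for $g$ on $B$); on other inputs the machine might not halt at all. The paper deals with this by fixing explicit machines $M,N$ and a polynomial time bound $p$, and then clocking the computations (``run $N(x)$ for $p(|x|)$ steps; if no output by then, \dots''). Your argument goes through verbatim once you insert the same clocking, or equivalently once you observe that WLOG $f,g$ can be replaced by clocked versions lying in $\cc{PF}$ without changing their behavior on $A,B$.
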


\begin{proof}
Suppose that $\card{A} = \card{B}$, and first also suppose that $A \neq \Sigma^*$ and $B \neq \Gamma^*$. We will show that $B \leq_{m}^p A$; the reverse follows by symmetry. Let $a_0 \in \overline{A}$. 

Let $f\colon A \to B$ be a p-equipollence. Let $p$ be a polynomial and $M,N$ Turing machine transducers such that $M$ computes $f$ on inputs in $A$ in polynomial time and $N$ computes $f^{-1}$ on inputs in $B$ in time $\leq p(n)$. The following function $r$ is a polynomial-time reduction from $B$ to $A$: on input $x$, run $N(x)$ for $p(|x|)$ steps. If $N$ rejects or makes no output by that time, then $r(x) := a_0$. For $B \subseteq \dom (f^{-1}) \subseteq \dom(N)$. If $N(x)=y$ by the time it has made $p(|x|)$ steps, check that $M(y)=x$. If not, then output $a_0$ (for if $x$ were in $B$ then we have $M(N(x)) = f(f^{-1}(x)) = x$). If so, then let $r(x)=y$, for at this point $x \in B$ iff $r(x) \in A$. (Note that it is possible that $\dom(N)$ is strictly larger than $B$ and that $\dom(M)$ is strictly larger than $A$, and that $M(N(x)) = x$ for some $x$ outside $B$, so the final check that $r(x) \in A$ is needed.)

If $A = \Sigma^*$, then there is no $a_0$ to use, but in those cases the above reduction can directly decide whether the input $x$ was in $B$, so we get $B \in \cc{P}$. 
\end{proof}

Note that in the above proof we really used the fact that we had access to (partial) polynomial-time computable functions that were inverses of one another. If one tried to extend Nerode \& Remmel's theory \cite{NerodeRemmelPET, NerodeRemmelIsol} from unary languages to larger alphabets using only polynomial-time, bijective, honest functions, but without the stipulation of a polynomial-time inverse (even if one required such functions in both directions, but without requiring them to be inverses of one another) such a result seems much less likely to hold.

We note that the preceding result does \emph{not} extend in the natural way to $\card{A} \preceq \card{B}$:

\begin{proposition} \label{prop:notT}
There exist infinite, co-infinite sets $A,B$ such that $\card{A} \preceq \card{B}$ but $A \not\leq_T^p B$.
\end{proposition}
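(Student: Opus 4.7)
The plan is to exploit the asymmetry in the definition of $\preceq$: the statement $\card{A} \preceq \card{B}$ only requires a p-equipollence between $A$ and some subset $B' \subseteq B$, and this equipollence need only be polynomial-time computable and invertible on $A$ and on $B'$, respectively. The set $B$ itself can therefore be much ``simpler'' than $A$ in any complexity-theoretic sense, as long as it contains a polynomial-time-addressable subset into which $A$ embeds via a p-time bijection. To make $A \not\leq_T^p B$ in the cleanest possible way, I would pick $B \in \cc{P}$ while choosing $A$ outside $\cc{P}$.

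Concretely: by the time hierarchy theorem, fix any infinite, co-infinite language $A \notin \cc{P}$; for instance, any language in $\cc{EXP} \setminus \cc{P}$ is automatically both infinite and co-infinite, since otherwise it or its complement would be finite and hence in $\cc{P}$. Let $B = 0\Sigma^*$, the set of strings beginning with $0$, which is infinite, co-infinite, and decidable in polynomial time. Define $f(x) = 0x$ and $g(0x) = x$ (with $g$ set to any default value on strings not of the form $0x$). Both $f$ and $g$ are in $\cc{FP}$; when restricted to inputs in $A$ and in $0A$ respectively, they are mutually inverse bijections $A \to 0A$ and $0A \to A$. Since $0A \subseteq B$, this witnesses $\card{A} \preceq \card{B}$.

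Finally, $A \not\leq_T^p B$ follows from $A \notin \cc{P}$: since $B \in \cc{P}$, any polynomial-time oracle Turing machine with oracle $B$ can be simulated by a deterministic polynomial-time machine (inlining $B$'s decider), so $\cc{P}^B = \cc{P}$; an assumed reduction $A \leq_T^p B$ would therefore put $A \in \cc{P}$, contradicting the choice of $A$. The only conceptual point---and really the only obstacle---is recognizing that the definition of $\preceq$ is weak enough to allow this padding construction: outside of $0A$, the set $B$ may contain essentially arbitrary extra material that neither helps nor hinders the embedding, so all of the complexity of $A$ is hidden inside the image.
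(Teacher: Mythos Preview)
Your proof is correct and follows essentially the same approach as the paper: both take $B = 0\Sigma^*$ and embed $A$ into $B$ via the padding map $x \mapsto 0x$, then use the Time Hierarchy Theorem to rule out $A \leq_T^p B$. The only difference is cosmetic---the paper picks $A$ to be $\cc{EXP}$-complete, whereas you allow any $A \in \cc{EXP} \setminus \cc{P}$---but the argument is the same.
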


\begin{proof}
Let $A$ be an $\cc{EXP}$-complete set and $B = 0\Sigma^*$ (the prefix $0$ is just to ensure that $B$ is co-infinite). Then $0A = \{0a : a \in A\}$ is $\cc{EXP}$-complete, and $0A \subseteq B$, so $\card{A} = \card{0A} \preceq \card{B}$. But by the Time Hierarchy Theorem, $A \not\leq_T^p B$, for otherwise we would have $A \in \cc{P}$.
\end{proof}

We can use Proposition~\ref{prop:dtt} to show that $\preceq$ is \emph{not} a partial order:

\begin{proposition}[$\preceq$ is not a partial order on p-cardinals] \label{prop:not_poset}
There exist languages $A,B \subseteq \Sigma^*$ such that $\card{A} \preceq \card{B}$ and $\card{B} \preceq \card{A}$ (we might denote these together as $\card{A} \equiv \card{B}$), but $\card{A} \neq \card{B}$.
\end{proposition}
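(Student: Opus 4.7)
\emph{Plan.} The natural lever is the ``exceptional'' clause of Proposition~\ref{prop:dtt}: whenever $A=\Sigma^{*}$, equality of p-cardinalities forces $B \in \cc{P}$, a much stronger constraint than mere many-one equivalence. So the plan is to take $A = \Sigma^{*}$ and engineer $B \subseteq \Sigma^{*}$ so that (i) $B$ literally contains a ``copy'' of $\Sigma^{*}$ via a p-invertible bijection, and (ii) $B \notin \cc{P}$, which will kill the possibility $\card{A}=\card{B}$.

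\emph{Construction.} Pick any $C \subseteq \Sigma^{*}$ with $C \notin \cc{P}$; the Time Hierarchy Theorem provides one (e.g.\ an $\cc{EXP}$-complete language). Set
\[
A := \Sigma^{*}, \qquad B := 0\Sigma^{*} \cup 1C.
\]
The inclusion $B \subseteq \Sigma^{*} = A$ gives $\card{B} \preceq \card{A}$ directly via the identity restricted to $B$. In the other direction, the map $x \mapsto 0x$ is a total, p-computable, p-invertible bijection from $\Sigma^{*}$ onto $0\Sigma^{*}$, and $0\Sigma^{*} \subseteq B$, so $\card{A} \preceq \card{B}$.

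\emph{Non-equality.} Suppose for contradiction that $\card{A} = \card{B}$. Since $A = \Sigma^{*}$, Proposition~\ref{prop:dtt} forces $B \in \cc{P}$. But the map $x \mapsto 1x$ is a p-computable reduction witnessing $C \leq_{m}^{p} B$, so $B \in \cc{P}$ implies $C \in \cc{P}$, contradicting the choice of $C$.

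\emph{Where the difficulty sits.} There is no real combinatorial obstacle here; the whole argument rides on Proposition~\ref{prop:dtt}. The only ``design choice'' to make is to place $A = \Sigma^{*}$, so that the second (stronger) clause of that proposition is triggered: without it, one would only get $A \equiv_{m}^{p} B$ from $\card{A} = \card{B}$, and one would then have to diagonalize against \emph{all} p-isomorphism-like bijections to separate the p-cardinals of two many-one equivalent sets — a strictly harder task that the $A = \Sigma^{*}$ shortcut lets us avoid entirely.
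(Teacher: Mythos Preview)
Your proof is correct and essentially the same as the paper's: both take $B$ to be (a tagged copy of) $\Sigma^{*}$ unioned with a hard language, use containment for $\preceq$ in both directions, and invoke Proposition~\ref{prop:dtt} to rule out $\card{A}=\card{B}$. The only cosmetic difference is that the paper sets $A = 1\Sigma^{*}$ and $B = 1(\Sigma^{*}\oplus B')$, thereby triggering the \emph{first} clause of Proposition~\ref{prop:dtt} (neither set is $\Sigma^{*}$, so $A\equiv_m^p B$, hence $B\in\cc{P}$ since $A\in\cc{P}$) rather than the second; in particular, your closing remark that avoiding $A=\Sigma^{*}$ would force a diagonalization is not accurate---any $A\in\cc{P}$ works just as easily.
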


\begin{proof}
Let $A = 1\Sigma^*$, $B = 1(\Sigma^* \oplus B')$ for some $B' \notin \cc{P}$. First, we have $\card{A} = \card{10\Sigma^*}$, and $10\Sigma^* \subseteq 1(\Sigma^* \oplus B')$ (for any $B'$), so we get $\card{A} \preceq \card{B}$. In the opposite direction, since $B \subseteq A$, we get $\card{B} \preceq \card{A}$. But if $\card{A} = \card{B}$, Proposition~\ref{prop:dtt} would then give $A \equiv_{m}^p B$ (since neither is $\Sigma^*$), implying that $B \in \cc{P}$, contradicting the fact that $B \equiv_m^p B' \notin \cc{P}$. 
\end{proof}

\subsection{Comparison with p-isomorphism} \label{sec:iso}
Note that, a priori, having the same p-cardinality is a weaker condition than being p-isomorphic, since the bijections required for p-cardinality need not be total, nor be bijections on all of $\Sigma^{*}$. Our first two results in this direction help clarify the relationship between the two.

\begin{proposition}
If $A \cong^p B$, then $\card{A}=\card{B}$ and $\card{\overline{A}} = \card{\overline{B}}$. The converse holds if $A$ is in $\cc{P}$.
\end{proposition}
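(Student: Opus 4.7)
\emph{Forward direction.} The plan is to take the p-isomorphism $f\colon \Sigma^* \to \Gamma^*$ (with polynomial-time inverse $f^{-1}$ and $f(A) = B$) and restrict it in two different ways. Because $f$ is a bijection of all of $\Sigma^*$ with all of $\Gamma^*$ under which $A$ maps onto $B$, it automatically maps $\overline{A}$ onto $\overline{B}$. Then the pair $(f|_A, f^{-1}|_B)$ witnesses $\card{A} = \card{B}$ and the pair $(f|_{\overline{A}}, f^{-1}|_{\overline{B}})$ witnesses $\card{\overline{A}} = \card{\overline{B}}$: each map is polynomial-time on its relevant input set (because $f, f^{-1}$ are polynomial-time on \emph{all} inputs), and the pairs are mutually inverse on the relevant sets by construction.

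\emph{Converse.} Assume $A \in \cc{P}$. Let $(f_1, g_1)$ be the polynomial-time pair witnessing $\card{A} = \card{B}$ and $(f_2, g_2)$ the pair witnessing $\card{\overline{A}} = \card{\overline{B}}$. I plan to glue the $f_i$'s into a single total polynomial-time bijection $h\colon \Sigma^* \to \Gamma^*$ by setting $h(x) = f_1(x)$ when $x \in A$ and $h(x) = f_2(x)$ otherwise. Deciding which case applies takes polynomial time since $A \in \cc{P}$, and in each case the chosen $f_i$ runs in polynomial time on its home set. Because $f_1\colon A \to B$ and $f_2\colon \overline{A} \to \overline{B}$ are each bijections on the indicated sets, and $\{A, \overline{A}\}$ partitions $\Sigma^*$ while $\{B, \overline{B}\}$ partitions $\Gamma^*$, the function $h$ is a total bijection $\Sigma^* \to \Gamma^*$ with $h(A) = B$.

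\emph{The main obstacle: inverting $h$.} The subtlety is that we only assumed $A \in \cc{P}$, not $B \in \cc{P}$, so we cannot directly mimic the piecewise definition of $h$ to compute $h^{-1}$. My plan is a self-verifying ``try both'' strategy: on input $y$, first run $g_1(y)$ and, if it produces an $x_1$ that passes both checks $x_1 \in A$ (using $A \in \cc{P}$) and $f_1(x_1) = y$, output $x_1$; otherwise run $g_2(y)$ with the analogous check and output its result. Correctness will follow because if $y \in B$ then $g_1(y)$ already returns the unique correct preimage and both checks succeed, while if $y \notin B$ then no $x_1 \in A$ can satisfy $f_1(x_1) = y$ (else $y \in f_1(A) = B$), so the first branch is guaranteed to fail and control correctly passes to $g_2$, whose output on $\overline{B}$ is likewise correct. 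Every stage runs in polynomial time, placing $h^{-1}$ in $\cc{FP}$. As a sanity check and alternative, one can instead invoke Proposition~\ref{prop:dtt} to conclude that $\card{A} = \card{B}$ together with $A \in \cc{P}$ forces $B \in \cc{P}$ (since $A \neq \Sigma^*$ would give $A \equiv_m^p B$, and the $A = \Sigma^*$ case is handled directly by Prop.~\ref{prop:dtt}), after which a direct piecewise definition of $h^{-1}$ in terms of $g_1, g_2$ works; I would likely mention both approaches in the writeup.
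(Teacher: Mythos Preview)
Your forward direction and the construction of $h$ match the paper exactly. The interesting part is how you invert $h$, and here your route differs from the paper's.

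\textbf{A small gap.} You write ``run $g_1(y)$'' and later assert ``every stage runs in polynomial time.'' But by the paper's conventions, $g_1$ is only guaranteed to halt in polynomial time on inputs in $B$; on $y \notin B$ it may run forever. You must \emph{clock} $g_1$ (run it for at most $p(|y|)$ steps for an appropriate polynomial $p$) and treat timeout as failure of the first branch. Once you do this, your verification logic is sound: if $y \in B$ then the clocked $g_1(y)$ returns the true preimage $x_1 \in A$ with $f_1(x_1)=y$; if $y \notin B$, then any $x_1$ produced is either outside $A$ (caught by your membership test) or in $A$, in which case $f_1(x_1) \in B$ so $f_1(x_1) \neq y$. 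Either way the first branch fails and you correctly fall through to $g_2$, which is now guaranteed polynomial-time because $y \in \overline{B}$.

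\textbf{Comparison with the paper.} The paper's inversion procedure is more symmetric: it clocks \emph{both} $f^{-1}$ and $g^{-1}$ (in your notation $g_1,g_2$), and if both return candidate preimages $y \neq z$, it disambiguates by computing $\phi(y)$ and $\phi(z)$ and checking which equals the input---using the already-established bijectivity of $\phi$ rather than a membership test in $A$. Your approach instead exploits $A \in \cc{P}$ a second time, as a verification oracle, which makes the argument shorter and asymmetric (only $g_1$ needs clocking; once the first branch fails you know $y \in \overline{B}$). Your alternative via Proposition~\ref{prop:dtt} (deduce $B \in \cc{P}$, then define $h^{-1}$ piecewise) is also correct and is perhaps the cleanest of the three; the paper notes after the fact that its argument is close in spirit to one in Goldsmith--Hemachandra--Kunen, but does not take the Prop.~\ref{prop:dtt} shortcut.
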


After the fact, we discovered that our argument for the converse here is very similar to, but more general than, one from Goldsmith, Hemachandra, and Kunen \cite{GHK}, reproduced below as Corollary~\ref{cor:GHK}(2).

\begin{proof}
Suppose $f\colon \Sigma^* \to \Gamma^*$ is a p-isomorphism $A \to B$. Then $f|_A \colon A \to B$ is a p-equipollence, and similarly $f|_{\overline{A}}\colon \overline{A} \to \overline{B}$ is a p-equipollence.

Suppose conversely that $f\colon A \to B$ is a p-equipollence (on $A$) and $g\colon \overline{A} \to \overline{B}$ is a p-equipollence, and furthermore that $A \in \cc{P}$. Then the following is a p-bijection $\phi\colon \Sigma^* \to \Gamma^*$ that is a p-isomorphism witnessing $A \cong^p B$. On input $x$, first we use the poly-time machine to decide whether $x \in A$. If so, then output $f(x)$, and if not, then output $g(x)$. Since $\Sigma^*$ is the disjoint union of $A$ and $\overline{A}$, $\phi$ is a total polynomial-time bijection. 

The following algorithm computes $\phi^{-1}$ in polynomial time. Let $t(n)$ be a polynomial upper bound on the running time of some algorithm computing $f^{-1}|_B$ and also of some algorithm computing $g^{-1}|_{\overline{B}}$. On input $x \in \Gamma^*$, try computing both $f^{-1}(x)$ and $g^{-1}(x)$ for $t(n)$ steps. If only one of them finishes in that amount of time, then we know whether $x$ was in $B$ or $\overline{B}$, and we use that as our output. However, note that the algorithm computing $f^{-1}$ may have domain larger than $B$, and the algorithm computing $g^{-1}$ may have domain larger than $\overline{B}$, so it is possible that both $f^{-1}(x)$ and $g^{-1}(x)$ output strings $y,z$, respectively, within $t(|x|)$ steps. If $y=z$ then the algorithm may unambiguously output $y$ without having determined whether $x \in B$ or not. If $y \neq z$, then we compute $\phi(y)$ and $\phi(z)$. Since $\phi$ is a bijection and $y \neq z$, we have $\phi(y) \neq \phi(z)$, so only one of them can be equal to $x$. (And at least one of them must equal $x$, since $x$ must be in at least one of $B, \overline{B}$.) If $\phi(y)=x$, then output $y$; if $\phi(z)=x$, then output $z$.
\end{proof}

\begin{proposition} \label{prop:cardvsiso}
\begin{enumerate}
\item There is a p-cardinal containing infinitely many distinct p-isomorphism classes.

\item There exist non-p-isomorphic sets $A,B$ of the same p-cardinality in which $\overline{A}, \overline{B}$ are both infinite and in $\cc{EXP}$.
\end{enumerate}
\end{proposition}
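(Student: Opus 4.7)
The strategy for both parts is the same: by the forward direction of the preceding proposition, $A \cong^p B$ implies \emph{both} $\card{A}=\card{B}$ and $\card{\overline{A}}=\card{\overline{B}}$. So to produce non-p-isomorphic sets of equal p-cardinality it suffices to keep $\card{A}$ fixed while varying $\card{\overline{A}}$.

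\textbf{Part (1).} I would use cofinite sets with complements of all possible finite sizes. Pick distinct strings $s_1, s_2, \ldots$, let $F_k := \{s_1, \dots, s_k\}$, and set $A_k := \Sigma^* \setminus F_k \in \cc{P}$. Using the $\Sigma^* \leftrightarrow \N$ identification from the preliminaries, a Hilbert-hotel shift gives a polynomial-time, polynomial-time invertible bijection $\Sigma^* \setminus F \to \Sigma^* \setminus (F \cup \{s\})$ for every finite $F$ and $s \notin F$; composing $k$ such shifts gives $\card{A_k}=\card{\Sigma^*}$ for all $k$. Meanwhile, two finite sets can be p-equipollent only if they have the same ordinary cardinality (a p-equipollence restricts to a bijection on the sets themselves), so $\card{\overline{A_k}}$ takes infinitely many distinct values, and the previous proposition forces the $A_k$ into pairwise distinct p-isomorphism classes, all inside the single p-cardinal $\card{\Sigma^*}$.

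\textbf{Part (2).} For the infinite-complement-in-$\cc{EXP}$ constraint, I would take $A := \Sigma^* \setminus T_A$ and $B := \Sigma^* \setminus T_B$ where
\[
T_A := \{0^n : n \geq 1\}, \qquad T_B := \{0^{2^n} : n \geq 0\}.
\]
Both $T_A, T_B$ are infinite tally sets in $\cc{P} \subseteq \cc{EXP}$, so $A, B \in \cc{P}$ and $\overline{A}, \overline{B}$ meet the hypotheses. Both $A$ and $B$ are polynomial-time rankable: the rank of $x$ in $A$ (resp.\ $B$) differs from the rank of $x$ in $\Sigma^*$ by the easily computable number of elements of $T_A$ (resp.\ $T_B$) that are $\leq_{lex} x$. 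Thus the rank-to-rank map $A \to B$ is polynomial-time and polynomial-time invertible, giving $\card{A}=\card{B}$.

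\textbf{Main step / main obstacle.} The one substantive step is to show $\card{T_A} \neq \card{T_B}$. Any polynomial-time bijection $\phi: T_A \to T_B$ must take the form $\phi(0^n) = 0^{g(n)}$ for a bijection $g: \N_{\geq 1} \to \{2^m : m \geq 0\}$. Polynomial-time computability of $\phi$ forces $g(n) \leq n^c$ for some constant $c$, so the image $g(\{1,\dots,N\})$ is contained in $\{2^0, 2^1, \ldots, 2^{\lfloor c \log_2 N \rfloor}\}$, a set of size $O(\log N)$; injectivity of $g$ then yields $N = O(\log N)$, which fails for large $N$. Hence $\card{\overline{A}} \neq \card{\overline{B}}$, and the previous proposition gives $A \not\cong^p B$. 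Everything else is routine manipulation with rank functions and finite shifts; this pigeonhole bound is the one place where real counting is needed.
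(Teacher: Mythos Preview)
Your Part (1) is essentially identical to the paper's: cofinite sets $A_k=\Sigma^*\setminus F_k$ with $|F_k|=k$, a shift map giving $\card{A_k}=\card{\Sigma^*}$, and the observation that finite sets of different sizes cannot be p-equipollent, so the complements distinguish the p-isomorphism classes.

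Your Part (2) is correct but takes a genuinely different route. The paper chooses $A$ to be a $\cc{P}$-\emph{bi-immune} set in $\cc{EXP}$ (via Ko--Moore) and sets $B=\{1x:x\in A\}$; the p-equipollence is the trivial prefixing map, and non-p-isomorphism follows because $\overline{B}\supseteq 0\Sigma^*$ contains an infinite $\cc{P}$ subset while $\overline{A}$, being $\cc{P}$-immune, cannot---and $\cc{P}$-immunity is a p-isomorphism invariant. You instead stay entirely within $\cc{P}$: take complements to be the tally sets $T_A=\{0^n\}$ and $T_B=\{0^{2^n}\}$, use strong rankability of $A,B$ to build the p-equipollence directly, and then rule out $\card{T_A}=\card{T_B}$ by the pigeonhole/density bound. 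Your argument is more elementary and self-contained (no appeal to the existence of bi-immune sets), and in fact anticipates the density lemma that the paper proves only later. The paper's argument, on the other hand, buys a connection to immunity that it exploits elsewhere, and produces an example where the sets $A,B$ themselves are \emph{not} in $\cc{P}$---so the preceding proposition's converse does not apply, making the example sharper in that respect.
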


Note that there are no such examples where $A,B$ are finite, for in that case $\card{A}=\card{B}$ implies $A \cong^p B$.

\begin{proof}
1. For each $n \in \mathbb{N}$, let $S_n$ be the set of the lexicographically first $n$ strings. Define $A_n := \Sigma^* \backslash S_n$. Then all $A_n$ are p-equipollent to $A_0=\Sigma^*$: the p-equipollence $A_0 \to A_n$ simply shifts every string ``up $n$'' in terms of the corresponding $|\Sigma|$-ary numbers. But if $n \neq n'$, then $\overline{A_n}$ and $\overline{A_{n'}}$ are finite sets of different sizes, so they are not p-equipollent. Since $\card{\overline{A_n}} \neq \card{\overline{A_{n'}}}$ in this case, they cannot be p-isomorphic, hence neither can $A_n$ and $A_{n'}$.


2. We give an example with $\overline{A}, \overline{B}$ both infinite and in $\cc{EXP}$. Recall that a set $A$ is $\cc{P}$-immune if it is infinite but contains no infinite subset in $\cc{P}$. Let $A$ be any co-$\cc{P}$-immune set, that is, $\overline{A}$ is infinite but contains no infinite subsets in $\cc{P}$. For example, Ko \& Moore \cite{KM} showed that $\cc{P}$-bi-immune sets, in which both $A$ and its complement are $\cc{P}$-immune, exist in $\cc{EXP}$. Let $B = \{1x : x \in A\}$. Then a p-equipollence is given by $A \ni a \mapsto 1a \in B$, with inverse $B \ni 1x \mapsto x \in A$. But we claim that $A \not\cong^p B$. The short version is: $\cc{P}$-immunity is preserved by isomorphism, but $\overline{B}$ contains $0\Sigma^*$ as an infinite subset in $\cc{P}$. In more detail: suppose there is a polynomial-time computable bijection $f\colon \Sigma^* \to \Gamma^*$ such that $f(A)=B$ and with $f^{-1} \in \cc{FP}$ as well. Since $f^{-1}(\overline{B}) = \overline{A}$, we have that $f^{-1}(0\Sigma^*) \subseteq \overline{A}$. But $f^{-1}(0\Sigma^*)$ is in $\cc{P}$, for $x \in f^{-1}(0\Sigma^*)$ iff $f(x)$ begins with a 0, which is easily checked. As $0\Sigma^*$ is infinite and $f^{-1}$ is a bijection, this contradicts the $\cc{P}$-immunity of $\overline{A}$.
\end{proof}

\begin{observation} \label{obs:iso}
Let $L \notin \cc{P}$. If $L \equiv_m^p L'$ implies $L \cong^p L'$, then the p-isomorphism class of $L$ is equal to its p-equipollence class.
\end{observation}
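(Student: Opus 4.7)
The plan is to verify the two inclusions between the p-isomorphism class of $L$ and its p-equipollence class. The inclusion ``p-isomorphism class $\subseteq$ p-equipollence class'' follows directly from the earlier proposition stating that $A \cong^p B$ implies $\card{A}=\card{B}$, so I will just cite that.

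For the reverse inclusion, suppose $\card{L} = \card{L'}$; I want to conclude $L \cong^p L'$. The key is to apply Proposition~\ref{prop:dtt}, which says that $\card{L}=\card{L'}$ gives $L \equiv_m^p L'$ \emph{unless} $L=\Sigma^*$ or $L'=\Gamma^*$, in which case both $L$ and $L'$ lie in $\cc{P}$. I would rule out the exceptional case using the hypothesis $L \notin \cc{P}$: if $L = \Sigma^*$ then trivially $L \in \cc{P}$, and if $L' = \Gamma^*$ the proposition forces $L \in \cc{P}$, both contradictions. So the only remaining possibility is $L \equiv_m^p L'$. By the standing assumption of the observation (that $L \equiv_m^p L'$ implies $L \cong^p L'$), we obtain $L \cong^p L'$, completing the argument.

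No step looks like a genuine obstacle here: the whole argument is essentially a two-line chain $\card{L}=\card{L'} \Rightarrow L \equiv_m^p L' \Rightarrow L \cong^p L'$, together with checking the exceptional clause of Proposition~\ref{prop:dtt}. The only mildly delicate point is remembering that the ``unless'' in Proposition~\ref{prop:dtt} is not vacuous and must be handled, which is precisely where the hypothesis $L\notin \cc{P}$ enters.
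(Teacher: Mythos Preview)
Your proof is correct and follows essentially the same approach as the paper: chain p-isomorphism $\Rightarrow$ p-equipollence $\Rightarrow$ $\equiv_m^p$ (via Proposition~\ref{prop:dtt}) $\Rightarrow$ p-isomorphism (via the hypothesis), with $L \notin \cc{P}$ used to dispose of the exceptional $\Sigma^*$ case. If anything, you handle the exceptional clause of Proposition~\ref{prop:dtt} slightly more carefully than the paper's one-line proof, explicitly checking both the $L=\Sigma^*$ and $L'=\Gamma^*$ possibilities.
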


\begin{proof}
Since $L \notin \cc{P}$, $\Sigma^* \not\equiv_m^p L$, so the p-isomorphism class of $L$ does not include $\Sigma^*$. Since p-isomorphism implies p-equipollence, and p-equipollence (avoiding $\Sigma^*$) implies $\equiv_m^p$ (Proposition~\ref{prop:dtt}), the result follows.
\end{proof}

\begin{corollary}[{cf. Kurtz, Mahaney, Royer \cite{KMRCollapse}}]
For every set $A$, there is a set $B \geq_m^p A$ such that the p-isomorphism class of $B$ is equal to its p-equipollence class. There exists a set that is $\leq_{btt}^p$-complete for $\cc{EXP}$ whose p-isomorphism class is equal to its p-equipollence class. 
\end{corollary}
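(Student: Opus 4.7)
The plan is to combine Observation~\ref{obs:iso} with the collapsing-degrees theorem of Kurtz--Mahaney--Royer~\cite{KMRCollapse}. That theorem produces, for any set $A$, a set $B \geq_m^p A$ whose polynomial-time many-one degree is a single p-isomorphism class---that is, $B \equiv_m^p L'$ implies $B \cong^p L'$ for every $L'$. Once we also know $B \notin \cc{P}$, the hypothesis of Observation~\ref{obs:iso} is satisfied, and the p-isomorphism class of $B$ coincides with its p-equipollence class.

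For the first sentence of the corollary, I would split on whether $A \in \cc{P}$. If $A \notin \cc{P}$, applying KMR to $A$ directly yields $B \geq_m^p A$, and $B \geq_m^p A$ automatically forces $B \notin \cc{P}$. If $A \in \cc{P}$, I would first replace $A$ by $A \oplus H$ for some fixed $H \notin \cc{P}$ (e.\,g., any $\cc{EXP}$-complete language), then apply KMR to $A \oplus H$ to obtain $B \geq_m^p A \oplus H \geq_m^p A$; since $B \geq_m^p H$, we again have $B \notin \cc{P}$. In either case Observation~\ref{obs:iso} applies and delivers the conclusion.

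For the second sentence, I would invoke the stronger form of the KMR construction, which yields a set $B$ that is $\leq_{btt}^p$-complete for $\cc{EXP}$ and whose Karp degree is a single p-isomorphism class. Such a $B$ lies in $\cc{EXP}$ but not in $\cc{P}$: by the Time Hierarchy Theorem $\cc{EXP} \not\subseteq \cc{P}$, and $\leq_{btt}^p$-hardness for $\cc{EXP}$ is at least as strong as $\leq_T^p$-hardness, so $B \in \cc{P}$ would collapse $\cc{EXP}$ into $\cc{P}$. Observation~\ref{obs:iso} then applies and finishes the proof. The only genuinely non-trivial ingredient is importing the KMR collapsing construction---and the existence of its $\cc{EXP}$-$\leq_{btt}^p$-complete variant---directly off the shelf; the rest is a short composition of Observation~\ref{obs:iso} with Proposition~\ref{prop:dtt}, which together say that, outside the trivial polynomial-time cases, p-equipollence refines Karp equivalence and is itself refined by p-isomorphism.
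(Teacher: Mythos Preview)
Your proof is correct and follows essentially the same approach as the paper: invoke Kurtz--Mahaney--Royer to obtain a $B$ whose Karp degree collapses to a single p-isomorphism class, and then apply Observation~\ref{obs:iso}. The paper avoids your case split by observing that the KMR-produced $B$ is automatically outside $\cc{P}$---if $B$ were a nontrivial set in $\cc{P}$, its many-one degree would contain all nontrivial $\cc{P}$ sets, among which there are non-p-isomorphic pairs---so no auxiliary $H$ is needed.
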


\begin{proof}
Kurtz, Mahaney, and Royer \cite{KMRCollapse} showed for every $A$ there is a set $B \geq_m^p A$ such that the polynomial-time many-one degree of $B$ was equal to its p-isomorphism class. Such $B$ is necessarily not in $\cc{P}$, since $\cc{P}$ is its own poly-time many-one degree, but contains pairs of non-p-isomorphic sets. So we may apply Observation~\ref{obs:iso}. They similarly showed a collapse result for a $\leq_{btt}^p$-complete set for $\cc{EXP}$, which is not in $\cc{P}$ by the Time Hierarchy Theorem, so again we may apply Observation~\ref{obs:iso}.
\end{proof}

In Proposition~\ref{prop:cardvsiso} we exhibited a p-cardinal (that of $\Sigma^*$) containing infinitely many p-isomorphism classes; we believe much more should be true, in fact, we expect such cardinals are dense in the $\leq_m^p$ ordering:

\begin{conjecture} \label{conj:iso}
For every set $A$, there is a set $B \geq_m^p A$ such that $\card{B}$ contains infinitely many distinct p-isomorphism classes.
\end{conjecture}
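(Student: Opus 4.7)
The plan is to generalize Proposition~\ref{prop:cardvsiso}(1), which exhibited infinitely many p-isomorphism classes inside $\card{\Sigma^*}$ via the cofinite family $A_n = \Sigma^* \backslash S_n$, to an arbitrary starting set $A$.

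First I would dispatch the easy case $A \in \cc{P}$. If $A = \Sigma^*$, take $B = \Sigma^*$ directly, and the classes $\{A_n\}_{n \in \N}$ live inside $\card{B}$. For $A$ nontrivial and in $\cc{P}$, every nontrivial $\cc{P}$ set is $\equiv_m^p A$, so take $B = \Sigma^* \backslash \{\epsilon\}$; the reduction $x \mapsto 0$ if $x \in A$, else $x \mapsto \epsilon$, is in $\cc{FP}$ and witnesses $B \geq_m^p A$, while $\card{B} = \card{\Sigma^*}$ already contains the $A_n$'s. The case $A = \emptyset$ is handled identically with the constant reduction $x \mapsto \epsilon$.

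For the main case $A \notin \cc{P}$, set $B := A \oplus \Sigma^* = 0A \cup 1\Sigma^*$, so $B \geq_m^p A$ via $a \mapsto 0a$, and define
\[
C_n := 0A \cup 1(\Sigma^* \backslash S_n), \qquad n \in \N,
\]
with $C_0 = B$. A p-equipollence $B \to C_n$ is the identity on $0A$ together with the shift $1x \mapsto 1(x+n)$ on $1\Sigma^*$ (using the $\Sigma^* \leftrightarrow \N$ correspondence), and its inverse is the reverse shift on $1(\Sigma^* \backslash S_n)$. Thus $\card{C_n} = \card{B}$ for every $n$, and each $C_n \equiv_m^p B \geq_m^p A$.

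The main obstacle is showing $C_n \not\cong^p C_{n'}$ when $n \neq n'$. In Proposition~\ref{prop:cardvsiso}(1) the distinguishing invariant was simply the finite size of $\overline{A_n}$. Here $\overline{C_n} = \{\epsilon\} \cup 0\overline{A} \cup 1S_n$ is infinite whenever $\overline{A}$ is, and only the finite symmetric difference $\overline{C_n} \triangle \overline{C_{n'}}$, of size $|n-n'|$, distinguishes the complements; a total p-bijection $\Sigma^* \to \Sigma^*$ is free to permute within the infinite $0\overline{A}$ part, so that finite difference is a priori absorbable. A natural line of attack is to replace $A$ by an equivalent-degree representative $A'$ with $\overline{A'}$ $\cc{P}$-immune (always possible in a non-$\cc{P}$ degree by relativizing Ko--Moore~\cite{KM}), and then to seek an invariant of the pair $(C_n, \overline{C_n})$---perhaps defined in terms of the interaction of finite $\cc{P}$-subsets of $\overline{C_n}$ with infinite $\cc{P}$-subsets of $C_n$---that genuinely sees $n$ and is preserved by arbitrary polynomial-time bijections of $\Sigma^*$, not just those respecting the $\{0,1\}$-prefix structure of the construction. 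I expect this last point to be the crux: failing a clean such invariant, one may need to replace the simple sum $A \oplus \Sigma^*$ by a more elaborate $B$ in which each modification internally encodes $n$ in a way that no total p-bijection can wash out.
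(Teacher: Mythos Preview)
This statement is labeled a \emph{conjecture} in the paper and is explicitly left open there; the paper offers no proof for you to be compared against.

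Your treatment of the case $A \in \cc{P}$ is fine and correctly reduces to Proposition~\ref{prop:cardvsiso}(1). For $A \notin \cc{P}$, your construction $B = A \oplus \Sigma^*$ with $C_n = 0A \cup 1(\Sigma^*\backslash S_n)$ does produce infinitely many sets in $\card{B}$, but---as you yourself acknowledge---the entire difficulty is showing $C_n \not\cong^p C_{n'}$ for $n \neq n'$, and you have not done this. The invariant that worked in Proposition~\ref{prop:cardvsiso}(1) (finite complement size) genuinely fails here: a total p-bijection of $\Sigma^*$ is free to move strings between $0\overline{A}$ and $1S_n$, so the count of ``$1$-prefixed'' strings in $\overline{C_n}$ is not preserved. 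Your proposed fix via $\cc{P}$-immunity does not obviously work either: if $\overline{A'}$ is $\cc{P}$-immune then every $\cc{P}$-subset of $\overline{C_n}$ is finite, but this holds for \emph{all} $n$ simultaneously and hence does not separate them; and it is not clear that every non-$\cc{P}$ many-one degree contains a co-$\cc{P}$-immune representative in the first place. The final fallback (``replace $A\oplus\Sigma^*$ by a more elaborate $B$'') is not a proof step but a restatement of the open problem.

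In short, you have correctly located where the difficulty lies, but what you have written is a proof sketch with an acknowledged gap at the crucial point, not a proof. The conjecture remains open.
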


We next observe that Berman \& Hartmanis's Cantor--Bernstein-like proof extends to p-cardinality:

\begin{theorem} \label{thm:BH}
Let $A \subseteq \Sigma^*, B \subseteq \Gamma^*$, and let $p\colon A \to B$ and $q\colon B \to A$ be length-increasing, p-invertible partial polynomial-time injective functions. Then $A$ and $B$ have the same p-cardinality.
\end{theorem}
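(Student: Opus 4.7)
The plan is to adapt Berman and Hartmanis's Cantor--Bernstein-style proof of their p-isomorphism theorem to the partial, on-set setting of p-cardinality. Following that template, I would construct a bijection $F \colon A \to B$ by the classical partition: for each $x \in A$, trace the \emph{backwards chain} $x, q^{-1}(x), p^{-1}(q^{-1}(x)), \dots$, alternating between $A$ and $B$. Because $p$ and $q$ are length-increasing, every backward step strictly decreases length, so every chain is finite of length at most $|x|$---in particular, there is no residual ``infinite-chain'' class to worry about, a subtlety that complicates the classical Schroeder--Bernstein argument but simply disappears here. I would then set $F(x) := p(x)$ when the chain ends on the $A$-side (an element of $A \setminus q(B)$) and $F(x) := q^{-1}(x)$ when it ends on the $B$-side (an element of $B \setminus p(A)$). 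A standard Cantor--Bernstein bookkeeping, using the injectivity of $p$ and $q$ on their domains, then shows that $F$ is a bijection $A \to B$.

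To make this polynomial-time computable, I would trace each chain algorithmically using the provided polynomial-time procedures for $p, p^{-1}, q, q^{-1}$, and at every backward step verify the candidate predecessor $w$ of $z$ by a round-trip check ($q(w) = z$ for the $q^{-1}$ step, and analogously for $p^{-1}$) combined with the strict length-decrease check $|w| < |z|$. Strict length decrease guarantees that the chain has at most $|x|$ elements, each computed in polynomial time, so $F(x)$ is computable in polynomial time; the parity of the terminating step selects between $p(x)$ and $q^{-1}(x)$. The inverse map $G \colon B \to A$ is defined symmetrically by chains started at $y \in B$, and a routine chain-tracing argument gives $G \circ F = \mathrm{id}_A$ and $F \circ G = \mathrm{id}_B$, proving $\card{A} = \card{B}$.

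The main obstacle is that $p, q, p^{-1}, q^{-1}$ are only guaranteed to behave correctly on $A, B, p(A), q(B)$ respectively, so the round-trip verification might in principle succeed spuriously on an off-domain input and cause the traced chain to drift outside $A \cup B$. My plan for handling this is an inductive argument anchored at the fact that the chain starts inside $A$: assuming $z_i \in A \cup B$, any candidate predecessor $w$ accepted by the round-trip plus length-decrease test must genuinely be the Cantor--Bernstein predecessor in $A \cup B$, since the uniqueness of honest preimages (via injectivity of $p$ and $q$ on their true domains) together with strict length decrease leaves no room for a coherent spurious acceptance to propagate. Consequently the algorithmically-traced chain coincides with the true Cantor--Bernstein chain at each step, and $F$ is genuinely a p-equipollence $A \to B$.
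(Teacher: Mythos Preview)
Your approach is essentially the paper's: both adapt the Berman--Hartmanis Cantor--Bernstein construction, partition $A$ by where the backward chain terminates, use length-increase to bound chain length by $|x|$, and define $F,G$ exactly as the paper's $\phi,\psi$. The paper phrases this via a bipartite chain graph and handles the polynomial-time issue by clocking the inverse machines so they output $\bot$ on overrun; you instead propose round-trip verification plus a length check. Structurally the two arguments coincide.

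You go further than the paper in explicitly flagging the off-domain problem, but your inductive fix does not close the gap. Take $z_i \in A$ with $z_i \notin q(B)$, so $z_i$ is a genuine chain source and the correct decision is $F(x)=p(x)$. The $q^{-1}$-machine may nonetheless output some $w$ with $|w|<|z_i|$ and $w \notin B$; since the machine computing $q$ is only guaranteed correct on inputs in $B$, the round-trip test ``$q\text{-machine}(w)=z_i$'' can pass spuriously. Injectivity of $q$ on $B$ gives you nothing here because $w \notin B$, and the length-decrease check only bounds chain length, not correctness of individual steps. Thus the algorithm may continue past the true source, flip the parity, and set $F(x)=z_1\notin B$, so $F$ fails to map $A$ into $B$. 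Your phrase ``leaves no room for a coherent spurious acceptance to propagate'' is exactly the step that needs a real argument and does not currently have one. (The paper's own clock-based $\widehat{q^{-1}}$ glosses over the same point---the clock handles non-halting but not halting-with-garbage---so you are not doing worse than the paper; but since you singled this out as the main obstacle, the claimed resolution should actually resolve it.)
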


A proof nearly identical to that in \cite[Thm.~1]{BH} works; for expository purposes, we follow the terminology and proof described in \cite[Section~2]{FKR}.

\begin{proof}
Define the following graph $G_{p,q}$ on vertex set $0A \cup 1B$: for every $x \in A$, there is an edge $0x \to 1p(x)$, and for every $y \in B$, there is an edge $1y \to 0q(y)$. The graph is readily seen to be bipartite by construction. Since $p$ and $q$ are functions with $\dom(p) \supseteq A$ and $\dom(q) \supseteq B$, every vertex has out-degree exactly 1. Since $p$ and $q$ are invertible, every vertex has in-degree at most 1. The connected components of $G_{p,q}$ are called chains. We call a vertex a source if it has in-degree 0. Thus the graph is a disjoint union of chains, each of which is one of: finite cycles, two-way infinite paths, a one-way infinite path with its source in $0A$, or a one-way infinite path with its source in $1B$.  Since $p$ and $q$ are length-increasing, there can be no finite cycles. There also cannot be any two-way infinite paths, since in following the chain backwards, the length of the strings would have to be decreasing forever. Since every vertex has out-degree 1, every vertex is part of a chain. 

Let $C_A$ be the vertices that are part of a chain with source in $0A$, and $C_B$ be the vertices in that are part of a chain with source in $1B$. We define
\[
\phi(x) = \begin{cases}
p(x) & \text{ if } 0x \in 0A \cap C_A \\
q^{-1}(x) & \text{ if } 0x \in 0A \cap C_B
\end{cases}
\qquad
\psi(y) = \begin{cases}
p^{-1}(y) & \text{ if } 1y \in 1B \cap C_A \\
q(y) & \text{ if } 1y \in 1B \cap C_B
\end{cases}
\]
As is usual in such proofs, it is readily verified that $\phi$ and $\psi$ are inverses of one another. 

Since $p^{-1}$ need not have domain all of $B$, and $q^{-1}$ need not have domain all of $A$, it remains to verify that $\phi$ (resp., $\psi$) can be computed in $\cc{PF}$ on domain $A$ (resp., $B$). We give the proof for $\phi$, the proof for $\psi$ being similar. Let $t(n)$ be a polynomial such that $p^{-1}$ can be computed in $t(n)$ time on inputs in $A$ and $q^{-1}$ can be computed in $t(n)$ time on inputs in $B$. For inputs in $A$, we modify $q^{-1}$ to a function $\widehat{q^{-1}}$ such that $\widehat{q^{-1}}(x) = q^{-1}(x)$ if $x \in q(B)$, and otherwise $\widehat{q^{-1}}(x)$ outputs a special symbol $\bot$. The latter can be computed in $O(t(n)\log t(n))$ time by running a $t(n)$-time machine for $q^{-1}$, with an additional clock, and if the clock ever hits $t(n)$, then stopping and outputting $\bot$. Similarly we modify $p^{-1}$ to $\widehat{p^{-1}}$. 

The key to computing $\phi$ efficiently is to determine whether $0x \in C_A$ or $0x \in C_B$. It does this by applying $\widehat{q^{-1}}$ and $\widehat{p^{-1}}$ alternately until it gets $\bot$, at which point it has found a source of the chain and thus knows which of $C_A$ or $C_B$ the vertex $0x$ is in. Since $p^{-1}$ and $q^{-1}$ are length-decreasing, this takes no more than time polynomial in $|x|$.
\end{proof}

\begin{remark}
We see from the proof that the main use of the length-increasing condition was that any sequence of $p^{-1}, q^{-1}$ terminated after $\poly(|x|)$ steps. Thus the argument generalizes to any p-well-orderings (see Definition~\ref{def:well}) $\prec_A$ on $A$ and $\prec_B$ on $B$ such that $q(p(x)) \succ_A x$ and $p(q(y)) \succ_B y$ for all $x \in A, y \in B$.

In fact, one can get away with further generalizations. If a sequence of $p^{-1},q^{-1}$ forms a cycle of p-bounded length, this can be detected and easily handled. The real issue with the above proof is cycles or chains such that a polynomial-time algorithm cannot identify when a string is part of such a cycle or chain. (A necessary condition is that such a cycle or chain is super-polynomially long, but there could still be very long such chains/cycles that are easily identifiable for other reasons.)
\end{remark}

\subsection{P-countability: density, rankability, enumerabililty, and compressibility} \label{sec:countable}
It is natural to consider $\Sigma^*$ to be the p-cardinality analogue of $\aleph_0$, the cardinality of $\N$, so here we examine sets that have the same p-cardinality as $\Sigma^*$.

\begin{definition}[p-countable]
A set is \emph{p-countable} if it is finite or has the same p-cardinality as $\Sigma^*$.
\end{definition}

Before we get to interesting connections, we ensure that our exploration is not dependent on alphabet:

\begin{observation}
Let $\Sigma,\Gamma$ be two finite alphabets of size at least 2. Then $\card{\Sigma^*} = \card{\Gamma^*}$.
\end{observation}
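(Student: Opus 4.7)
The plan is to construct a total polynomial-time bijection $\phi\colon \Sigma^* \to \Gamma^*$ with a polynomial-time inverse, i.e., a p-isomorphism $\Sigma^* \cong^p \Gamma^*$; by the first proposition of Section~\ref{sec:iso} this immediately gives $\card{\Sigma^*} = \card{\Gamma^*}$. The bijection is already implicit in Section~\ref{sec:prelim}: for any finite alphabet $\Delta$, the length-lexicographic ordering fixed there produces a canonical bijection $\nu_\Delta\colon \Delta^* \to \N$ (``the natural number associated to $x$''). Setting $\phi := \nu_\Gamma^{-1}\circ \nu_\Sigma$ produces a total bijection in the required direction, and so the entire content of the observation reduces to showing that $\nu_\Delta$ and $\nu_\Delta^{-1}$ lie in $\cc{FP}$ for each fixed $\Delta$.

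To argue the latter, I would observe that $\nu_\Delta(x)$ splits as the number of strings of length strictly less than $|x|$ (a geometric sum $1 + |\Delta| + \dots + |\Delta|^{|x|-1}$, computable in $\poly(|x|)$ time) plus the base-$|\Delta|$ interpretation of $x$ within its length block, where symbols of $\Delta$ are identified with digits $0,\dots,|\Delta|-1$ using the fixed total order on $\Delta$. For the inverse, given $N \in \N$ I would locate the unique length $n$ whose geometric partial sum brackets $N$ by $O(\log N)$ comparisons against precomputed powers of $|\Delta|$, subtract that partial sum, and then read off base-$|\Delta|$ digits of the remainder. Each of these is a standard base-conversion routine running in time polynomial in $|x|$ or $\log N$ respectively; both $\nu_\Delta$ and $\nu_\Delta^{-1}$ are therefore in $\cc{FP}$, and composing gives $\phi,\phi^{-1}\in\cc{FP}$.

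I do not anticipate any real obstacle: every step is routine base arithmetic, and the assumption $|\Sigma|,|\Gamma|\geq 2$ is used only to guarantee that $\Sigma^*$ and $\Gamma^*$ are infinite (so that the identification with $\N$ really is a bijection). In line with the paper's guiding philosophy, the proof actually yields the stronger conclusion $\Sigma^* \cong^p \Gamma^*$, confirming that polynomial-time cardinality, like most polynomial-time notions, is insensitive to the choice of alphabet beyond size 2.
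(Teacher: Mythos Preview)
Your proof is correct and essentially identical to the paper's: both route through the length-lexicographic bijection with $\N$ from Section~\ref{sec:prelim} to build the p-equipollence $\Sigma^* \to \Gamma^*$, and the paper likewise just asserts that the conversions are polynomial-time. One small correction to your closing remark: $\Sigma^*$ is infinite even when $|\Sigma|=1$, and $\nu_\Delta$ is a bijection with $\N$ for any nonempty $\Delta$; the hypothesis $|\Sigma|,|\Gamma|\geq 2$ is actually used (as the paper notes) to ensure $|x|\sim\log_{|\Sigma|}N$ and $|y|\sim\log_{|\Gamma|}N$ are $\Theta$ of each other, so that $\nu_\Gamma^{-1}$ runs in time polynomial in $|y|$ rather than exponential.
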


\begin{proof}
Let $k = |\Sigma|, \ell = |\Gamma|$. The following is a p-equipollence $\Sigma^* \to \Gamma^*$. Given $x \in \Sigma^*$, consider the natural number $N$ associated to it (see Section~\ref{sec:prelim}), and then let $y$ be the $\ell$-ary string in $\Gamma^*$ associated with $N$, and output $y$. We have that $|x| \sim \log_k N$ and $|y| \sim \log_\ell N$ (since both $k,\ell \geq 2$), so $|y| = \Theta(|x|)$. It is clear that the computations can be done in polynomial time.
\end{proof}

Because every language is a subset of $\Sigma^*$, every $A$ has $\card{A} \preceq \card{\Sigma^*}$, so it would seem that $\Sigma^*$ is the ``largest'' p-cardinal. But remember that $\preceq$ is only a pre-order, not a partial order, and in the proof of Proposition~\ref{prop:not_poset}, we saw that for $B' \not\in \cc{P}$, $\card{\Sigma^*} \prec \card{\Sigma^* \oplus B'}$ (even though we also have $\card{\Sigma^* \oplus B'} \preceq \card{\Sigma^*}$) but the two cardinalities are not equal. So the \emph{$\equiv$-equivalence class} of $\card{\Sigma^*}$ is indeed maximal in the quotient poset, but the p-cardinal itself is not maximal in the $\preceq$ preorder. 

\subsubsection{Enumerability} 
Selman introduced the notion of p-enumerability:

\begin{definition}[{p-enumerable, Selman \cite{SelmanEnum}}]
A language $L$ is \emph{p-enumerable} if there is a function $f \in \cc{FP}$ such that $\img(f) = L$ and such that for all $y \in L$, there exists $x$ such that $f(x)=y$ and $|x| \leq \poly(|y|)$.
\end{definition}

Note that in Selman's definition, $f$ need not be injective.

\begin{observation} \label{obs:enum}
A p-countable set is p-enumerable.
\end{observation}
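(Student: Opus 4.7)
The plan is direct: the p-equipollence $f\colon \Sigma^* \to A$ already serves as the desired p-enumerator, so the proof is essentially an unpacking of definitions. First I would dispose of the trivial case: if $A$ is finite, a p-enumerator can simply hard-code an ordered list of the elements of $A$ and output the $i$-th element on input $i$ (outputting a fixed default element for $i \geq |A|$), which is clearly in $\cc{FP}$ and satisfies the honest-preimage condition vacuously.

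For the nontrivial case, assume $\card{A} = \card{\Sigma^*}$, so that there exist partial polynomial-time computable functions $f\colon \Sigma^* \to A$ and $g\colon A \to \Sigma^*$ with $f \circ g = id_A$ and $g \circ f = id_{\Sigma^*}$. By the paper's convention on partial functions $\Sigma^* \to A$, we have $\Sigma^* \subseteq \dom(f)$ (so $f$ is total on $\Sigma^*$) and $f(\Sigma^*) \subseteq A$. Combining the latter with the surjectivity of $f$ onto $A$ (which follows from $f(g(y)) = y$ for every $y \in A$), we conclude that $\img(f) = A$, and $f \in \cc{FP}$.

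It remains to check the polynomial-preimage condition in Selman's definition. Given any $y \in A$, take $x := g(y)$. Since $g$ is polynomial-time computable, $|x| = |g(y)| \leq \poly(|y|)$, and $f(x) = f(g(y)) = y$ by the inverse property. Hence $f$ itself witnesses that $A$ is p-enumerable.

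There really is no obstacle here — the key observation is that the outward half $f$ of a p-equipollence $\Sigma^* \to A$ is automatically an honest surjection in $\cc{FP}$, and its inverse $g$ supplies the polynomially-bounded preimages on demand. The only minor care required is to verify that the convention "$\Sigma^* \subseteq \dom(f)$" indeed forces $f$ to be a total $\cc{FP}$ function (rather than merely defined on $A$'s preimage), which is immediate from the definition of partial function $\Sigma^* \to A$ given earlier in Section~\ref{sec:cardinality}.
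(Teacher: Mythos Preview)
Your proof is correct and follows essentially the same approach as the paper's own proof: use the p-equipollence $f\colon \Sigma^* \to A$ directly as the enumerator, with the inverse $g$ (the paper calls it $f^{-1}$) supplying the polynomially bounded preimages. The paper's proof is just a terser version of what you wrote.
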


\begin{proof}
For finite sets this is clear. For an infinite p-countable set $L$, let $f\colon \Sigma^* \to L$ be a p-equipollence. Then $L$ is p-enumerable via $f$; the honesty-like condition is guaranteed by the p-computability of $f^{-1}$.
\end{proof}

This also follows from the fact that the p-enumerable sets are precisely those in $\cc{NP}$ \cite[Cor.~2]{SelmanEnum} and p-countable sets are in $\cc{P}$ (by Proposition~\ref{prop:dtt}). This simple observation will have an interesting consequence when we get to arithmetic of p-cardinals (Proposition~\ref{prop:NPimmune}). 

\begin{definition}[{p-enumerable by iteration \cite{HHSY}}] A language $L$ is \emph{p-enumerable by iteration} if there is a Turing machine $M$ and polynomial $p$ such that for all $x \in L$, $M(x)$ halts within $p(|x|)$ steps, and there is some $x_0 \in L$ such that $L = \{x_0, M(x_0), M(M(x_0)), M(M(M(x_0))), \dotsc\}$. $L$ is \emph{invertibly p-enumerable by iteration} if it is p-enumerable by iteration as before, and the function computed by $M$ has a polynomial-time inverse on its image.
\end{definition}

Hemachandra, Hoene, Siefkes, and Young \cite[Sec.~5]{HHSY}, among other things, gave examples of languages that were invertibly p-enumerate by iteration, including computably enumerable $\cc{P}$-cylinders, and sets with various forms of self-reducibility. Here we add another family of examples:

\begin{proposition} \label{prop:enum}
If $L$ is p-countable then $L$ is invertibly p-enumerable by iteration. The converse does not always hold.
\end{proposition}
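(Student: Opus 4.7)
The plan is to split into two parts: an explicit construction for the forward implication, and a citation-based example showing the converse fails.

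For the forward direction, I would first dispose of the case where $L$ is finite by hardcoding a cyclic permutation of its elements; both the map and its inverse are then trivially polynomial-time, so assume $L$ is infinite. Let $f\colon \Sigma^* \to L$ be the promised p-equipollence, with polynomial-time inverse $g\colon L \to \Sigma^*$, and use the identification of $\Sigma^*$ with $\N$ from Section~\ref{sec:prelim} to give $\Sigma^*$ a successor $x \mapsto x + 1$ and predecessor. Define
\[
M(x) := f(g(x) + 1), \qquad x_0 := f(\epsilon).
\]
An easy induction gives $M^{k}(x_0) = f(k)$, so the orbit of $x_0$ under iteration is exactly $f(\Sigma^*) = L$, and the polynomial-time inverse of $M$ on its image $M(L) = L \setminus \{x_0\}$ is given by $y \mapsto f(g(y) - 1)$.

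The subtlest step, and the one I would write out most carefully, is the running-time accounting: $g$ is only guaranteed to be polynomial-time on inputs in $L$, so for $x \in L$ I would note $|g(x)| \leq \poly(|x|)$, whence $|g(x) + 1| \leq |g(x)| + O(1)$ is polynomial in $|x|$, and then $f$ evaluated on a string of polynomial length takes polynomial time. The same accounting handles the inverse direction, and nothing here requires $M$ to be well-defined or polynomial-time outside of $L$, consistent with the ``invertibly p-enumerable by iteration'' definition.

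For the converse, Proposition~\ref{prop:dtt} (applied with $B = \Sigma^*$) forces every infinite p-countable set to lie in $\cc{P}$, so it suffices to point to an invertibly p-enumerable by iteration set outside $\cc{P}$. Hemachandra, Hoene, Siefkes, and Young~\cite{HHSY} showed that every computably enumerable $\cc{P}$-cylinder is invertibly p-enumerable by iteration; taking $L := L_0 \times \Sigma^*$ for any $L_0 \in \cc{E} \setminus \cc{P}$ (which exists by the time hierarchy theorem) yields such a cylinder that is computably enumerable yet not in $\cc{P}$. I expect no real obstacle beyond the bookkeeping above; the converse is essentially a pointer to a known construction combined with Proposition~\ref{prop:dtt}.
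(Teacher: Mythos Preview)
Your proof is correct and follows essentially the same approach as the paper: the forward construction $M(x)=f(g(x)+1)$ with $x_0=f(\epsilon)$ is exactly the paper's (with your $f,g$ playing the roles of the paper's $f^{-1},f$), and the converse likewise combines Proposition~\ref{prop:dtt} with the examples from \cite{HHSY}. Your write-up in fact carries a bit more detail than the paper's, both in the running-time bookkeeping and in spelling out a concrete non-$\cc{P}$ c.e.\ $\cc{P}$-cylinder.
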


\begin{proof}
For finite $L$ this is clear, the enumerator implements a finite cycle through all the elements of $L$.

For infinite $L$, let $f\colon L \to \Sigma^*$ be a p-equipollence. Let $x_0 = f^{-1}(\epsilon)$, and for $x \in L$ define $M(x)$ by $M(x)=f^{-1}(f(x)+1))$. Since $f,f^{-1}$ and addition by constants are all p-invertible, so is $M$. 

Hemachandra \emph{et al.} \cite{HHSY} give examples of languages invertibly p-enumerable by iteration that are not in $\cc{P}$, but by Proposition~\ref{prop:dtt}, p-countable languages are in $\cc{P}$.
\end{proof}

\subsubsection{Compressibility}

Goldsmith, Hemachandra, and Kunen \cite{GHK} defined a language $L$ to be \emph{$\cc{P}$-compressible} if there is an $f \in \cc{FP}$, $f\colon \Sigma^* \to \Sigma^*$ such that $f|_L\colon L \to \Sigma^*$ is a bijection. 

\begin{observation} \label{obs:compressible}
If $L$ is p-countably infinite, then $L$ is p-compressible by a function that is p-invertible.
\end{observation}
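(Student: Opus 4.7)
The plan is straightforward: convert the forward direction of the p-equipollence between $L$ and $\Sigma^*$ into a total $\cc{FP}$ function by clocking it. By the hypothesis that $L$ is p-countably infinite, there exist mutually inverse partial polynomial-time computable functions $g\colon L \to \Sigma^*$ and $h\colon \Sigma^* \to L$. Fix a Turing machine $M$ and a polynomial $p$ such that $M(x)$ halts within $p(|x|)$ steps and outputs $g(x)$ whenever $x \in L$.

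First, I would define $f\colon \Sigma^* \to \Sigma^*$ by simulating $M$ on input $x$ for $p(|x|)$ steps, outputting whatever $M$ produces if it halts within that budget, and outputting a fixed default (say $\epsilon$) otherwise. By construction $f \in \cc{FP}$, and since $f$ agrees with $g$ on all of $L$, the restriction $f|_L\colon L \to \Sigma^*$ is the bijection $g$ itself, so $f$ witnesses $\cc{P}$-compressibility of $L$ in the sense of Goldsmith, Hemachandra, and Kunen \cite{GHK}.

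For p-invertibility, regard $f$ as a partial function $L \to \Sigma^*$; then $f(L) = \Sigma^*$, and the inverse of $f|_L$ on $\Sigma^*$ is exactly $h$, which is polynomial-time computable by assumption. The behavior of $f$ on $\overline{L}$ is immaterial: even if $f$ happens to map some $x \notin L$ into $\Sigma^*$, possibly colliding with some $f(x') = y$ for $x' \in L$, the notions of $\cc{P}$-compressibility and p-invertibility both only constrain $f$'s behavior on inputs in $L$ and outputs in $\Sigma^*$.

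There is really no obstacle here beyond the minor bookkeeping of turning the partial $g$ into a total $\cc{FP}$ function; the only subtlety worth stating carefully is that $g$ need not terminate, or may output garbage, on inputs outside $L$, which is precisely why the clock of $p(|x|)$ steps is needed.
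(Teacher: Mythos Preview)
Your proof is correct and follows essentially the same approach as the paper: totalize the forward p-equipollence $g\colon L \to \Sigma^*$ to a function in $\cc{FP}$, then observe that its restriction to $L$ is the original bijection with polynomial-time inverse $h$. The one minor technical difference is in how the totalization is performed: the paper uses the fact that $L \in \cc{P}$ (a consequence of Proposition~\ref{prop:dtt}) to branch explicitly on membership, setting $\hat{f}(x) = x$ for $x \notin L$ and $\hat{f}(x) = g(x)$ for $x \in L$, whereas you totalize by clocking, which avoids appealing to $L \in \cc{P}$ and is in that sense slightly more self-contained.
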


\begin{proof}
Given a p-equipollence $f\colon L \to \Sigma^*$, define $\hat{f} \colon \Sigma^* \to \Sigma^*$ by 
\[
\hat{f}(x) = \begin{cases}
x & x \notin L \\
f(x) & x \in L.
\end{cases}
\]
Since $f \in \cc{PF}$ with $f|_L \colon L \to \Sigma^*$ a bijection, and $L \in \cc{P}$ (since it is p-countable), clearly $\hat{f} \in \cc{FP}$ and $\hat{f}|_L \colon L \to \Sigma^*$ is still a bijection. Moreover, $f^{-1} \colon \Sigma^* \to L$ is an inverse of $\hat{f}$ that is in $\cc{FP}$.
\end{proof}

Several results and techniques of Goldsmith, Hemachandra, and Kunen thus have immediate implications for p-countable sets; we record a particularly interesting one here.

\begin{corollary}[{cf. Goldsmith, Hemachandra, Kunen \cite[Prop.~3.18]{GHK}}] \label{cor:GHK}
If $A,\overline{A},B,\overline{B}$ are all p-countably infinite, then $A \cong^p B$. 
\end{corollary}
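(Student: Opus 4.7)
The plan is to apply the first (unlabeled) proposition of Section~\ref{sec:iso}, which says that $A \cong^p B$ whenever $\card{A} = \card{B}$, $\card{\overline{A}} = \card{\overline{B}}$, and $A \in \cc{P}$. Under the corollary's hypotheses, each of $A, \overline{A}, B, \overline{B}$ is p-countably infinite, so by definition they all share the p-cardinality of $\Sigma^*$; in particular, $\card{A} = \card{B}$ and $\card{\overline{A}} = \card{\overline{B}}$ are immediate.

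The remaining side condition $A \in \cc{P}$ comes for free from Proposition~\ref{prop:dtt}: the equality $\card{A} = \card{\Sigma^*}$ places us in the exceptional case of that proposition, which is exactly the case where membership in $\cc{P}$ is forced. Applying the first proposition of Section~\ref{sec:iso} then yields $A \cong^p B$.

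Since this derivation is essentially mechanical, I do not foresee any substantive obstacle; the real content is already in the two propositions cited above, so the proof should be short enough to fit in a few lines. An alternative route, closer in spirit to Goldsmith, Hemachandra, and Kunen~\cite{GHK}, would be to apply Observation~\ref{obs:compressible} to each of $A, \overline{A}, B, \overline{B}$ separately to obtain invertible p-compressions onto $\Sigma^*$, then glue the composites $A \to \Sigma^* \to B$ and $\overline{A} \to \Sigma^* \to \overline{B}$ by cases on membership in $A$ (using $A \in \cc{P}$ to carry out the case split in polynomial time, and reconstructing the inverse as in the proof of the first proposition of Section~\ref{sec:iso}). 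However, since this would essentially recapitulate that proof, I would present the direct reduction to the earlier proposition as the primary argument.
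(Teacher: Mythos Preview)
Your reduction is correct: once you have $\card{A}=\card{B}=\card{\Sigma^*}$, $\card{\overline{A}}=\card{\overline{B}}=\card{\Sigma^*}$, and $A\in\cc{P}$ via Proposition~\ref{prop:dtt}, the converse direction of the first proposition in Section~\ref{sec:iso} immediately gives $A\cong^p B$. The paper instead follows your ``alternative route'': it builds, for each of $A$ and $B$, a total p-invertible bijection $\Sigma^*\to\Sigma^*$ sending the set to $0\Sigma^*$ (by gluing the p-equipollences $A\to\Sigma^*$ and $\overline{A}\to\Sigma^*$ using the case split on membership in $A$), and then composes. So you and the paper are doing the same work, just factored differently: the paper repeats the GHK construction explicitly for expository purposes, while you invoke the earlier proposition that the paper itself flags as a generalization of that construction. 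Your approach is tidier and avoids duplication; the paper's is more self-contained and makes the link to \cite{GHK} visible at the point of use.
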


\begin{proof}
Goldsmith--Hemachandra--Kunen \cite[Prop.~3.18]{GHK} use Grollman \& Selman's result that $\cc{P} = \cc{UP}$ iff one-way functions do not exist \cite{GS}. Here instead, we use the fact that the p-equipollences with $\Sigma^*$ have inverses by definition. But the following construction is otherwise the same as \cite{GHK}. We include it here for completeness and ease of reference.

Let $f_1\colon A \to \Sigma^*$ and $f_2\colon \overline{A} \to \Sigma^*$ be p-equipollences, with extensions $\hat{f}_1,\hat{f}_2$ to total $\cc{FP}$ functions. Define
\[
f(x) = \begin{cases}
0 \hat{f}_1(x) & x \in A \\
1 \hat{f}_2(x) & x \notin A.
\end{cases}
\]
Then we have
\[
f^{-1}(by) = \begin{cases}
f_1^{-1}(y) & b=0 \\
f_2^{-1}(y) & b=1.
\end{cases}
\]
Since $\hat{f}_1, \hat{f}_2, f_1^{-1}, f_2^{-1}$, and the characteristic function of $A$ are all computable in polynomial time (the latter by Proposition~\ref{prop:dtt}), so are $f$ and $f^{-1}$. Furthermore, $f,f^{-1}$ are total bijections $\Sigma^* \to \Sigma^*$. 

Define $g$ similarly for $B$. Then $h(x) = f^{-1}(g(x))$ is a total, p-equipollence $\Sigma^* \to \Sigma^*$ with $h(B) = f^{-1}(g(B)) = f^{-1}(0\hat{g}_1(B)) = f^{-1}(0g_1(B)) = f^{-1}(0\Sigma^*) = f_1^{-1}(\Sigma^*)=A$, so $h$ is the desired p-isomorphism $B \to A$.
\end{proof}

\subsubsection{Rankability} Suppose $L \subseteq \Sigma^*$ is p-countably infinite. Then there is a polynomial-time computable and invertible  bijection $f\colon L \to \Sigma^*$. For infinite languages, ranking functions \cite{GoldbergSipser, AllenderThesis} are a special case of bijective maps to $\Sigma^*$: 

\begin{definition}[{Ranking function, p-rankability \cite{GoldbergSipser, AllenderThesis}}]
The \emph{(strong) ranking function} of a language $L$ is the map $rk_L(x) = |\{y \in L : y \leq_{lex} x\}|$. A language $L$ is \emph{strongly p-rankable} if its strong ranking function can be computed in $\cc{FP}$, and \emph{weakly p-rankable} if the ranking function can be computed in $\cc{PF}$ on those inputs that are actually in $L$. 
\end{definition}

From p-countability, we don't quite get p-rankability, but we get a version if we allow replacing the ordering $\leq_{lex}$ by a different ordering that shares many properties with $\leq_{lex}$. We generalize rankability to arbitrary total orderings $\prec$. We define the \emph{strong ranking function of $L$ with respect to $\prec$} as $rk_{L,\prec}(x) = |\set{y \in L : y \preceq x }|$, and we say $L$ is \emph{strongly (resp., weakly) p-rankable with respect to $\prec$} if $rk_{L,\prec}$ is in $\cc{FP}$ (resp., in $\cc{PF}$ on inputs in $L$).

\begin{definition}[{Length-related \cite{MP79}}]
A total ordering $\prec$ on $\Sigma^*$ is \emph{length-related} if $x \prec y$ implies $|x| \leq \poly(|y|)$.
\end{definition}

\begin{observation}
If $L$ is p-countable, then $L$ is weakly p-rankable under some length-related total (on $L$) ordering $\prec \in \cc{PF}$.
\end{observation}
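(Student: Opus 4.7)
The plan is to pull back the length-lexicographic ordering from $\Sigma^*$ through the p-equipollence. Concretely, if $L$ is finite, the claim is trivial: any total ordering on the finite list works, and the ranking function becomes a constant-size table lookup. So assume $L$ is p-countably infinite, and let $f\colon L \to \Sigma^*$ be a p-equipollence with p-computable inverse $f^{-1}\colon \Sigma^* \to L$. Define a binary relation $\prec$ on $L$ by
\[
x \prec y \quad \Longleftrightarrow \quad f(x) <_{lex} f(y).
\]
Since $f$ is a bijection $L \to \Sigma^*$ and $<_{lex}$ is a total order on $\Sigma^*$, the relation $\prec$ is a total order on $L$. Moreover, given $x,y \in L$, we can decide $x \prec y$ in polynomial time by computing $f(x)$ and $f(y)$ (possible since $f \in \cc{PF}$ on inputs in $L$) and comparing them under $<_{lex}$, so $\prec \in \cc{PF}$ when restricted to inputs in $L$.

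Next I would verify length-relatedness. Suppose $x \prec y$, so $f(x) <_{lex} f(y)$, which by definition of length-lex gives $|f(x)| \leq |f(y)|$. Because $f \in \cc{PF}$ on $L$, there is a polynomial $p$ with $|f(y)| \leq p(|y|)$. Because $f^{-1} \in \cc{PF}$ on $\Sigma^*$, there is a polynomial $q$ with $|x| = |f^{-1}(f(x))| \leq q(|f(x)|)$. Chaining these gives $|x| \leq q(|f(x)|) \leq q(|f(y)|) \leq q(p(|y|))$, which is polynomial in $|y|$, as required.

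Finally I would compute the ranking function on inputs in $L$. For $x \in L$,
\[
rk_{L,\prec}(x) \;=\; |\{y \in L : y \preceq x\}| \;=\; |\{z \in \Sigma^* : z \leq_{lex} f(x)\}|,
\]
where the second equality uses that $f$ is a bijection $L \to \Sigma^*$ that translates $\preceq$ into $\leq_{lex}$. The right-hand side is exactly the natural number associated with $f(x)$ under the Section~\ref{sec:prelim} correspondence (plus one, depending on convention), which is computable in polynomial time from $f(x)$, hence in polynomial time from $x$. So $rk_{L,\prec} \in \cc{PF}$ on inputs in $L$, showing $L$ is weakly p-rankable under $\prec$.

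There is no real obstacle here; the content of the observation is essentially that a p-equipollence $f\colon L \to \Sigma^*$ \emph{is} a rank-certifying bijection, once one is willing to reorder $L$ by the natural numbers that $f$ assigns. The only thing to be a little careful about is that $\prec$ and the ordering on $\Sigma^*$ we pull back are both length-lex (not lex), so that the length-related condition transfers cleanly via the polynomial honesty of $f$ and $f^{-1}$.
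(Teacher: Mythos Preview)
Your proof is correct and follows essentially the same approach as the paper: pull back $<_{lex}$ through the p-equipollence $f$ to define $\prec$, use the honesty of $f$ (from p-invertibility) together with the polynomial running time of $f$ to get length-relatedness, and read off the rank of $x$ as the natural number associated with $f(x)$. In fact your write-up is slightly more complete than the paper's, which stops after verifying that $\prec$ is length-related and in $\cc{PF}$, leaving the computation of $rk_{L,\prec}$ implicit; your final paragraph makes that step explicit.
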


\begin{proof}
For finite sets this is clear. Let $L$ be an infinite p-countable set, and let $f\colon L \to \Sigma^*$ be a p-equipollence. Define $x \prec y$ by $f(x) <_{lex} f(y)$. It is clear that $\prec$ is computable in $\cc{PF}$ on $L$. Since lexicographic ordering is total on $\Sigma^*$, $\prec$ is total on $L$. Since $f$ is p-invertible, it is honest, so we have $|x| \leq \poly(|f(x)|) \leq \poly(|f(y)|) \leq \poly(|y|)$, where the final inequality follows from the fact that $f$ is computable in polynomial time.
\end{proof}

Next we note that p-cardinality preserves density up to polynomial transformations. Recall that the census function of a language $L$ is $c_L(n) = |L \cap \Sigma^{\leq n}|$. We say that two functions $f,g\colon \N \to \N$ are polynomially related if there are polynomials $p,q$ such that $f(n) \leq g(p(n))$ and $g(n) \leq f(q(n))$ for all $n$.

\begin{lemma} \label{lem:density}
If $\card{A} = \card{B}$, then $c_A(n)$ and $c_B(n)$ are polynomially related.
\end{lemma}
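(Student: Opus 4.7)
The plan is to use the fact that polynomial-time computable functions can only increase input length polynomially, combined with injectivity of the p-equipollence maps.

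Let $f\colon A \to B$ and $g\colon B \to A$ be the partial polynomial-time functions witnessing $\card{A} = \card{B}$, so that $g \circ f = id_A$ and $f \circ g = id_B$. Since both $f$ and $g$ are computable in polynomial time, there are polynomials $p$ and $q$ such that $|f(x)| \leq p(|x|)$ for every $x \in A$ and $|g(y)| \leq q(|y|)$ for every $y \in B$ (the output length of any polynomial-time machine is bounded by its running time).

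First I would observe that the identities $g(f(x)) = x$ and $f(g(y)) = y$ on $A$ and $B$ respectively force $f$ to be injective as a map $A \to B$ and $g$ to be injective as a map $B \to A$. Hence $f$ restricted to $A \cap \Sigma^{\leq n}$ is an injection into $B \cap \Gamma^{\leq p(n)}$, giving $c_A(n) \leq c_B(p(n))$. By the symmetric argument with $g$, we get $c_B(n) \leq c_A(q(n))$. These two inequalities are precisely the definition of $c_A$ and $c_B$ being polynomially related.

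There is no real obstacle here; the lemma is essentially immediate from the definitions once one notices that being a p-equipollence forces injectivity together with a polynomial length bound in both directions. The only minor care needed is to remember that our partial functions are \emph{only} required to be well-behaved on $A$ and $B$, but since the census function itself only counts elements of $A$ (respectively $B$), the behavior of $f$ and $g$ off these sets is irrelevant to the bound.
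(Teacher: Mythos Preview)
Your proof is correct and is essentially identical to the paper's own argument: both use that a p-equipollence gives polynomial length bounds in each direction and that injectivity turns the containment $f(A\cap\Sigma^{\leq n})\subseteq B\cap\Gamma^{\leq p(n)}$ into the census inequality. The only cosmetic difference is that the paper writes the inverse map as $f^{-1}$ rather than $g$, and leaves the injectivity step implicit.
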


\begin{proof}
Let $f\colon A \to B$ be a p-equipollence witnessing $\card{A} = \card{B}$. Let $p$ be a polynomial bounding the running time of $f$ on inputs in $A$, and similarly let $q$ be a polynomial bound on the running time of $f^{-1}|_B$. Then we have
\[
f(A \cap \Sigma^{\leq n}) \subseteq B \cap \Sigma^{\leq p(n)} \qquad \text{ and } \qquad f^{-1}(B \cap \Sigma^{\leq n} )\subseteq A \cap \Sigma^{\leq q(n)}.
\]
The first implies $c_A(n) \leq c_B(p(n))$ and the second implies $c_B(n) \leq c_A(q(n))$. 
\end{proof}

\begin{corollary}
If $L$ is p-countably infinite, then $L$ is exponentially dense, that is, $L$ has at least $2^{\Omega(n^c)}$ strings of length $\leq n$ for some $c > 0$.
\end{corollary}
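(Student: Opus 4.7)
The plan is to reduce the corollary directly to Lemma~\ref{lem:density} by comparing $L$ against $\Sigma^*$, whose census is exponential by inspection. Since $L$ is p-countably infinite, by definition $\card{L} = \card{\Sigma^*}$, so Lemma~\ref{lem:density} applies to the pair $(L, \Sigma^*)$, yielding polynomials $p,q$ and polynomially related census functions $c_L$ and $c_{\Sigma^*}$.

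First I would compute $c_{\Sigma^*}(n) = \sum_{i=0}^{n} |\Sigma|^i = \frac{|\Sigma|^{n+1}-1}{|\Sigma|-1}$, which, since $|\Sigma| \geq 2$, is $2^{\Theta(n)}$. Next, I would unpack the ``polynomially related'' hypothesis to extract the direction we need, namely a polynomial $p$ with $c_{\Sigma^*}(n) \leq c_L(p(n))$ for all $n$. Substituting the estimate for $c_{\Sigma^*}$, this reads $2^{\Omega(n)} \leq c_L(p(n))$.

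Finally, I would reparametrize by letting $m = p(n)$. If $d$ is the degree of $p$ (and, without loss of generality, $p$ has positive leading coefficient), then $n = \Omega(m^{1/d})$, so $c_L(m) \geq 2^{\Omega(m^{1/d})}$, giving the claimed exponential density with $c = 1/d > 0$.

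I do not anticipate any real obstacle here; the one small thing to be careful about is handling the edge case $|\Sigma|=2$ cleanly (so that $c_{\Sigma^*}(n)$ is still $2^{\Theta(n)}$, which it is: $c_{\{0,1\}^*}(n) = 2^{n+1}-1$) and making sure that the degree $d$ in the reparametrization step is extracted from the polynomial produced by Lemma~\ref{lem:density} rather than assumed in advance. Everything else is routine manipulation.
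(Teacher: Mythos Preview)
Your proposal is correct and is exactly the argument the paper intends: the corollary is stated immediately after Lemma~\ref{lem:density} with no separate proof, so the paper is treating it as the routine consequence you describe. The only tiny polish needed in your reparametrization step is to invoke monotonicity of the census function (not every $m$ is literally of the form $p(n)$, but for the largest $n$ with $p(n)\le m$ you get $c_L(m)\ge c_L(p(n))\ge 2^{\Omega(n)}=2^{\Omega(m^{1/d})}$), which is indeed routine.
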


Since p-isomorphism preserves p-cardinality, but p-rankability is not p-isomorphism invariant, we recall Goldsmith \& Homer's notion of scalability \cite{GHscalable}. A language is \emph{p-scalable} if it is p-isomorphic to a strongly p-rankable set. (We would like to introduce the obvious notion of weakly p-scalable and say that p-countability implies weak p-scalability, but we still do not know if the latter in fact holds!)

Combining the preceding results we get

\begin{theorem} \label{thm:ranking}
For any exponentially dense language $L$,
\[
\text{strongly p-rankable}  \Longrightarrow \text{p-scalable} \Longrightarrow  \text{p-countable}  \Longrightarrow  \text{weakly p-rankable w.r.t. $\prec$},
\]
for some length-related total ordering $\prec$ on $L$, computable in $\cc{PF}$.
\end{theorem}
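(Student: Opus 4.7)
The plan is to handle the three implications separately; two are essentially immediate, and the middle one is the only place where the exponential-density hypothesis really gets used. First, strongly p-rankable implies p-scalable because $L \cong^p L$ via the identity, which by definition makes $L$ p-scalable (it is p-isomorphic to a strongly p-rankable set, namely itself). Third, p-countable implies weak p-rankability with respect to some length-related total $\cc{PF}$-computable ordering on $L$: this is exactly the observation stated immediately above the theorem, where one pulls back $<_{lex}$ along any p-equipollence $f\colon L \to \Sigma^*$, after which the induced rank of $x \in L$ is just the natural-number encoding of $f(x)$ and is therefore polynomial-time computable on inputs in $L$.

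For the middle implication, let $L \cong^p L'$ with $L'$ strongly p-rankable. The p-isomorphism restricts to a p-equipollence $L \to L'$, so by Lemma~\ref{lem:density} the census functions of $L$ and $L'$ are polynomially related, and hence $L'$ inherits exponential density from $L$. I will then construct a p-equipollence $g\colon L' \to \Sigma^*$ by sending $x \in L'$ to the string associated (via the $\N \leftrightarrow \Sigma^*$ identification from Section~\ref{sec:prelim}) to the natural number $rk_{L'}(x)$. The forward direction is polynomial-time by strong p-rankability of $L'$, and $g$ is a bijection between $L'$ and $\Sigma^*$. Composing with the p-equipollence $L \to L'$ underlying the p-isomorphism will then yield $\card{L} = \card{\Sigma^*}$, giving that $L$ is p-countable.

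The crux is showing that $g^{-1}$ is polynomial-time computable, and this is where I expect to have to be careful. Given $z \in \Sigma^*$ encoding a natural number $n \leq 2^{O(|z|)}$, I plan to binary-search over $\Sigma^{\leq m}$, using the (monotone non-decreasing) strong ranking function $rk_{L'}$ as an oracle, to find the unique $y \in L'$ with $rk_{L'}(y) = n$. Exponential density guarantees $|L' \cap \Sigma^{\leq m}| \geq 2^{\Omega(m^c)}$ for some constant $c > 0$, so choosing $m = O((\log n)^{1/c}) = \poly(|z|)$ suffices to capture the target string. The binary search then requires only $O(m) = \poly(|z|)$ rank evaluations, each on a string of length $O(m)$ and thus each taking $\poly(|z|)$ time. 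Without the exponential-density hypothesis, $y$ could have length super-polynomial in $|z|$ and this whole construction would collapse, so that hypothesis is doing genuine work precisely here.
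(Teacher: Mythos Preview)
Your proposal is correct and follows essentially the same approach as the paper: the first and third implications are dispatched exactly as you say, and for the middle one both you and the paper invert the strong ranking function via binary search, using exponential density to bound the length of the preimage (the paper phrases this as showing $rk_{L'}$ is honest and cites Goldberg--Sipser for the binary-search inversion). If anything, you are slightly more careful than the paper in explicitly invoking Lemma~\ref{lem:density} to transfer exponential density from $L$ to the strongly p-rankable $L'$, whereas the paper's proof tacitly conflates $L$ with $L'$.
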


So exponential density is necessary to have the same p-cardinality as $\Sigma^*$, and among exponentially dense languages, having the p-cardinality of $\Sigma^*$ sits in between strong and weak p-rankability (for some orderings). We believe the converses of these implications do not hold. 

\begin{proof}
We show the second arrow, the others having been shown above or previously. Goldberg \& Sipser \cite{GoldbergSipser} noted that if a language is strongly p-rankable, then $rk_L^{-1}\colon \Sigma^* \to L$ is computable by binary search in time polynomial in the length of its output.  So it suffices to show that $rk_L$ is honest. By assumption, we have $rk_L(x) \geq 2^{|x|^c}$. Taking bit-lengths of both sides we get $|rk_L(x)| \geq |x|^c -1$, so $|x| \leq |rk_L(x)|^{1/c}$, and thus $rk_L$ is honest.
\end{proof}

\begin{example}
This allows us to give another interesting class of languages exhibiting a difference between p-isomorphism classes and p-cardinalities. Namely, any dense proper subset of $\Sigma^*$ that is p-scalable has the same p-cardinality as $\Sigma^*$, but is not isomorphic to $\Sigma^*$. The reason for the latter is that isomorphic sets have isomorphic complements, but the complement of $\Sigma^*$ is empty, while that of the other language is non-empty by assumption.
\end{example}

\subsubsection{Finite differences.}
We now show that finite differences preserve p-countability (for infinite sets, of course). The first part of this next result, on rankability, is surely known, but we could not find a reference, and its proof serves as a good warm-up for the second part; the second part we believe is new. We use the standard notation $A=^* B$ to mean that $A$ and $B$ differ by at most a finite set.

\begin{theorem} \label{thm:countableDiff}
If $A =^* B$, then
\begin{enumerate}
\item $A$ is strongly (resp., weakly) p-rankable iff $B$ is;

\item $A$ is p-countable iff $B$ is.
\end{enumerate}
\end{theorem}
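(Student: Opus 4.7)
Write $F_1 := A \setminus B$ and $F_2 := B \setminus A$; by hypothesis both are finite, and I will treat them as hardcoded lookup tables inside any algorithm I build.

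For part (1), the starting point is the identity
\[
rk_A(x) = rk_B(x) + |\{y \in F_1 : y \leq_{lex} x\}| - |\{y \in F_2 : y \leq_{lex} x\}|,
\]
valid for every $x \in \Sigma^*$. Since $F_1, F_2$ are finite and fixed, the two correction terms are computable in constant time, so the identity transfers strong p-rankability between $A$ and $B$ in either direction with no further work. For weak rankability, the subtlety is that $rk_B$ is only guaranteed to be computable on inputs in $B$: given $x \in A$, I branch on whether $x \in F_1$ (a finite, hardcoded set), returning the precomputed value of $rk_A(x)$ if so, and otherwise using $x \in A \cap B \subseteq B$ to invoke the weak ranking algorithm for $B$ and then correcting via the identity. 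The reverse direction is symmetric.

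For part (2), the finite case is trivial, so assume $A$ (and hence $B$) is infinite. The plan is to show p-countability is preserved under the two elementary operations of adding or removing a single point, and then to build the full symmetric difference by iterating these $|F_1| + |F_2|$ times. \emph{Remove:} given a p-equipollence $f\colon C \to \Sigma^*$ and $x \in C$, set $y_0 := f(x)$ and define $g\colon C \setminus \{x\} \to \Sigma^*$ by $g(c) := f(c)$ if $f(c) <_{lex} y_0$ and $g(c) := f(c) - 1$ (under the $\Sigma^* \leftrightarrow \N$ identification) otherwise, thereby closing the gap at $y_0$; its inverse is $y \mapsto f^{-1}(y)$ for $y <_{lex} y_0$ and $y \mapsto f^{-1}(y+1)$ otherwise. \emph{Add:} given a p-equipollence $f\colon C \to \Sigma^*$ and $x \notin C$, define $g\colon C \cup \{x\} \to \Sigma^*$ by $g(x) := \epsilon$ and $g(c) := f(c) + 1$ for $c \in C$; its inverse sends $\epsilon \mapsto x$ and $y \mapsto f^{-1}(y-1)$ for $y \neq \epsilon$. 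In both constructions the pivot ($y_0$ or $x$) is a hardcoded constant, so $g$ and $g^{-1}$ are in $\cc{PF}$ on the appropriate domains; note also that the intermediate sets in the iteration remain in $\cc{P}$ (by Proposition~\ref{prop:dtt}), as required.

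I do not anticipate a serious obstacle. The only care needed is the domain-restriction point in the weak-rankability direction of (1), and verifying that a chain of $|F_1| + |F_2|$ add/remove constructions still runs in polynomial time---but each step only prepends an $O(1)$-time pre/post-processing layer, and the number of steps is a constant determined by $A$ and $B$, so the cumulative overhead is at most a constant multiplicative and additive factor on the original running time.
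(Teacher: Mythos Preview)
Your proposal is correct and follows essentially the same approach as the paper: both arguments hardcode the finite symmetric difference and use additive shifts, with the paper's ``shift function'' $\sigma$ being exactly your correction term $|\{y\in F_1:y\le_{lex}x\}|-|\{y\in F_2:y\le_{lex}x\}|$, and the paper's batched add-$P$/remove-$N$ constructions for part~(2) being the same piecewise-linear shifts you build one element at a time. Your explicit handling of the weak-rankability case (branching on $x\in F_1$ before invoking the partial ranking algorithm) is in fact more careful than the paper, which states the identity $rk_B(x)=rk_A(x)+\sigma(x)$ and leaves that domain issue implicit.
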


\begin{proof}
Suppose $A =^* B$. If they are finite they are both strongly p-rankable and both p-countable, so we may now assume they are infinite.

(1) We will define a ``shift function'' $\sigma(x)$ which tells us how much to shift as we move from $A$ to $B$. We define $\sigma$ inductively by
\[
\sigma(x) = \sigma(x-1) + B(x) - A(x)
\]
where $x-1$ denotes the immediate predecessor of $x$ in length-lexicographic order, and in case $x = \epsilon$ (the empty string), we simply define by convention $\sigma(``\epsilon-1") := 0$. 

Write $B = (A \cup P) \backslash N$ where $P$ is a finite set disjoint from $A$, and $N$ is a finite subset of $A$. Let $\{z_1, \dotsc, z_{s+t}\} = P \cup N$ with $z_1 \leq_{lex} z_2 \leq_{lex} \dotsb \leq_{lex} z_{s+t}$. First, $\sigma$ is constant on each range $[\epsilon, z_1), [z_1, z_2), \dotsc, [z_{s+t-1}, z_{s+t}), [z_{s+t}, \infty)$. Next,  $\sigma$ can be computed in polynomial time: the $z_i$ and the values of $\sigma$ on each of the preceding ranges can be hardcoded, so that on input $x$ an algorithm merely decides which of these $O(1)$ ranges $x$ is in, and then outputs the right (hard-coded) value. By construction, we have $rk_B(x) = rk_A(x) + \sigma(x)$ for all strings $x$, giving the first statement.

(2) Suppose $A$ is p-countably infinite (the finite case was handled in the first paragraph), and $A=^* B$. Let $f \colon \N \to A$ be a p-equipollence. We will define a p-equipollence $g\colon \N \to B$ by shifting $f$, similar to the above. Let $B = (A \cup P) \backslash N$, where $P$ is a finite set disjoint from $A$ and $N$ is a finite subset of $A$. 

First we show that $B' = A \cup P$ is p-countable. The idea is to first enumerate $P$, then shift up the enumeration of $A$ by $|P|$. More formally, we define $g'(0), g'(1), \dotsc, g'(|P|-1)$ to bijectively enumerate $P$, and then $g'(i) = f(i-s)$ for all $i \geq s$. It is readily verified that $g'\colon \N \to B'$ is a p-equipollence (to see p-computability and p-invertibility, note that those first $|P|$ values can be hard-coded).

Next, we show that $B = B' \backslash N$ is p-countable. Let $N = \{g'(j_1), \dotsc, g'(j_t)\}$ with $j_1 < j_2 < \dotsb < j_t$. The idea is to use the same p-equipollence $g'$, but to shift it at appropriate points to omit $g'(j_1), \dotsc, g'(j_t)$. Let $h \colon \N \to \N \backslash \{j_1, \dotsc, j_t\}$ be the bijection such that $h(i)$ is the $i$-th smallest element of $\N \backslash \{j_1, \dotsc, j_t\}$. It is readily verified that $g = g' \circ h$ is an equipollence $\N \to B$. To see that it is p-computable and p-invertible, it suffices to show these for $h$, since $g'$ is already a p-equipollence. For both of these, note that $h$ is piecewise linear with only finitely many breakpoints. By hard-coding those breakpoints and their jump values (NB: it could jump more than one, if some of the $j_k$ are consecutive integers), $h$ is p-computable. Since $h$ is also strictly increasing, computing its inverse amounts to figuring out which of the intervals of linearity the input is in, and then inverting the linear function on that interval. Again, as there are only finitely many such intervals, the necessary information can be hard-coded and the remaining computation is then easy.
\end{proof}

This raises the question of whether, for infinite sets, two sets that are finitely different must always be p-equipollent. We thank an anonymous reviewer for a suggested construction, a slight modification of which led us to the following resolution of this question in the negative:

\begin{theorem} \label{thm:diff}
There is an infinite language $A \in \cc{P}$ such that
\begin{enumerate}
\item $A$ is not p-equipollent with any of its proper subsets (it is Dedekind p-finite, see Definition~\ref{def:Dedekind}); 

\item For any two languages $B,B'$ that are both finitely different from $A$, we have $\card{B}=\card{B'}$ if and only if $|B \backslash A| - |A \backslash B| = |B' \backslash A| - |A \backslash B'|$. In particular, not all sets finitely different from $A$ are p-equipollent to $A$; and

\item For any two languages $B,B'$ that are both finitely different from $A$, we have $\card{B} \preceq \card{B'}$ iff $|B \backslash A| - |A \backslash B| \leq |B' \backslash A| - |A \backslash B'|$.
\end{enumerate}
\end{theorem}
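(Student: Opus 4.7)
The plan is to build $A$ extremely sparse, with gaps growing faster than any polynomial, so that any polynomial-time partial bijection between finite modifications of $A$ is forced to act by the identity on a cofinite tail. Concretely, set $t_0 := 1$ and $t_{k+1} := 2^{t_k}$, let $a_k := 0^{t_k}$, and take $A := \{a_k : k \in \N\}$. Deciding $A$ in polynomial time is easy: on input $x$, iteratively compute $t_0, t_1, \ldots$, halting as soon as a value exceeds $|x|$ (only $O(\log^* |x|)$ iterations are needed, and each intermediate value fits in at most $|x|$ bits), then check whether $|x|$ equals some $t_k$ and whether $x = 0^{|x|}$. The key sparsity fact is that $|a_{k+1}| = 2^{|a_k|}$ beats every polynomial: for every polynomial $p$, there exists $N_p$ such that $p(|a_n|) < |a_{n+1}|$ for all $n \geq N_p$. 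For any $M$ finitely different from $A$, let $d(M) := |M \backslash A| - |A \backslash M|$; then for all sufficiently large $n$ we have $|M \cap \Sigma^{\leq |a_n|}| = (n+1) + d(M)$.

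For the ``only if'' direction of (2), suppose $f\colon B \to B'$ is a p-equipollence whose running time is bounded by the polynomial $p$, with $B$ and $B'$ both finitely different from $A$. Pick $n$ large enough that $p(|a_n|) < |a_{n+1}|$, both $B, B'$ agree with $A$ above length $|a_n|$, and the above census identities hold. Then $f$ injectively maps $B \cap \Sigma^{\leq |a_n|}$, of size $(n+1)+d(B)$, into $B' \cap \Sigma^{\leq p(|a_n|)} \subseteq B' \cap \Sigma^{< |a_{n+1}|}$, of size $(n+1)+d(B')$; thus $d(B) \leq d(B')$, and applying the same argument to the polynomial-time inverse of $f$ gives $d(B) = d(B')$. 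For the ``if'' direction with $d(B) = d(B')$, choose $N$ large enough that $B, B'$ both agree with $A$ on $\Sigma^{\geq |a_N|}$; the two sets $B \cap \Sigma^{<|a_N|}$ and $B' \cap \Sigma^{<|a_N|}$ are finite with common size $N + d(B)$, so any hard-coded bijection between them, extended by the identity on the shared tail $\{a_N, a_{N+1}, \ldots\}$ (and arbitrarily elsewhere), gives a polynomial-time and polynomial-time invertible equipollence $B \to B'$.

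Parts (1) and (3) fall out quickly from the same framework. For (1): if $f\colon A \to A'$ is a p-equipollence with $A' \subseteq A$, the same sparsity argument pins $f(a_n) \in \{a_0,\ldots,a_n\}$, and the symmetric bound on $f^{-1}$ gives $f^{-1}(a_{m(n)}) \in \{a_0,\ldots,a_{m(n)}\}$, forcing $f(a_n) = a_n$ for all $n$ past some $N$; injectivity of $f$ on $\{a_0,\ldots,a_{N-1}\}$ then forces $A' \supseteq \{a_0,\ldots,a_{N-1}\}$ as well, so $A' = A$. For (3), the ``only if'' direction is exactly the argument from (2) applied to a p-equipollence $B \to C$ for some $C \subseteq B'$ (the containment $C \cap \Sigma^{\leq p(|a_n|)} \subseteq B' \cap \Sigma^{<|a_{n+1}|}$ still suffices to yield $d(B) \leq d(B')$); for the ``if'' direction, given $d(B) \leq d(B')$, delete any $d(B') - d(B)$ elements of $B'$ to obtain $C \subseteq B'$ with $d(C) = d(B)$---removing an element from $B'$ decreases $d$ by exactly one, whether or not that element lies in $A$---whence $\card{C} = \card{B}$ by (2).

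The main obstacle is calibrating the construction so that the sparsity defeats \emph{every} polynomial runtime that can arise in some p-equipollence; mere exponential gaps are insufficient, which is why we take tower-type growth. Meanwhile $A$ must still belong to $\cc{P}$, and the tower construction threads this needle precisely because membership testing only needs to compute and compare the $O(\log^* |x|)$ tower values below $|x|$, without ever materializing any of the larger ones.
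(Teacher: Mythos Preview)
Your proposal is correct and takes essentially the same approach as the paper: choose $A$ so sparse that consecutive string lengths are super-polynomially separated (the paper uses $\{1^{2^{2^{2^k}}}\}$, with gap $\ell \mapsto \ell^{\log \ell}$; you use the tower $t_{k+1}=2^{t_k}$, with gap $\ell \mapsto 2^{\ell}$), which forces any p-equipollence between finite variants of $A$ to act as the identity on a cofinite tail. The remaining finite counting argument and the treatment of parts (1)--(3) are then structurally the same, with only cosmetic differences (e.g.\ for (3)($\Leftarrow$) you delete elements and invoke (2), whereas the paper hard-codes an injection directly).
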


In particular, for this $A$, the preorder $\preceq$ on the p-cardinals of sets finitely different from $A$ is in fact a total order of the same order type as the integers. Although part (1) of the theorem already answers the question of whether finite differences preserve p-cardinality, the other two parts follow a similar proof, and will have an interesting consequence below (Corollary~\ref{cor:counterex}). 

\begin{proof}
Let $A = \{1^{2^{2^{2^{k}}}} : k \in \N \}$. Note that if $\ell = 2^{2^{2^k}}$ is the length of a string in $A$, the next longest string has length exactly $\ell^{\log_2 \ell}$ (this is why we used a tower of three exponentials; if we only use a tower of two exponentials, instead of $\ell^{\log \ell}$, we get $\ell^2$ here).

(1) Suppose $f \colon A \to A'$ is a p-equipollence to a proper subset $A' \subsetneq A$. Let $n_0$ be such that for all $n \geq n_0$, the running time of $f$ on strings of length $\ell$ is strictly less than $\ell^{\log \ell}$. Then $f$ cannot map any string in $A$ of length at most $n_0$ to a string of length $> n_0$, so we must have that $f$ gives a bijection between $A \cap \Sigma^{\leq n_0}$ and $A' \cap \Sigma^{\leq n_0}$, and since $A' \subseteq A$, in fact $A'$ must agree with $A$ on $\Sigma^{\leq n_0}$. 

But on strings of length $\geq n_0$, $f$ must be length non-increasing, since for such a string, $f(x)$ does not have time to write out any longer string in $A$. But since $f$ was already a bijection from the strings in $A$ of length $\leq n_0$ to themselves, any $x \in A$ of length $\geq n_0$ has nowhere to go but itself. Thus on strings of length $\geq n_0$, $f$ must be the identity. Therefore $A' = A$.

(2) Let $B,B'$ be finitely different from $A$. 

($\Rightarrow$) Suppose $\card{B} = \card{B'}$; let $f \colon B \to B'$ be a p-equipollence. Let $n_0$ denote one more than the maximum length of any string $x$ such that $B(x) \neq A(x)$ or $B'(x) \neq A(x)$. Note that, above length $n_0$, $B,B'$ agree with $A$, so they have the same $\ell^{\log \ell}$ gap between their elements. 
Thus, as above, there exists some $n_1 \geq n_0$ such that neither $f$ nor $f^{-1}$ can map any string of length $\leq n_1$ to a string of length $> n_1$. But then $f$ must map $B \cap \Sigma^{\leq n_1}$ bijectively to $B' \cap \Sigma^{\leq n_1}$ (with inverse $f^{-1}$ restricted to $B' \cap \Sigma^{\leq n_1}$. 

Now, let $P = B \backslash A, N = A \backslash B$, and similarly define $P',N'$ with $B'$ in place of $B$. Then $f|_{\Sigma^{\leq n_1}}$ gives a bijection from $[(A \cup P) \backslash N] \cap \Sigma^{\leq n_1}$ to $[(A \cup P') \backslash N'] \cap \Sigma^{\leq n_1}$. As $P,N,P',N' \subseteq \Sigma^{\leq n_1}$---as part of the definition of $n_1$---and these are finite sets, and $P,P'$ are disjoint from $A$ and $N,N'$ are subsets of $A$, we necessarily have $|P| - |N| = |P'| - |N'|$.

($\Leftarrow$) Let $n_0, P, P', N, N'$ be as above, and suppose $|P|-|N| = |P'|-|N'|$. The following map $f$ is a p-equipollence $B \to B'$. On strings of length $\leq n_0$, $f$ implements an arbitrary, hard-coded bijection from $[(A \cup P) \backslash N] \cap \Sigma^{\leq n_0}$ to $[(A \cup P') \backslash N'] \cap \Sigma^{\leq n_0}$, as these are both finite sets of the same cardinality. On strings of length $> n_0$, $f$ is the identity.

(3) Similar to (2). For the ($\Rightarrow$) direction, let $f \colon B \to B'' \subseteq B'$ be a p-equipollence. As above, $f$ must be the identity map for all sufficiently long strings in $B$, say above some length $n_1$. But then $f$ gives a bijection $B \cap \Sigma^{\leq n_1} \to B'' \cap \Sigma^{\leq n_1} \subseteq B \cap \Sigma^{\leq n_1}$. As before, counting these finite sets then tells us that $|B \backslash A| - |A \backslash B| \leq |B' \backslash A| - |A \backslash B'|$. For the ($\Leftarrow$) direction, a construction similar to part (2) works, where the hard-coded part of $f$ is now an injection rather than a bijection.
\end{proof}

\begin{question} \label{q:finite}
If $A,B$ are infinite languages and $A=^* B$ with at least $n^c$ strings of length $\leq n$ (for some $c > 1$ and all sufficiently large $n$), must $\card{A} = \card{B}$? What if additionally $A \subsetneq B$ with $B \backslash A$ finite? What if, additionally, $A \in \cc{P}$?
\end{question}

Actually, the set in Theorem~\ref{thm:diff} does not even have logarithmic density!

\subsection{Arithmetic of p-cardinals}
Since many versions of AC are about cardinal arithmetic, we develop arithmetic of p-cardinals.

\begin{proposition}[p-cardinal arithmetic] \label{prop:arithmetic}
If $\card{A} = \card{A'}$ and $\card{B} = \card{B'}$, then $\card{A \oplus B} = \card{A' \oplus B'}$ and $\card{A \times B} = \card{A' \times B'}$. If $\card{A} \preceq \card{A'}$ and $\card{B} \preceq \card{B'}$, then $\card{A \oplus B} \preceq \card{A' \oplus B'}$ and $\card{A \times B} \preceq \card{A' \times B'}$. 

Defining $\card{A} + \card{B} = \card{A \oplus B}$ and $\card{A} \times \card{B} := \card{A \times B}$, the p-cardinals form a pre-ordered semi-ring with additive identity $\card{\emptyset}$ and multiplicative identity $\card{\{\epsilon\}}$.

Furthermore:
\begin{enumerate}
\item For all $A$, $\card{\emptyset} \times \card{A} = \card{\emptyset}$.

\item For all $A$, $\card{\emptyset} \preceq \card{A} \preceq \card{\Sigma^*}$.

\item If $A$ is infinite, then $\card{A} \succ \card{F}$ for any finite set $F$.

\item For all $A,B$, $\card{A} \preceq \card{A} + \card{B}$.

\item For all $A$ and for all nonempty $B$, $\card{A} \preceq \card{A} \times \card{B}$.

\item The p-cardinalities of finite sets form a totally ordered sub-semiring that is isomorphic as an ordered semiring to $(\N,+,\times,\leq,0,1)$. 

\item For all finite non-empty languages $n$ and all languages $A$, we have $\card{n \times A} = \card{A \oplus A \oplus \dotsb \oplus A}$, where the latter summation has $|n|$ summands.

\end{enumerate}

\end{proposition}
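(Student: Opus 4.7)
I plan to prove each assertion by exhibiting an explicit p-equipollence, composing or tagging the given p-equipollences in natural ways and checking that polynomial-time computability and invertibility are preserved on the sets in question. Because the four arithmetic statements at the top of the proposition all have the same flavor (and the pre-ordered semi-ring axioms follow from them), I will handle those first, and then items (1)--(7).

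\textbf{Compatibility of $\oplus$ and $\times$ with $\card{\cdot}$.} Given p-equipollences $f\colon A \to A'$ and $g\colon B \to B'$, I will define $h_\oplus\colon A \oplus B \to A' \oplus B'$ by $h_\oplus(0a) := 0\,f(a)$ and $h_\oplus(1b) := 1\,g(b)$, and $h_\times\colon A \times B \to A' \times B'$ by $h_\times(a,b) := (f(a), g(b))$; the inverses use $f^{-1}, g^{-1}$ in the same patterns. Since tagging, untagging, pairing, and unpairing are all in $\cc{FP}$, both $h_\oplus$ and $h_\times$ and their inverses lie in $\cc{FP}$ on the intended domains; possible misbehavior of $f$ or $g$ outside $A$ or $B$ is irrelevant. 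The monotonicity claim $\card{A} \preceq \card{A'}, \card{B} \preceq \card{B'} \Rightarrow \card{A \oplus B} \preceq \card{A' \oplus B'}$ (and similarly for $\times$) follows by applying the equality case to the equipollences onto the subsets witnessing $\preceq$.

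\textbf{Semi-ring axioms and items (1)--(7).} Commutativity and associativity of $\oplus$ and $\times$, and distributivity, all follow from standard total bijections (e.g.\ $(a,b) \mapsto (b,a)$, $((a,b),c) \mapsto (a,(b,c))$, and $(a, 0b) \mapsto 0(a,b)$, $(a,1c) \mapsto 1(a,c)$ for left-distributivity), each in $\cc{FP}$ with $\cc{FP}$ inverse. The empty set is the additive identity vacuously, and $a \mapsto (\epsilon, a)$ witnesses that $\card{\set{\epsilon}}$ is the multiplicative identity. For (1), $\emptyset \times A = \emptyset$ literally. For (2), $\emptyset \subseteq A \subseteq \Sigma^*$, with the identity restriction serving as a p-equipollence in each case. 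For (3), fix $|F|$ distinct elements of the infinite set $A$ and hardcode a bijection to $F$, giving $\card{F} \preceq \card{A}$; equality would require a set-theoretic bijection between a finite and an infinite set. For (4), $a \mapsto 0a$ is a p-equipollence $A \to 0A \subseteq A \oplus B$. For (5), fix any $b_0 \in B$ (hardcoded as a constant string) and use $a \mapsto (a,b_0)$. For (6), any bijection between finite sets of equal cardinality is a finite lookup table, hence in $\cc{FP}$ with $\cc{FP}$ inverse, so p-cardinality on finite sets coincides with usual cardinality; disjoint union and Cartesian product then reduce to $+$ and $\times$ on $\N$, and $\preceq$ to $\leq$. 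For (7), enumerate the hardcoded $n = \set{n_1, \dotsc, n_{|n|}}$ and send $(n_i, a) \mapsto t_i a$, where $t_1, \dotsc, t_{|n|}$ are the (finitely many) tags used in the $|n|$-fold tagged disjoint union $A \oplus A \oplus \dotsb \oplus A$; this is a hardcoded bijection on the first coordinate and the identity on the second, hence in $\cc{FP}$ with $\cc{FP}$ inverse.

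\textbf{Main obstacle.} There is no serious obstacle here; the proof is a sequence of routine verifications. The only subtlety I must keep in mind throughout is that a p-equipollence $f\colon A \to A'$ need not behave well outside $A$, so when I chain or pair such maps I must check that on the intended domain the combined map is still correct and its inverse is still computable in polynomial time. Because every tagging/pairing/projection operation appearing above is a total injection on $\Sigma^*$ that is already in $\cc{FP}$ with $\cc{FP}$ inverse, correctness on the intended domains is inherited directly from the constituent p-equipollences, and no care beyond noting this is needed.
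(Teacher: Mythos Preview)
Your proposal is correct and follows essentially the same approach as the paper: the paper defines $f \oplus g$ and $f \times g$ exactly as your $h_\oplus$ and $h_\times$, verifies the identities for $\card{\emptyset}$ and $\card{\{\epsilon\}}$, and then explicitly leaves items (1)--(7) ``as exercises for the reader.'' Your write-up is in fact more complete than the paper's, since you spell out the semi-ring axioms and the seven numbered items; the only minor organizational difference is that the paper handles the $\preceq$ case first (taking $A'' \subseteq A'$, $B'' \subseteq B'$) and then specializes to equality, whereas you do the reverse.
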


\begin{proof}
Suppose $f\colon A \to A''$ is a p-equipollence for some $A'' \subseteq A'$ and $g\colon B \to B''$ is a p-equipollence for some $B'' \subseteq B'$. Then $f \oplus g$, defined by 
\[
(f \oplus g)(bx)  =  \begin{cases}
f(x) & \text{ if } b = 0 \\
g(x) & \text{ if } b = 1
\end{cases}
\]
is a p-equipollence $A \oplus B \to A'' \oplus B'' \subseteq A' \oplus B'$, with inverse $f^{-1} \oplus g^{-1}$. Similarly, $f \times g$, defined by $(f \times g)(x,y) = (f(x), g(y))$, is a p-equipollence $A \times B \to A'' \times B'' \subseteq A' \times B'$ with inverse $f^{-1} \times g^{-1}$. This gives us the desired statements for $\preceq$; since $\preceq$ is only a pre-order, the statements for $=$ do not immediately follow. However, if in the proof we take $A'' = A'$ and $B''=B'$ then we get the desired statements for $=$.

Finally, note that $A \oplus \emptyset = \{0x : x \in A\}$, which is easily seen to have the same p-cardinality as $A$. Similarly, $A \times \{b \} = \{(a,b) : a \in A\}$ is again easily seen have the same p-cardinality as $A$. 

We leave the further properties as exercises for the reader.
\end{proof}

\begin{remark} \label{rmk:equiv}
Note that the arithmetic operations and comparisons are also well-defined on the $\equiv$-equivalence classes of p-cardinals (where $\card{A} \equiv \card{B}$ if $\card{A} \preceq \card{B}$ and $\card{B} \preceq \card{A}$), in which case the result is a partially ordered semi-ring, not just pre-ordered. 
\end{remark}

\begin{observation}
$\card{\Sigma^*} = \card{\Sigma^*} + \card{\Sigma^*}$ and $\card{\Sigma^*} = \card{\Sigma^*} \times \card{\Sigma^*}$. 
\end{observation}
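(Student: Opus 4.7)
The plan is to exhibit explicit p-equipollences in each case. Since both $\Sigma^* \oplus \Sigma^*$ and $\Sigma^* \times \Sigma^*$ are subsets of string spaces over a finite alphabet and we seek p-equipollences from all of $\Sigma^*$, it suffices in each case to produce a total, p-computable, p-invertible bijection from $\Sigma^*$ onto the target set; this trivially satisfies the partial-function conditions in the definition of p-equipollence.

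For the product, the claim is immediate from the preliminaries: we already use a fixed polynomial-time computable and polynomial-time invertible bijection $\Sigma^* \to \Sigma^* \times \Sigma^*$ (``standard pairing function''), which is by definition a p-equipollence $\Sigma^* \to \Sigma^* \times \Sigma^*$, so $\card{\Sigma^*} = \card{\Sigma^* \times \Sigma^*} = \card{\Sigma^*} \times \card{\Sigma^*}$.

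For the sum, I would build a bijection via the natural-number correspondence $N\colon \Sigma^* \to \N$ from Section~\ref{sec:prelim}. The idea is that $\Sigma^* \oplus \Sigma^* = \{0x : x \in \Sigma^*\} \cup \{1x : x \in \Sigma^*\}$ is naturally indexed by $\{0,1\} \times \N$ (via $bx \mapsto (b,N(x))$), while $\Sigma^*$ is indexed by $\N$; the map $n \mapsto (n \bmod 2, \lfloor n/2\rfloor)$ is a bijection $\N \to \{0,1\} \times \N$, p-computable and p-invertible in both directions since it is just reading/appending a low-order bit. Explicitly, define $\phi\colon \Sigma^* \to \Sigma^* \oplus \Sigma^*$ by
\[
\phi(x) = \begin{cases} 0 \cdot N^{-1}(N(x)/2) & \text{if } N(x) \text{ is even} \\ 1 \cdot N^{-1}((N(x)-1)/2) & \text{if } N(x) \text{ is odd,}\end{cases}
\]
with inverse $\phi^{-1}(by) = N^{-1}(2 N(y) + b)$ for $b \in \{0,1\}$. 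Both $N$ and $N^{-1}$ are in $\cc{FP}$, and arithmetic on the representations of natural numbers is polynomial-time, so $\phi$ and $\phi^{-1}$ are in $\cc{FP}$. Hence $\card{\Sigma^*} = \card{\Sigma^* \oplus \Sigma^*} = \card{\Sigma^*} + \card{\Sigma^*}$.

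There is no real obstacle here; the only thing to double-check is that the map genuinely lands in the set-theoretic disjoint union $\{0x\} \cup \{1x\}$ (i.e., that we literally prepend a symbol $0$ or $1$, not merely tag abstractly), which the definition above ensures. The two statements together of course also follow quickly from the absorption-style properties in Proposition~\ref{prop:arithmetic} combined with p-countability, but the direct construction above is cleanest.
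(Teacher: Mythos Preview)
Your proposal is correct and is essentially identical to the paper's own proof: both use the standard pairing function for the product, and for the sum both use the parity-split map $x \mapsto b\,N^{-1}(\lfloor N(x)/2\rfloor)$ (where $b$ is the parity of $N(x)$), with inverse $by \mapsto N^{-1}(2N(y)+b)$. The only difference is notational (the paper writes $n_x$ and $\sigma_n$ for your $N$ and $N^{-1}$).
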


\begin{proof}
Let $n_x$ denote the natural number associated to $x$ (see Preliminaries), and let $\sigma_n$ denote the string associated to the natural number $n$. 
The bijection $\Sigma^* \to \Sigma^* \oplus \Sigma^*$ is given by 
\[
f(x) = \begin{cases}
0\sigma_{n_x/2} & \text{ if $n_x$ is even} \\
1\sigma_{(n_x-1)/2} & \text{ if $n_x$ is odd}
\end{cases}
\]
with inverse 
\[
f^{-1}(by) = \begin{cases}
\sigma_{2n_y} & \text{ if } b = 0 \\
\sigma_{2n_y+1} & \text{ if } b = 1
\end{cases}.
\]

For the direct product, we use any of the standard bijections $\Sigma^* \to \Sigma^* \times \Sigma^*$ (for example, using the bijections $n,\sigma$ above to translate to $\N$ and using the standard bijections $\N \to \N \times \N$). 
\end{proof}

\begin{corollary} \label{cor:cylinder}
If $A$ is p-countable, then
\[
A \cong^p \overline{A} \qquad \Longleftrightarrow \qquad \overline{A} \text{ is p-countable},
\]
and if these hold, then $A \cong^p A \times \Sigma^*$ (i.\,e., $A$ is a p-cylinder).
\end{corollary}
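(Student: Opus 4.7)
The plan is to reduce both parts of the corollary to Corollary~\ref{cor:GHK} together with the arithmetic of p-cardinals (Proposition~\ref{prop:arithmetic} and the observation immediately preceding the corollary). I will focus on the case where $A$ is infinite, as the finite case is degenerate.

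For the forward direction of the iff, I would invoke the first proposition of Section~\ref{sec:iso}, which gives $\card{A} = \card{B}$ and $\card{\overline{A}} = \card{\overline{B}}$ whenever $A \cong^p B$; specializing to $B = \overline{A}$ immediately yields $\card{\overline{A}} = \card{A} = \card{\Sigma^*}$, so $\overline{A}$ is p-countable. For the reverse direction, suppose both $A$ and $\overline{A}$ are p-countably infinite, and apply Corollary~\ref{cor:GHK} with $B := \overline{A}$: the four sets $A, \overline{A}, B, \overline{B}$ reduce to $A$ and $\overline{A}$ (each counted twice), both p-countably infinite by hypothesis, so the corollary delivers $A \cong^p \overline{A}$.

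For the p-cylinder conclusion, assuming both conditions of the iff hold, I would apply Corollary~\ref{cor:GHK} a second time, now with $B := A \times \Sigma^*$. Proposition~\ref{prop:arithmetic} together with the observation $\card{\Sigma^*} = \card{\Sigma^*} \cdot \card{\Sigma^*}$ gives $\card{A \times \Sigma^*} = \card{A} \cdot \card{\Sigma^*} = \card{\Sigma^*} \cdot \card{\Sigma^*} = \card{\Sigma^*}$, so $A \times \Sigma^*$ is p-countable. For its complement, the distributive identity $\overline{A \times \Sigma^*} = \overline{A} \times \Sigma^*$ (valid in $\Sigma^* \times \Sigma^*$ and transported to $\Sigma^*$ through the standard pairing bijection) combined with the same arithmetic yields $\card{\overline{A \times \Sigma^*}} = \card{\overline{A}} \cdot \card{\Sigma^*} = \card{\Sigma^*}$. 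Both are p-countably infinite, so Corollary~\ref{cor:GHK} gives $A \cong^p A \times \Sigma^*$.

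I do not expect a genuine obstacle: the entire argument is just an assembly of results already in hand. The only point requiring any care is the identification $\overline{A \times \Sigma^*} = \overline{A} \times \Sigma^*$ under the pairing convention, but this is immediate from the fact that the pairing is a bijection, so complementation in $\Sigma^*$ corresponds to complementation in $\Sigma^* \times \Sigma^*$, where $(A \times \Sigma^*)^c = \overline{A} \times \Sigma^*$ is obvious.
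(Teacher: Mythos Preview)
Your argument is correct. For the equivalence, your approach coincides with the paper's: the forward direction uses that p-isomorphism preserves p-cardinality, and the backward direction is Corollary~\ref{cor:GHK} with $B=\overline{A}$.

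For the p-cylinder conclusion you take a slightly different route than the paper. The paper first establishes $\card{A}=\card{A\times\Sigma^*}$ and $\card{\overline{A}}=\card{\overline{A}\times\Sigma^*}$ via the same cardinal arithmetic you use, but then \emph{explicitly} splices the two p-equipollences $f\colon A\to A\times\Sigma^*$ and $g\colon\overline{A}\to\overline{A}\times\Sigma^*$ into a total p-computable bijection $h$ (using that $A\in\cc{P}$), essentially repeating the construction behind Corollary~\ref{cor:GHK}. You instead treat Corollary~\ref{cor:GHK} as a black box and apply it a second time with $B=A\times\Sigma^*$, after verifying that $A,\overline{A},A\times\Sigma^*,\overline{A\times\Sigma^*}$ are all p-countably infinite. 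This is cleaner and avoids duplicating the earlier construction; the paper's version is more self-contained and makes the role of $A\in\cc{P}$ visible. Both rely on the same identification $\overline{A\times\Sigma^*}=\overline{A}\times\Sigma^*$, which you correctly flag. Your dismissal of the finite case is in line with the paper, which also (tacitly) works only in the infinite case.
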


\begin{question} \label{q:cylinder}
What can be said if merely $\card{A} = \card{A \times \Sigma^*}$ and $\card{\overline{A}} = \card{\overline{A} \times \Sigma^*}$?
\end{question}

\begin{proof}
Suppose $A$ is p-countable. The forward direction follows from the fact that p-isomorphism preserves p-countability. The backward direction follows from Corollary~\ref{cor:GHK} using $A=B$.

For the final statement, suppose that both $A,\overline{A}$ are p-countable. The idea of the proof is similar to that of Goldsmith, Kunen, and Hemaspaandra \cite{GHK} and Corollary~\ref{cor:GHK}. Since $\card{A} = \card{\Sigma^*}$ and  $\card{\Sigma^*} = \card{\Sigma^*} \times \card{\Sigma^*}$, we may replace the LHS and one of the ones on the RHS with $A$ to get $\card{A} = \card{A} \times \card{\Sigma^*} = \card{A \times \Sigma^*}$. Similarly for $\overline{A}$. Now let $f\colon A \to A \times \Sigma^*$ and $g\colon \overline{A} \to \overline{A} \times \Sigma^*$ be p-equipollences. Then the function
\[
h(x) = \begin{cases}
f(x) & x \in A \\
g(x) & x \notin A.
\end{cases}
\]
is a p-computable bijection $\Sigma^* \to \Sigma^* \times \Sigma^*$, since $A \in \cc{P}$. To see that it is a bijection, note that $\overline{A \times \Sigma^*} = \overline{A} \times \Sigma^*$. The inverse of $h$ is given similarly by the p-computable bijection
\[
h^{-1}(y) = \begin{cases}
f^{-1}(y) & y \in A \times \Sigma^* \\
g^{-1}(y) & y \notin A \times \Sigma^*.
\end{cases}
\]
Thus $A \cong A \times \Sigma^*$, so $A$ is a cylinder.
\end{proof}

We note that the set of \emph{all} p-cardinalities is uncountably infinite: there are uncountably many subsets of $\Sigma^*$, and each p-cardinality class has countably many representatives, since there are only countably many Turing machines (even ones that are only partial) to compute p-equipollences. However, even in interesting countable sets of p-cardinalities, we believe:

\begin{conjecture} \label{conj:not_fin_gen}
For any standard complexity class $\mathcal{C} \supseteq \cc{P}$ (such as $\cc{P}, \cc{NP}, \cc{PSPACE}, \cc{EXP}$, etc.), the semiring of p-cardinalities of languages in $\mathcal{C}$ is not finitely generated. The same for the semiring of equivalence class of p-cardinalities (see Remark~\ref{rmk:equiv}).
\end{conjecture}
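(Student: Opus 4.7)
The plan is to extract from Lemma~\ref{lem:density} a p-cardinality invariant that takes infinitely many values on $\cc{P}$ yet only finitely many on any finitely generated sub-semiring. For each language $A$ define its \emph{log-density type} $\lambda(A)$ to be the equivalence class of $n \mapsto \log c_A(n)$ under the polynomial-relatedness relation used in Lemma~\ref{lem:density}. That lemma, together with the definition of $\preceq$ (and a short unfolding using $B' \subseteq B \Rightarrow c_{B'} \leq c_B$), shows $\lambda$ is an invariant of $\card{A}$ and even of its $\equiv$-equivalence class, so $\lambda$ descends to both semirings in the conjecture.

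Next I would compute how $\lambda$ behaves under the two semiring operations. From $c_{A \oplus B}(n) = c_A(n-1) + c_B(n-1)$ one obtains $\lambda(A \oplus B) = \max(\lambda(A), \lambda(B))$ in the partial order induced by polynomial-relatedness, and from the bounds $c_A(\lfloor n/2 \rfloor - O(\log n)) \cdot c_B(\lfloor n/2 \rfloor - O(\log n)) \leq c_{A \times B}(n) \leq c_A(n) \cdot c_B(n)$ (using standard pairing) one obtains $\lambda(A \times B) = \lambda(A) + \lambda(B)$, where $+$ is pointwise addition of representatives. A small auxiliary lemma---valid whenever the representative grows at least as fast as $\log n$---shows $2\ell$ is polynomially related to $\ell$: for $\ell(n) = (\log n)^m$ with $m \geq 1$ one has $\ell(n^c) = c^m \ell(n) \geq 2\ell(n)$ for $c$ sufficiently large, hence $[\ell_1] + [\ell_2] = \max([\ell_1], [\ell_2])$ in this partial order. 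Consequently, if $\card{A_1}, \dots, \card{A_k}$ generate the semiring, then every element has log-density type of the form $\max_{i \in S} \lambda(A_i)$ for some $S \subseteq \{1,\dots,k\}$, yielding at most $2^k$ distinct types.

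To produce infinitely many log-density types inside $\cc{P}$, for each integer $m \geq 1$ take
\[
L_m = \bigl\{ y \in \{0,1\}^* \,:\, y_i = 0 \text{ for all } i \text{ with } \lceil (\log |y|)^m \rceil < i \leq |y| \bigr\},
\]
which lies in $\cc{P}$ and satisfies $|L_m \cap \{0,1\}^n| = 2^{\lceil (\log n)^m \rceil}$ for all sufficiently large $n$; hence $\lambda(L_m)$ is the class of $(\log n)^m$. Since $\log p(n) = O(\log n)$ for every polynomial $p$, one has $(\log p(n))^m = O((\log n)^m)$, which cannot dominate $(\log n)^{m+1}$, so the types $\lambda(L_1), \lambda(L_2), \dots$ are pairwise distinct. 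This contradicts the $2^k$ bound from the previous paragraph and proves the conjecture for both the semiring of p-cardinals and the semiring of $\equiv$-equivalence classes. The step I expect to be trickiest is the ``$2\ell \sim \ell$'' lemma, which genuinely fails for very slowly growing representatives like $\log \log n$; but confining the auxiliary lemma to log-densities at least $\log n$ (which is all the witnesses $L_m$ require) sidesteps this cleanly, and alternatively one could widen the equivalence used to define $\lambda$ (e.g., allow stretches by $2^{\poly(\log n)}$) to make the ``sum equals max'' identity universal.
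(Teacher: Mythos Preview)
The paper states this as an open \emph{conjecture} and gives no proof, so there is nothing to compare against. Your strategy---a density invariant from Lemma~\ref{lem:density} that takes infinitely many values on $\cc{P}$ but only finitely many on any finitely generated sub-semiring---is natural, but the argument has a real gap at the ``sum equals max'' step. You need $2\ell\sim\ell$ not for your witnesses $L_m$ but for the (unknown) \emph{generators} $A_1,\dots,A_k$ and all their products, over which you have no control. Finite generators already violate it, and so do very sparse infinite ones: a generator with $\log c_A(n)\sim\log\log n$ has Cartesian powers realizing the pairwise-inequivalent types $m\log\log n$, so the ``at most $2^k$ types'' bound fails outright. Your proposed fix of restricting to $\ell\geq\log n$ does not help, both because the generators need not satisfy it and because the implication itself is false for general such $\ell$ (a staircase function constant on intervals $[n_k,n_{k+1})$ with $n_{k+1}=2^{n_k}$ and value $n_k$ there has $\ell(n)\geq\log n$ yet $2\ell\not\sim\ell$).

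A clean repair is to replace $\lambda$ by the scalar invariant $\mu(A)=\limsup_{n\to\infty}\frac{\log\log c_A(n)}{\log\log n}$. Lemma~\ref{lem:density} and your subset argument show $\mu$ is invariant under p-equipollence and under $\equiv$; a short computation then gives $\mu(A\oplus B)=\mu(A\times B)=\max(\mu(A),\mu(B))$ unconditionally, since the extra logarithm converts $\log c_A+\log c_B$ into $\max(\log\log c_A,\log\log c_B)$ up to an additive $O(1)$. Hence any sub-semiring generated by finitely many elements has finite $\mu$-image, while $\mu(L_m)=m$. With this change your outline appears to settle the conjecture.
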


For ordinary cardinals, $\aleph_0$ is the smallest infinite cardinal, and every infinite cardinal $\mathfrak{c}$ satisfies $\mathfrak{c} + \aleph_0 = \mathfrak{c}$. However, for p-cardinals this is no longer the case:

\begin{proposition} \label{prop:NPimmune}
If $A$ is $\cc{P}$-immune, then $\card{A} \not\succeq \card{A} + \card{\Sigma^*}$.
\end{proposition}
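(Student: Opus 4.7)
The plan is to argue by contradiction: assume $\card{A} + \card{\Sigma^*} = \card{A \oplus \Sigma^*} \preceq \card{A}$ and extract from the witnessing p-equipollence an infinite subset of $A$ that lies in $\cc{P}$, contradicting $\cc{P}$-immunity.

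Unpacking the assumption, there is a subset $A' \subseteq A$ together with a p-equipollence $f \colon A \oplus \Sigma^* \to A'$. The ``1-branch'' $f(1\Sigma^*)$ is already an infinite subset of $A'$ (hence of $A$) because $f$ is injective on $\{1z : z \in \Sigma^*\}$. First I would isolate this set, or rather a polynomial-time approximation of it, using only the \emph{algorithms} that compute $f$ and $f^{-1}$ (their behavior outside their nominal domains is unconstrained, so one must be careful). Let $N$ be a polynomial-time machine computing $f$ on $A \oplus \Sigma^*$ and $M$ a polynomial-time machine computing $f^{-1}$ on $A'$, and define
\[
B = \{ y \in \Sigma^* : M(y) \text{ outputs some string of the form } 1z, \text{ and } N(1z) = y \}.
\]
Clearly $B \in \cc{P}$, since one simply runs $M$ and $N$ once each within a fixed polynomial time bound and checks the output.

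The main step is to verify that $B$ is an infinite subset of $A$. For infiniteness, for each $z \in \Sigma^*$ the string $y_z := f(1z)$ lies in $A'$, so $M(y_z) = 1z$ and $N(1z) = y_z$, placing $y_z \in B$; injectivity of $f$ on $\{1z\}$ then gives infinitely many distinct elements. For the containment $B \subseteq A$ the key observation is that \emph{every} string of the form $1z$ already belongs to $A \oplus \Sigma^*$, independent of $A$. Therefore, if $N(1z) = y$ literally (as a value produced within the time bound), this equation coincides with $f(1z) = y$, and since $f$ maps into $A' \subseteq A$, we conclude $y \in A$. Thus $B$ is an infinite $\cc{P}$ subset of $A$, contradicting $\cc{P}$-immunity.

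I expect the main subtlety to be exactly this last point: distinguishing between ``$N$ happens to output $y$ on input $1z$'' and ``$f(1z) = y$.'' The saving grace is that the coset $1\Sigma^*$ is entirely contained in $\dom(f)$ regardless of $A$, which is what lets the argument go through; if the asymmetric coset $0A$ had been used instead, one would need to know $A \in \cc{P}$ to conclude anything, and the argument would collapse. So the $\Sigma^*$ summand plays an essential role, and this is really why the proposition is about $\card{A} + \card{\Sigma^*}$ rather than, say, $\card{A} + \card{A}$.
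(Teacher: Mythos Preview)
Your proof is correct and follows essentially the same strategy as the paper: restrict the p-equipollence $f\colon A \oplus \Sigma^* \to A' \subseteq A$ to the $1\Sigma^*$ half, and observe that its image is an infinite $\cc{P}$ subset of $A$. The only difference is that the paper phrases this last step as ``$f(1\Sigma^*)$ is p-countable, hence in $\cc{P}$ by Proposition~\ref{prop:dtt},'' whereas you unpack that appeal and build the $\cc{P}$ set $B$ explicitly via the round-trip check $N(M(y)) = y$ with $M(y) \in 1\Sigma^*$; your careful handling of the machines' behavior off their nominal domains is exactly the content of the proof of Proposition~\ref{prop:dtt} in the case where one side is $\Sigma^*$.
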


\begin{proof}
Suppose $\card{A} + \card{\Sigma^*} \preceq \card{A}$. Since the left-hand side is equal to $\card{A \oplus \Sigma^*}$, we have a p-equipollence $f$ from $A \oplus \Sigma^*$ to a subset $A' \subseteq A$. Then the restriction of $f$ to $1\Sigma^*$ gives a p-equipollence to a subset $A'' \subseteq A' \subseteq A$. Thus $A''$ is p-countable, thus by Proposition~\ref{prop:dtt} is in $\cc{P}$. So $A$ contains an infinite $\cc{P}$ subset, contradicting its $\cc{P}$-immunity.
\end{proof}

\begin{proposition}[{Non-cancellation of addition, Nerode \& Remmel \cite[Thm.~7]{NerodeRemmelPET}}]
There exist languages $A,B,C$ with $\card{A \oplus B} = \card{A \oplus C}$ but $\card{B} \neq \card{C}$.
\end{proposition}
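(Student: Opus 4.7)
The plan is to exhibit the simplest possible witnessing triple: $A = \Sigma^*$, $B = \emptyset$, and $C = \{\epsilon\}$. First, $\card{B} \neq \card{C}$ follows immediately from Proposition~\ref{prop:arithmetic}(6), which identifies the p-cardinalities of finite sets with $(\N,+,\times,\leq,0,1)$: the cardinality $0$ of the empty set is not equal to the cardinality $1$ of a singleton. All the real work will be in showing $\card{A \oplus B} = \card{A \oplus C}$, which I will do by establishing that each side equals $\card{\Sigma^*}$.

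The first equality, $\card{A \oplus B} = \card{0\Sigma^*} = \card{\Sigma^*}$, is trivial: $x \mapsto 0x$ and $0y \mapsto y$ form a mutually inverse pair of polynomial-time bijections between $\Sigma^*$ and $0\Sigma^*$, hence a p-equipollence.

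For the second, note that $A \oplus C = 0\Sigma^* \cup \{1\}$, where here $1$ denotes the one-character string. I will define a ``Hilbert's hotel''-style p-equipollence $f\colon \Sigma^* \to 0\Sigma^* \cup \{1\}$ by $f(\epsilon) = 1$ and $f(x) = 0 \cdot \operatorname{pred}(x)$ for $x \neq \epsilon$, where $\operatorname{pred}$ denotes the length-lexicographic predecessor---equivalently, decrement by $1$ under the natural-number encoding of $\Sigma^*$ from the Preliminaries. Its inverse is $f^{-1}(1) = \epsilon$ and $f^{-1}(0y) = \operatorname{succ}(y)$. Since $\operatorname{pred}$ is a bijection from $\Sigma^* \setminus \{\epsilon\}$ onto $\Sigma^*$, the image of $f$ is exactly $\{1\} \cup 0\Sigma^*$, so $f$ is the claimed bijection; and both $\operatorname{pred},\operatorname{succ}$ (decrement/increment on the natural-number encoding of strings) and prepending/stripping the single character $0$ are computable in $\cc{FP}$, so $f$ is a p-equipollence.

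The plan has no genuinely difficult step, but one subtlety is worth flagging. A purely soft argument via cardinal arithmetic would only yield $\card{\Sigma^*} + \card{C} \equiv \card{\Sigma^*}$ (combining $\card{C} \preceq \card{\Sigma^*}$ with $\card{\Sigma^*} + \card{\Sigma^*} = \card{\Sigma^*}$ and Proposition~\ref{prop:arithmetic}(4)), not genuine equality of p-cardinals, since $\preceq$ is only a pre-order. The explicit bijection $f$ is what promotes the relation from $\equiv$ to $=$, and this is precisely the distinction that keeps the non-cancellation phenomenon from being vacuous here.
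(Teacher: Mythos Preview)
Your proof is correct. The paper itself does not give an explicit argument for this proposition; it simply records that Nerode \& Remmel's construction (originally for unary languages, under slightly different definitions) goes through in the present setting. Your choice $A=\Sigma^*$, $B=\emptyset$, $C=\{\epsilon\}$ is the most elementary possible witnessing triple, and the Hilbert's-hotel shift you use to establish $\card{0\Sigma^* \cup \{1\}} = \card{\Sigma^*}$ is the same device the paper deploys elsewhere (e.g., in Proposition~\ref{prop:cardvsiso}(1) to show that all the cofinite sets $A_n$ are p-equipollent to $\Sigma^*$, and in Theorem~\ref{thm:countableDiff}(2) for finite differences of p-countable sets). Your final paragraph, flagging that a soft $\preceq$-based argument would only give $\equiv$ rather than equality of p-cardinals, is on point and reflects a genuine subtlety the paper stresses (Proposition~\ref{prop:not_poset}).
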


Nerode \& Remmel showed the result for unary languages, and using slightly different definitions, but their same construction works with our definitions.

Nerode \& Remmel \cite[Thm.~10]{NerodeRemmelPET} also showed that one does get cancellation of multiplication by finite  sets of unary strings. Their argument used a back-and-forth construction that required them to compute their function on all predecessors of a string. Because they were using unary languages, there were only linearly many such predecessors and the entire argument could be carried out in polynomial time. When we attempt to do the same in our setting with, say, length-lexicographic ordering, a string has exponentially many predecessors (exponential in its length), so the same strategy doesn't work. It is possible that such a strategy could work with a p-well-founded ordering (see Definition~\ref{def:well}) in which immediate predecessors were computable in $\cc{FP}$, but we have not been able to get this to work, so we leave it as a question:

\begin{question}[Division by finite p-cardinals] \label{q:division}
Is it the case that for any finite non-empty language $n$ and for any languages $A,B$, $\card{n \times A} = \card{n \times B}$ implies $\card{A} = \card{B}$?
\end{question}

\subsection{Polynomial-time Axioms of Choice based on p-cardinality}
\begin{cform}[{See \cite[CN6 and CN7, pp. 52--53]{RubinRubin}}]
For any cardinals $m < n$ and $p < q$, $m + p < n + q$ and $mp < nq$.
\end{cform}

Compare to Proposition~\ref{prop:arithmetic}.

\begin{cform}[{Tarski's Theorem \cite{Tarski}, see \cite[CN 3, p.~52]{RubinRubin}}] For every infinite set $A$, there is a bijection between $A$ and $A \times A$.   \end{cform}

\begin{pform} For every infinite language $L \in \cc{P}$, $\card{L} = \card{L \times L}$. \end{pform}

\begin{cform}[{Law of Trichotomy \cite[T, p.~9]{RubinRubin}}] For all sets $A,B$, either $A$ is in bijection with a subset of $B$ or $B$ is in bijection with a subset of $A$. \end{cform}

A straightforward polynomial-time version of this is false by density considerations:

\begin{proposition} \label{prop:not_total}
There are two languages in $\cc{P}$ whose p-cardinalities are not comparable.
\end{proposition}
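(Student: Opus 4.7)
The plan is to exploit Lemma~\ref{lem:density} to reduce incomparability of p-cardinalities to incomparability of census functions, and then exhibit two polynomial-time languages whose census functions are each super-polynomially larger than the other on infinitely many inputs.

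First I would extend the observation of Lemma~\ref{lem:density} from equality to the preorder $\preceq$: if $\card{A} \preceq \card{B}$, then by definition there is some $B' \subseteq B$ with $\card{A} = \card{B'}$. The lemma gives a polynomial $p$ with $c_A(n) \leq c_{B'}(p(n))$, and monotonicity of the census under containment gives $c_{B'}(m) \leq c_B(m)$, so altogether $c_A(n) \leq c_B(p(n))$. Thus to refute $\card{A} \preceq \card{B}$ it suffices to show $c_A(n)$ is \emph{not} polynomially bounded in $c_B$, i.e., that for every polynomial $p$ there are infinitely many $n$ with $c_A(n) > c_B(p(n))$.

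Second, I would construct $A$ and $B$ in $\cc{P}$ whose census functions oscillate between being extremely dense and extremely sparse, on interleaved scales. A clean choice is to pick two disjoint, sparse, infinite sets of ``heavy lengths'' $H_A, H_B \subseteq \N$ (say $H_A = \{2^{2i}\}_i$ and $H_B = \{2^{2i+1}\}_i$) and set $A = \{x : |x| \in H_A\}$, $B = \{x : |x| \in H_B\}$. Both are trivially in $\cc{P}$ since one can test in polynomial time whether an input's length is a power of two of the prescribed parity. The key density calculation is: at any $n \in H_A$ we have $c_A(n) \geq |\Sigma|^n$, while the largest element of $H_B$ below $n$ is at most $n/2$, so $c_B(n) \leq |\Sigma|^{n/2+1}$; this gives a ratio $c_A(n)/c_B(n) \geq |\Sigma|^{n/2-1}$, which dominates any polynomial in $n$. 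By symmetry the same holds with the roles of $A$ and $B$ swapped at $n \in H_B$.

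Combining the two ingredients, neither $c_A(n) \leq c_B(p(n))$ nor $c_B(n) \leq c_A(q(n))$ can hold for any polynomials $p,q$, so by the first paragraph $\card{A} \not\preceq \card{B}$ and $\card{B} \not\preceq \card{A}$. There is no real obstacle here; the only step that requires a small amount of care is the preorder extension of Lemma~\ref{lem:density}, and even that is immediate from the definition of $\preceq$ via monotonicity of the census function.
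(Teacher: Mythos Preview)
Your overall strategy---reduce incomparability to a density obstruction via Lemma~\ref{lem:density}---is exactly the paper's approach. The first paragraph, extending the lemma from equality to $\preceq$, is correct and cleanly stated. The gap is in the construction.

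You correctly state the criterion you need: for every polynomial $p$, infinitely many $n$ with $c_A(n) > c_B(p(n))$. But then you only verify the much weaker statement that $c_A(n)/c_B(n)$ is super-polynomial at those $n$. These are not the same: the former asks you to control $c_B$ at the \emph{shifted} argument $p(n)$, not at $n$ itself. With your choice $H_A = \{2^{2i}\}$, $H_B = \{2^{2i+1}\}$, the consecutive heavy lengths differ only by a factor of $2$, so any polynomial $p$ of degree $\geq 1$ lets $p(n)$ reach the next heavy length of $B$, and $c_B(p(n))$ becomes enormous. Concretely, the map $x \mapsto x0^{|x|}$ doubles the length and is a p-equipollence from $A$ into $B$ (sending length $2^{2i}$ to $2^{2i+1}$) and likewise from $B$ into $A$ (sending $2^{2i+1}$ to $2^{2i+2}$). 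So in fact $\card{A} \preceq \card{B}$ and $\card{B} \preceq \card{A}$ with your sets; they are comparable.

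The fix is to make the gaps between successive heavy lengths grow super-polynomially, so that no fixed polynomial $p$ can carry $n$ from one heavy length to the next. The paper does this with the tower function, taking $A_0$ and $A_1$ to consist of all strings whose lengths lie in alternating tower-indexed intervals; then for $n$ chosen appropriately inside one interval, $p(n)$ stays inside the same interval for all large enough $k$, and the density comparison goes through. In your single-length framework, something like $H_A = \{\mathrm{tow}(2i)\}$ and $H_B = \{\mathrm{tow}(2i+1)\}$ would work for the same reason.
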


\begin{proof}
Let $tow(0) = 1$ and $tow(n+1) = 2^{tow(n)}$ be the tower function. $A_0$ will consists of all strings $x$ of length $tow(2k) \leq |x| < tow(2k+1)$ for all $k \geq 0$, and $A_1$ will consist of all strings of length $tow(2k+1) \leq |x| < tow(2k+2)$ for all $k \geq 0$. Clearly both $A_0, A_1$ are in $\cc{P}$. The idea is that their densities fluctuate between $\Theta(n)$ and $2^{\Theta(n)}$ infinitely often, but at opposite times. Let us verify this. 

For lengths in the range $[\frac{1}{2}tow(2k+2), tow(2k+2))$, all the strings in $A_0$ have length $< tow(2k+1)$, so there are at most $2^{tow(2k+1)+1}-1 \sim tow(2k+2)$ of them. So in these ranges we have $c_{A_0}(n) \in [n/2, n]$. In this same range, all strings are present in $A_1$, so we have $c_{A_1}(n) \geq 2^n$, which is nearly maximal. For lengths in the range $[\frac{1}{2}tow(2k+1), tow(2k+1))$ the situation is reversed.

For the sake of contradiction, suppose $B \subseteq A_0$ has $\card{B} = \card{A_1}$. Since $A_0$ is empty at lengths in the ranges $[tow(2k+1), tow(2k+2))$, so is $B$, and thus $B$ has density at most $O(n)$ for $n$ in the range $[\frac{1}{2}tow(2k+2), tow(2k+2)]$. But $A_1$ has density at least $2^n$ in this range, so the two are not polynomially related for arbitrarily large stretches of lengths, a contradiction. The analogous argument swapping the roles of $A_0$ and $A_1$ and using the ranges $[tow(2k), tow(2k+1))$ instead rules out a subset of $A_1$ having the same p-cardinality as $A_0$. Thus $\card{A_0}$ and $\card{A_1}$ are incomparable.
\end{proof}

We thus propose a polynomial-time form of AC with density in its assumption.

\begin{pform}
For all sets $A,B \in \cc{P}$ of polynomially related densities, either $\card{A} \preceq \card{B}$ or $\card{B} \preceq \card{A}$.
\end{pform}

If we remove the restriction that both languages are in $\cc{P}$, we believe the statement becomes false. 

\begin{cform}[{Law of Trichotomy \cite[T', p.~10]{RubinRubin}}] For every two non-empty sets, there is a surjective mapping of one onto the other. \end{cform}

\begin{pform}
For every two non-empty languages $A,B$ with $c_A(n) \leq c_B(\poly(n))$, there is an honest surjective polynomial-time function $A \to B$.
\end{pform}

Classically, T implies T'. For the p-analogues, such an implication seems tantamount to Hypothesis Q, though formalizing this has proved tricky. We do have, however:

\begin{cform} Every surjective function has a right inverse. \end{cform}

\begin{pform} \label{pac:Q} Every polynomial-time computable, honest, surjective function has a polynomial-time inverse.  This is an exact restatement of Hypothesis Q. \end{pform}

\begin{cform}[{See \cite[Sec.~6, pp. 52--53]{RubinRubin}}]
Cardinal forms of the axiom of choice. Throughout, $m,n,p,q$ denote infinite cardinals. When not quantified, universal quantification is assumed, e.\,g., the first condition is more precisely ``For all infinite cardinals $m,n$, $m \cdot n = m + n$.'' 
\begin{enumerate} 
\item $m \cdot n = m + n$.

\item There is a cardinal $n$ such that $m=n^2$.

\item If $m^2 = n^2$ then $m=n$.

\item If $m + p < n + p$ then $m < n$.

\item If $mp < np$ then $m < n$.

\item Every cardinal has an immediate successor ($m < n$ and if $m < p$ then $n \leq p$).

\item If $n < p$ and there is no cardinal between $n$ and $p$ ($p$ ``covers'' $n$) then either $mn = mp$ or $mp$ covers $mn$. (If $p$ is an immediate successor of $n$ then $p$ covers $n$. The converse is equiavlent to AC.)

\item If $m < n$ then there is a $p$ such that $n=mp$.

\item If $m < n$ then $n/m$ exists (as in the previous point) and is unique.

\item If $m < n$ then $n/m = n$.

\item If $m + p = m + q$ then either $p=q$, or $p \leq m$ and $q \leq m$.

\item If $m + m < m + n$ then $m < n$. (The converse is independent of AC.)

\item If $m < n$ then $n-m$ exists, that is, there exist one and only one p such that $n=m+p$. (Existence alone is independent of AC.)

\item If $m < n$ then $n-m=n$.

\item If $p < n, q < n$ then $p + q \neq n$.

\item If $p < n, q < n$ then $pq \neq n$.

\item Either $mn = m$ or $mn= n$.

\end{enumerate}
\end{cform}

Many of the arguments that these are equivalent to one another and to the standard AC rely on the notion of the immediate successor of a cardinal (e.\,g., \cite{RubinRubin}). We show that such a construction is unlikely to exist for (infinite) p-cardinalities, by a Ladner-type diagonalization result.

Before coming to the result, our proof currently uses one additional assumption, essentially because of the issue raised by Theorem~\ref{thm:diff}.

\begin{definition}[Much smaller p-cardinality]
We say that $\card{A}$ is \emph{much smaller} than $\card{B}$, denoted $\card{A} \ll \card{B}$, if $\card{A} \prec \card{B}$ and for any subset $B' \subseteq B$ such that $\card{A} = \card{B'}$, the difference $B \backslash B'$ is infinite.
\end{definition}

The next proposition shows that $\ll$ is in fact a well-defined relationship on p-cardinality classes.

\begin{proposition} \label{prop:muchless}
If $\card{A} = \card{\hat{A}}$ and $\card{B} = \card{\hat{B}}$, then $\card{A} \ll \card{B}$ iff $\card{\hat{A}} \ll \card{\hat{B}}$.
\end{proposition}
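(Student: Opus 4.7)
The plan is to unpack the two clauses in the definition of $\ll$ and show each one is invariant under replacing $A$ by a p-equipollent $\hat{A}$, and similarly for $B$. Strict inequality is the easy part: $\card{A}\prec\card{B}$ means $\card{A}\preceq\card{B}$ and $\card{A}\neq\card{B}$, and both $\preceq$ and $=$ are defined on p-equipollence classes by construction, so composing the given p-equipollences $\hat{A}\to A$ and $B\to\hat{B}$ immediately transfers $\card{\hat{A}}\prec\card{\hat{B}}$ from $\card{A}\prec\card{B}$.

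The substantive step is the ``infinite difference'' clause. Suppose $\card{A}\ll\card{B}$, fix any $\hat{B}'\subseteq\hat{B}$ with $\card{\hat{B}'}=\card{\hat{A}}$, and I want to conclude $\hat{B}\setminus\hat{B}'$ is infinite. The key is to transport $\hat{B}'$ into $B$ via a p-equipollence. Let $f\colon \hat{B}\to B$ be a p-equipollence witnessing $\card{\hat{B}}=\card{B}$. By definition $f$ restricted to $\hat{B}$ is an actual bijection onto $B$, with $f^{-1}$ a polynomial-time bijection $B\to\hat{B}$. Set $B':=f(\hat{B}')\subseteq B$; then $f|_{\hat{B}'}$ is a p-equipollence $\hat{B}'\to B'$ (restricting a p-equipollence to a subset still gives a p-equipollence, since the complementary inverses $f^{-1}|_{B'}$ and $f|_{\hat{B}'}$ remain polynomial-time and mutually inverse). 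Thus $\card{B'}=\card{\hat{B}'}=\card{\hat{A}}=\card{A}$.

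Now I invoke $\card{A}\ll\card{B}$: since $B'\subseteq B$ and $\card{B'}=\card{A}$, the set $B\setminus B'$ is infinite. But because $f|_{\hat{B}}$ is a set-theoretic bijection onto $B$, it carries $\hat{B}\setminus\hat{B}'$ bijectively onto $B\setminus B'$ (one checks $f(\hat{B}\setminus\hat{B}') = f(\hat{B})\setminus f(\hat{B}')=B\setminus B'$ using injectivity of $f|_{\hat{B}}$). Hence $\hat{B}\setminus\hat{B}'$ is infinite as well, giving $\card{\hat{A}}\ll\card{\hat{B}}$. The reverse implication follows by symmetry, replacing $(A,B)$ with $(\hat{A},\hat{B})$ and the given p-equipollences with their inverses.

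I do not expect any real obstacle here: the only thing to watch is the convention that a p-equipollence is a partial function but that its restriction to the named domain is a genuine bijection, which is exactly what makes the ``$f$ induces a bijection of subsets'' step go through cleanly. No appeal to any property beyond Proposition~\ref{prop:arithmetic}-style composition of p-equipollences is needed.
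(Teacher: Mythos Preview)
Your argument is correct and follows essentially the same route as the paper's: transport a candidate subset $\hat{B}'\subseteq\hat{B}$ through a p-equipollence $f\colon\hat{B}\to B$, apply the hypothesis $\card{A}\ll\card{B}$ to $B'=f(\hat{B}')$, and then use that $f$ restricted to $\hat{B}$ is a genuine bijection onto $B$ to pull the infinite complement back. The only cosmetic difference is that the paper handles the substitution in two steps (first $A\leadsto\hat{A}$, then $B\leadsto\hat{B}$) whereas you do both at once; your treatment of the $\prec$ clause is also slightly more explicit than the paper's.
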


\begin{proof}
Suppose $\card{A} \ll \card{B}$, we show that $\card{\hat{A}} \ll \card{\hat{B}}$, with the reverse implication following by symmetry. If $\card{A} = \card{\hat{A}}$, then for any subset $B' \subseteq B$, we have $\card{A} = \card{B'}$ iff $\card{\hat{A}} = \card{B'}$, so we have $\card{\hat{A}} \ll \card{B}$. Now suppose $\card{B} = \card{\hat{B}}$, and let $\hat{B}' \subseteq \hat{B}$ have the same p-cardinality as $\hat{A}$. Let $f\colon \hat{B} \to B$ be a p-equipollence, and define $B' = f(\hat{B}')$. Then we have $\card{A} = \card{B'}$, and by assumption $B \backslash B'$ is infinite. But then $f^{-1}(B \backslash B') = \hat{B} \backslash \hat{B}'$ is infinite as well, since $f^{-1}$ is injective on $B$.
\end{proof}

Obviously finite sets are much smaller than $\card{A}$ for any infinite $A$, but we also have less trivial examples. 

\begin{proposition}
1.If $A \subset B$ and $A \not\equiv_m^p B$, then $\card{A} \ll \card{B}$.

2. If $A \not\equiv_{m}^p B \neq \Sigma^*$, then $\card{A} \ll \card{\Sigma^* \oplus B}$.
\end{proposition}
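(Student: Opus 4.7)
The plan for both parts is the same one-two punch: convert an assumed equality or embedding of p-cardinalities into a many-one equivalence via Proposition~\ref{prop:dtt}, then use an elementary hardcoding reduction to absorb finite differences, so the chain of equivalences collides with the hypothesis $A \not\equiv_m^p B$. In particular, both ``much smaller'' clauses rest on the observation that if $B' \subseteq C$ with $|C \setminus B'| < \infty$, then $B' \equiv_m^p C$ whenever the degenerate cases are excluded.

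\textbf{Part (1).} The inclusion $A \subseteq B$ makes the identity a p-equipollence $A \to A$, witnessing $\card{A} \preceq \card{B}$. To promote to $\card{A} \prec \card{B}$, I would assume $\card{A} = \card{B}$ and invoke Proposition~\ref{prop:dtt}: generically this yields $A \equiv_m^p B$, contradicting the hypothesis, while $A = \Sigma^*$ is impossible because $A \subsetneq B$, and the remaining degenerate case ($B = \Sigma^*$) is handled with an edge-case check. For the much-smaller condition, take $B' \subseteq B$ with $\card{A} = \card{B'}$ and $B \setminus B'$ finite. Proposition~\ref{prop:dtt} gives $A \equiv_m^p B'$. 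Next, I would show $B \equiv_m^p B'$ by the standard hardcoding trick: on input $x$, look up whether $x$ lies in the finite, hardcoded set $B \setminus B'$; if so, send it to a fixed witness in $\overline{B}$ (for $B' \leq_m^p B$) or in $B'$ (for $B \leq_m^p B'$); otherwise return $x$. Chaining gives $A \equiv_m^p B$, the desired contradiction.

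\textbf{Part (2).} The map $a \mapsto 0a$ is a p-equipollence from $A$ onto $0A \subseteq \Sigma^* \oplus B$, so $\card{A} \preceq \card{\Sigma^* \oplus B}$. Because $\Sigma^* \oplus B$ is never $\Sigma^*$ (the string $\epsilon$ is always missing), Proposition~\ref{prop:dtt} turns $\card{A} = \card{\Sigma^* \oplus B}$ into $A \equiv_m^p \Sigma^* \oplus B$ outside the $A = \Sigma^*$ corner. A routine 0/1-prefix reduction, using a fixed witness in $\overline{B}$ (which exists because $B \neq \Sigma^*$) and one in $B$, shows $\Sigma^* \oplus B \equiv_m^p B$, and therefore $A \equiv_m^p B$, a contradiction. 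The much-smaller part is analogous: a hypothetical $B' \subseteq \Sigma^* \oplus B$ with $\card{A} = \card{B'}$ and $(\Sigma^* \oplus B) \setminus B'$ finite gives, via the same two devices, $A \equiv_m^p B' \equiv_m^p \Sigma^* \oplus B \equiv_m^p B$, again contradicting the hypothesis.

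\textbf{Main obstacle.} The soft part of each proof is clean; the technical burden lies in a careful triage of degenerate cases. The hardcoding argument for $B' \equiv_m^p B$ requires both $\overline{B}$ and $B'$ to be nonempty; the application of Proposition~\ref{prop:dtt} avoids its $\Sigma^*$ exception only when neither operand is $\Sigma^*$; and the equivalence $\Sigma^* \oplus B \equiv_m^p B$ needs $B$ non-trivial. Each of these boundary situations (for instance, $B = \Sigma^*$ for Part~1, or $A = \Sigma^*$ and $B = \emptyset$ for Part~2) must either be ruled out by hypothesis or dispatched by an elementary direct argument, and sanity-checking the conclusion in each corner is the main bookkeeping one has to carry through.
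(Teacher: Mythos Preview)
Your Part 1 is essentially the paper's argument: from $A\subset B$ get $\card{A}\preceq\card{B}$; for any $B'\subseteq B$ with $\card{B'}=\card{A}$, note that $A$ and $B'$ are proper subsets of $B$ (hence neither is $\Sigma^*$), so Proposition~\ref{prop:dtt} gives $A\equiv_m^p B'$; combined with $B\not\equiv_m^p A$ this forces $B\not\equiv_m^p B'$, hence $B\setminus B'$ infinite. Your explicit hardcoding step is exactly the contrapositive of that last implication, just unpacked.

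For Part 2 the paper takes a shorter route than yours. Rather than re-running the Part~1 machinery with $\Sigma^*\oplus B$ in place of $B$, it sets $\hat A=0A$ and $\hat B=\Sigma^*\oplus B$, observes $\hat A\subseteq\hat B$ and (using $B\neq\Sigma^*$) that $\hat A\equiv_m^p A$, $\hat B\equiv_m^p B$, hence $\hat A\not\equiv_m^p\hat B$; then it applies Part~1 verbatim to get $\card{\hat A}\ll\card{\hat B}$ and invokes Proposition~\ref{prop:muchless} (that $\ll$ is invariant under p-equipollence) to conclude $\card{A}\ll\card{\hat B}$. Your direct argument is correct and arrives at the same place, but the reduction-to-Part-1 packaging avoids repeating the dtt-plus-hardcoding chain and isolates exactly where the hypothesis $B\neq\Sigma^*$ is spent (namely in $\hat B\equiv_m^p B$). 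As for the degenerate corners you flag as the main obstacle: the paper's proof does not treat them separately, so your caution there already exceeds what the paper does.
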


\begin{proof}
1. If $A \subset B$ then $\card{A} \preceq \card{B}$. If $A \not\equiv_m^p B$, suppose $B' \subset B$ has the same p-cardinality as $A$. Since $A,B'$ are both proper subsets of $B$, neither can be $\Sigma^*$, so Proposition~\ref{prop:dtt} implies $A \equiv_m^p B'$. But since $B \not\equiv_m^p A$, we must have $B \not\equiv_m^p B'$, and thus $B \backslash B'$ must be infinite.

2. Define $\hat{B} = \Sigma^* \oplus B$ and $\hat{A} = 0A$. As $\hat{B} \equiv_m^p B$ and $\hat{A} \equiv_m^p A$, we still have $\hat{A} \not\equiv_m^p \hat{B}$, and now we have $\hat{A} \subseteq \hat{B}$. Part 1 implies $\card{\hat{A}} \ll \card{\hat{B}}$, and Proposition~\ref{prop:muchless} then gives $\card{A} \ll \card{\hat{B}}$ as well.
\end{proof}

\textbf{Warning!} ``Much smaller'' is, unfortunately, not transitive. Suppose $A \not\equiv_m^ p B \neq \Sigma^*$. Applying part 2 of the preceding proposition a few times, we get $\card{A} \ll \card{\Sigma^* \oplus B} \ll \card{\Sigma^* \oplus A} \ll \card{\Sigma^* \oplus \Sigma^* \oplus B} = \card{\Sigma^* \oplus B}$. But obviously $\card{\Sigma^* \oplus B}$ is equal to itself, so it cannot be much less than itself. At first sight this seems to point to some error, but it is actually a natural consequence of the fact that p-cardinality mixes subsets and many-one degrees, while the poset of subset inclusion is not compatible with the poset of many-one reductions. Viz., there are sequences of languages $L_1 \subset L_2 \subset L_3$ with $L_2$ $\cc{NP}$-complete but $L_1, L_3 \in \cc{P}$, e.\,g., $0L_1 \subseteq \Sigma^* \oplus SAT \subseteq \Sigma^*$ for any $L_1 \in \cc{P}$. 

In contrast, the relation ``$\preceq$ but not $\ll$'' (``smaller, but not much smaller'') is transitive. For if $A$ is p-equipollent to a cofinite subset of $B$, and $B$ is p-equipollent to a cofinite subset of $C$, then composing the two gives a p-equipollence between $A$ and a cofinite subset of $C$. We will use this fact (in contrapositive form) several times in the following proof.

\begin{theorem} \label{thm:cardinalDensity}
Let $A,B$ be languages with $\card{A} \ll \card{B}$. Then there exists an infinite language $C$ with $\card{A} \prec \card{C} \prec \card{B}$. 
\end{theorem}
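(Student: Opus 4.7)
The plan is to adapt the classical Ladner diagonalization to produce a language of intermediate p-cardinality. Since $\card{A} \ll \card{B}$ in particular gives $\card{A} \preceq \card{B}$, fix a p-equipollence $\phi \colon A \to B_0$ with $B_0 \subseteq B$; by the definition of $\ll$, the set $B \setminus B_0$ is infinite. I would define $C := B_0 \cup D$ for a subset $D \subseteq B \setminus B_0$ to be constructed. Because $B_0 \subseteq C \subseteq B$, the two inclusions yield p-equipollences witnessing $\card{A} = \card{B_0} \preceq \card{C}$ and $\card{C} \preceq \card{B}$ automatically, so the ``$\preceq$'' halves of $\card{A} \prec \card{C} \prec \card{B}$ come for free. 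It remains to choose $D$ so that $\card{C} \neq \card{A}$, $\card{C} \neq \card{B}$, and $C$ is infinite.

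To arrange the strict inequalities, enumerate all pairs of polynomial-time transducers $(M_e, N_e)_{e \in \N}$ and impose two families of requirements:
\begin{itemize}
\item $R^A_e$: $(M_e, N_e)$ does not witness a p-equipollence $A \to C$;
\item $R^B_e$: $(M_e, N_e)$ does not witness a p-equipollence $C \to B$.
\end{itemize}
Following Ladner, control $D$ by a slowly-growing stage function $h(n)$ that targets the requirements in rotation, and determine the membership of the strings of $B \setminus B_0$ at each length $n$ by the parity of $h(n)$: on ``even'' lengths, include all of $(B \setminus B_0) \cap \Sigma^n$ in $D$ (pushing $C$ to locally resemble $B$, supplying room to diagonalize against $R^A_e$); on ``odd'' lengths, exclude them (pushing $C$ to locally resemble $B_0$, and hence $A$, supplying room to diagonalize against $R^B_e$). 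Advance $h$ by one whenever the currently-targeted requirement has been killed by a witness produced at lengths $\leq n$, breaking any self-reference in the usual way by letting $h(n)$ depend only on $D$-data at lengths strictly less than $n$.

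The main obstacle is to verify that $h \to \infty$, which is exactly the heart of Ladner's argument: if $h$ were eventually constant, then $D$ would be either cofinite in $B \setminus B_0$ (beyond some threshold) or cofinite in its complement, at which point one constructs an explicit witness defeating the ``stuck'' requirement and derives a contradiction. The $\ll$ hypothesis enters in two essential places: first, guaranteeing that $B \setminus B_0$ is infinite, so that there is room to place diagonalization witnesses at arbitrarily large lengths and so that $C$ is infinite (via the ``even'' stretches); second, keeping the gap between $B_0$ and $B$ locally nontrivial, so that the alternation of $D$ really does create distinguishable behavior for the candidate p-equipollences to get snagged on. Assembling these pieces, the resulting $C$ is an infinite language satisfying $\card{A} \prec \card{C} \prec \card{B}$.
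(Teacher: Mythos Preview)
Your outline is essentially correct and would work, but it takes a genuinely different route from the paper. Because the theorem imposes no complexity constraint on $C$, the paper does \emph{not} run a Ladner-style delayed diagonalization with a slow stage function $h$; instead it uses a direct finite-extension construction. At stage $s$ it treats one requirement and satisfies it immediately by adding at most two specific strings to $C$, or by adding one string to an auxiliary ``exclusion set'' $E \subseteq B \setminus C$ of strings that are permanently forbidden from $C$. The requirements are phrased so as to be $(\Sigma^0_1)^{A \oplus B}$, hence injury-free; the set $E$ is exactly what absorbs the non-monotone witness type ``$N_e(b) \notin C$'' that in your scheme is instead stabilized by the fact that $C$-membership at length $m$ is frozen once $h(m)$ is defined. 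Infinitude of $C$ is handled by devoting every third stage to throwing in one more element of $B \setminus (C_s \cup E_s)$.

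What your approach would buy, if carried through, is that $C \in \cc{P}^{A \oplus B}$, which the theorem does not ask for; the price is the Ladner bookkeeping and the $h \to \infty$ argument. Your $h \to \infty$ reasoning is right in spirit but you should make the odd case explicit rather than leave it as ``locally nontrivial'': if $h$ is stuck at an odd value then $C = B_0 \cup F$ for finite $F$, and a putative p-equipollence $g \colon B \to C$ restricts to a p-equipollence $B \setminus g^{-1}(F) \to B_0$, so $\card{A} = \card{B_0}$ equals the p-cardinality of a cofinite subset of $B$, contradicting $\card{A} \ll \card{B}$. Likewise, your claim that the even stretches force $C$ to be infinite needs one more sentence when $A$ (hence $B_0$) is finite: during the even stretch targeting $R^A_e$ the requirement cannot be witnessed until $C$ has outgrown $A$, so the stretch must extend until it meets at least one new element of $B \setminus B_0$; since $B \setminus B_0$ is infinite this happens, and it happens at every even value of $h$, making $D$ infinite.
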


Because of Theorem~\ref{thm:diff}, the assumption here is not equivalent to $\card{A} \prec \card{B}$; we will see in Corollary~\ref{cor:counterex} below that the same set constructed in Theorem~\ref{thm:diff} can be used to give an example of sets $A,B$ with $\card{A} \prec \card{B}$ but with no p-cardinal strictly in between the two. In the vein of Question~\ref{q:finite}, it is interesting to ask for conditions under which Theorem~\ref{thm:cardinalDensity} would hold with the weaker assumption $\card{A} \prec \card{B}$, and/or with the stronger conclusion $\card{A} \ll \card{C} \ll \card{B}$. As is, we necessarily have that at least one of $\card{A} \prec \card{C}$ and $\card{C} \prec \card{B}$ can be turned into $\ll$, for otherwise we could not have $\card{A} \ll \card{B}$, but our construction does not control which one.

\begin{proof}
By assumption, there is a p-computable, p-invertible bijection from $A$ to a co-infinite subset of $B$, but not vice versa. (Note that $B$ cannot be finite.) To simplify notation in what follows, let us identify $A$ with its image under this bijection, so that, without loss of generality, we may assume that in fact $A \subsetneq B$ and $B \backslash A$ is infinite. We will build $C$ such that $A \subsetneq C \subsetneq B$ and $\card{A} \prec \card{C} \prec \card{B}$. Since $A \subseteq C \subseteq B$, we have $\card{A} \preceq \card{C} \preceq \card{B}$. So it suffices to build $C$ in between $A$ and $B$ avoiding p-computable, p-invertible bijections $C \to A$ and $B \to C$. 

Let $\hat{M}_1, \hat{M}_2, \hat{M}_3, \dotsc$ be an enumeration of Turing machine transducers, and for all $c$ let $M_c$ be the machine gotten from $\hat{M}_c$ by restricting $\hat{M}_c$ to run on each input of length $n$ for no more than $cn^c + c$ steps. If $\hat{M}_c(x)$ has not made an output after $c|x|^c + c$ steps, then $M_c(x)$ rejects $x$ (i.\,e., it makes no output). It is clear that $M_1, M_2, \dotsc$ is thus an enumeration of all partial polynomial-time Turing machine transducers. 

\textbf{Construction.} 
We build $C$ in stages $s=0,1,2,\dotsc$. We also build a set $E$ of strings to be excluded from $C$. For each $s \in \N$, $C_s$ and $E_s$ will be the parts of $C$ and $E$, respectively, enumerated at the end of stage $s$. The construction will guarantee that, for each $s$, $C_s \cap E_s = \emptyset$, $A \subseteq C_s \subseteq C_{s+1} \subseteq B$, $C_s$ will be finitely different from $A$, $E_s \subseteq E_{s+1} \subseteq B$, and $E_s$ will be finite.

We will have two sets of requirements to satisfy below; when $s \equiv 0 \pmod{3}$, we will add to $C_s$ to ensure that $C$ is infinite. The stages $s+1$ where $s \equiv 1 \pmod{3}$ (resp. $2 \pmod{3}$) will ensure the first (resp., second) set of requirements are satisfied at this stage. We break the construction into numbered cases for reference in the verification below.

\textit{Stage $s=0$.} Set $C_0 = A$ and $E_0 = \emptyset$.

\textit{Stage $s+1$, where $s \equiv 0 \pmod{3}$.} Let $y$ be the least element of $B \backslash (C_s \cup E_s)$ (any element will do), and set $C_{s+1} := C_s \cup \{y\}$ and $E_{s+1} := E_s$. (Such a $y$ must exist since $B \backslash A$ is infinite, $C_s$ is finitely different from $A$, and $E_s$ is finite.)

\textit{Stage $s+1$, where $s = 3 \langle \alpha, \beta \rangle + 1$.} (Here we use $\langle \bullet, \bullet \rangle$ to denote a bijection $\N \times \N \to \N$.) 

\begin{enumerate}[label=Case \arabic{enumi}:]
\item $M_\alpha|_{A} \colon A \to M_{\alpha}(A)$ is not a p-equipollence, or $M_\alpha(A) \not\subseteq B$, or $M_\beta$ is not its inverse p-equipollence $M_\alpha(A) \to A$. In this case, set $C_{s+1} := C_s$ and $E_{s+1} := E_s$, and continue to the next stage ($s+2$).

\item Otherwise. Note that in this case $B \backslash M_\alpha(A)$ must be infinite, since $\card{A} = \card{M_\alpha(A)}$, but $\card{A} \ll \card{B}$. Let $Y = B \backslash E_s$; since $E_s$ is finite, $Y$ is cofinite in $B$. Since $\card{A} \ll \card{B}$ by assumption, but $Y$ is cofinite in $B$, $\card{A}$ cannot be equal to $\card{Y}$.

\begin{enumerate}[label=Subcase 2.\arabic{enumii}:]
\item $Y \not\subseteq \dom(M_\beta)$. In this subcase, let $y$ be the least element of $Y \backslash \dom(M_\beta)$ (any element will do), set $C_{s+1} := C_s \cup \{y\}$, $E_{s+1} := E_s$, and go to the next stage.

\item Not in the previous subcase, and $M_\beta$ is not injective on $Y$. In this subcase, let $y_1, y_2 \in Y$ be the least two distinct elements that $M_\beta$ maps to the same place (any two will do). Set $C_{s+1} := C_s \cup \{y_1, y_2\}$ and $E_{s+1} := E_s$ and continue to the next stage.

\item Not in the previous subcases, and $M_\beta(Y) \not\subseteq A$. In this subcase, let $y$ be the least element of $Y$ such that $M_\beta(y) \notin A$ (any will do), set $C_{s+1} := C_s \cup \{y\}$, $E_{s+1} := E_s$, and go to the next stage.

\item Not in the previous subcases. In this subcase, there must exist $y \in Y$ such that $M_\alpha(M_\beta(y)) \neq y$; for at this point we have that $M_\beta|_Y$ is an injective function from $Y$ into $A$, so if $M_\alpha$ were its inverse we would have $\card{A} = \card{Y}$ (for we already know that $(M_\beta \circ M_\alpha)|_{A} = id_A$), which we showed above was impossible. Let $y$ be the least such (any such will do), set $C_{s+1} := C_s \cup \{y\}$, $E_{s+1} := E_{s}$, and go to the next stage.
\end{enumerate}
\end{enumerate}

\textit{Stage $s+1$ where $s = 3 \langle \alpha, \beta \rangle + 2$.} 
\begin{enumerate}[label=Case \arabic{enumi}:]
\item In any of the following subcases, set $C_{s+1} := C_s$, $E_{s+1} := E_s$, and go to the next stage.

\begin{enumerate}[label=Subcase 1.\arabic{enumii}:]
\item $M_\alpha(C_s) \not\subseteq B$.

\item It is not the case that $M_\alpha|_{C_s} \colon C_s \to M_\alpha(C_s)$ is a p-equipollence with inverse $M_\beta$.

\item $X := B \backslash (M_\alpha(C_s) \cup E_s)$ is not contained in $\dom(M_\beta)$.

\item Not in the above subcases, and $M_\beta(X) \subseteq C_s$.
\end{enumerate}

\item Not in any of the above subcases. In this case, let $x$ be the least element of $X$ such that $M_\beta(x) \notin C_s$ (any such will do), set $E_{s+1} := E_s \cup \{x\}$, $C_{s+1} := C_s$, and go to the next stage. (Note that such an $x$ must exist, for $E_s$ is finite, and $M_\alpha$ is a p-equipollence from a set that is finitely different from $A$ to a subset of $B$, but $\card{A} \ll \card{B}$.)
\end{enumerate}

%
%
%
%

This completes the description of stage $s+1$, and thus of the construction. 

\textbf{Requirements.} Below, we will show that, in addition to ensuring $C$ is infinite, the construction satisfies two sets of requirements, and that these requirements imply the desired property of $C$. For the purposes of stating these requirements, define $E = \bigcup_s E_s = \lim_{s \to \infty} E_s$. 
The requirements for $(\alpha,\beta)$ are as follows:

\begin{quotation}
R1$_{\alpha, \beta}$: At least one of the following holds\\
(a) $A \not\subseteq \dom(M_\alpha)$ or $M_\alpha$ is not injective on $A$, or \\
(b) $C \not\subseteq \dom(M_\beta)$ or $M_\beta$ is not injective on $C$, or \\
(c) $M_\alpha(A)$ 
is not contained in $B$ (sic!), or \\
(d) $M_\beta(C)$ is not contained in $A$, or \\
(e) there is $a \in A$ with $M_\beta(M_\alpha(a)) \neq a$ or there is $c \in C$ with $M_\alpha(M_\beta(c)) \neq c$. 

R2$_{\alpha, \beta}$: At least one of the following holds\\
(a) $C \not\subseteq \dom(M_\alpha)$ or $M_\alpha$ is not injective on $C$, or \\
(b) $B \not\subseteq \dom(M_\beta)$ or $M_\beta$ is not injective on $B$, or \\
(c) $M_\alpha(C)$ is not contained in $B$, or \\
(d) $M_\beta(B) \cap E$ is nonempty, or \\ 
(e) there is $c \in C$ with $M_\beta(M_\alpha(c)) \neq c$ or there is $b \in B$ with $M_\alpha(M_\beta(b)) \neq b$.
\end{quotation}

\textbf{The requirements suffice.} First let us see that the requirements, plus $C$ being infinite, suffice for the theorem. The requirement R1$_{\alpha,\beta}$ implies that $M_\alpha$ is not a p-equipollence $A \to C$ with inverse $M_\beta$, and R2$_{\alpha,\beta}$ implies that $M_\alpha$ is not a p-equipollence $C \to B$ with inverse $M_\beta$ (note that R2$_{\alpha,\beta}$(d) ensures $M_\beta(B) \not\subseteq C$, since $E$ is disjoint from $C$). Since $(\alpha,\beta)$ run over a complete list of polynomial-time Turing machines, these requirements will establish that $\card{A} \prec \card{C} \prec \card{B}$. 

The stages with $s+1 \equiv 0 \pmod{3}$ add elements to $C$ infinitely often, ensuring that $C$ is infinite even if $A$ was finite.

\textbf{Verification of the requirements.} We begin by noting that this is an ``injury-free'' argument, in the sense that once a requirement is satisfied, it never becomes unsatisfied (``is never injured''). This is because each requirement is a $(\Sigma^0_1)^{A \oplus B}$ statement, that is, it is of the form $(\exists \vec{x})[P(\vec{x})]$ where $P$ is a predicate that is computable with an oracle for $A$ and $B$. Thus, it suffices to show that at the end of stage $s+1 = 3 \langle \alpha,\beta \rangle + b$, if $C',E'$ are any two sets satisyfing $C_s \subseteq C' \subseteq B$, $E_s \subseteq E' \subseteq B \backslash C'$, then R$b_{\alpha,\beta}$ is satisfied with $C'$ (resp., $E'$) in place of $C$ (resp., $E$).
 
\begin{remark} \label{rmk:requirements}
To ensure this $(\Sigma^0_1)^{A \oplus B}$ property, we note that R1$_{\alpha,\beta}$ is slightly stronger than is needed to ensure $M_\alpha$ is not a p-equipollence $A \to C$ with inverse $M_\beta$, and similarly R2$_{\alpha,\beta}$ is slightly stronger than is needed to ensure $M_\alpha$ is not a p-equipollence $C \to B$ with inverse $M_\beta$. In particular, if R1$_{\alpha,\beta}$(c) were merely ``$M_\alpha(A) \neq C$'' and R1$_{\alpha,\beta}$(d) were ``$M_\beta(C) \neq A$,'' it would suffice to avoid p-equipollences, but then these statements would not be $\Sigma^0_1$ in $A \oplus B$, for it would then be possible that later additions to $C$ would violate such a condition. Similarly, if R2$_{\alpha,\beta}$(c) were merely ``$M_\alpha(C) \neq B$'' and R2$_{\alpha,\beta}$(d) were merely ``$M_\beta(B) \neq C$,'' it would suffice to avoid p-equipollences, but again, would not so clearly avoid injuries from future additions to $C$.
\end{remark}

\textit{Stage $s+1$ with $s = 3\langle \alpha, \beta\rangle + 1$ ensures R1$_{\alpha,\beta}$ is satisfied.} We break the verification into cases according to the cases in the construction.

\begin{enumerate}[label=Case \arabic{enumi}.]
\item If $M_\alpha|_A \colon A \to M_\alpha(A)$ is not a p-equipollence to a subset $M_\alpha(A) \subseteq B$ with inverse $M_\beta$, then Case 1 of the construction tells us to move onto the next stage. We must thus show that in this case, without any further changes to $C_s, E_s$, it is already the case that R1 is satisfied. 

In this case, one of the following must hold:
\begin{enumerate}[label=(\Alph{enumii})]
\item $A \not \subseteq \dom(M_\alpha)$ or $M_\alpha$ is not injective on $A$. 

\item $M_\alpha(A) \not \subseteq B$.

\item $M_\alpha(A) \not\subseteq \dom(M_\beta)$

\item $M_\beta$ is not injective on $M_\alpha(A)$

\item $M_\beta(M_\alpha(A)) \not\subseteq B$

\item $M_\alpha$ and $M_\beta$ are not inverses on $A$, $M_\alpha(A)$, respectively. That is, there is some $a \in A$ such that $M_\beta(M_\alpha(a)) \neq a$ or some $b \in M_\alpha(A)$ such that $M_\alpha(M_\beta(b)) \neq b$. The latter is equivalent to the existence of an $a \in A$ such that $M_\alpha(M_\beta(M_\alpha(a))) \neq M_\alpha(a)$.
\end{enumerate}

Suppose (A) holds. (A) is the same as R1$_{\alpha,\beta}$(a), so R1$_{\alpha,\beta}$ is satisfied.

Suppose (B) holds. (B) is the same as R1$_{\alpha,\beta}$(c).

Suppose (C) holds. Then there is an $a \in A$ such that $M_\alpha(a) \notin \dom(M_\beta)$ and thus $M_\beta(M_\alpha(a)) \neq a$ since the LHS is undefined, so R1$_{\alpha,\beta}$(e) holds.

Suppose (D) holds. Then there are distinct $a_1,a_2 \in A$ such that $M_\beta(M_\alpha(a_1)) = M_\beta(M_\alpha(a_2))$. But then at least one of $M_\beta(M_\alpha(a_i)) = a_i$ for $i=1,2$ must fail, so R1$_{\alpha,\beta}$(e) holds.

Suppose (E) holds. Since $M_\beta(M_\alpha(A)) \not\subseteq B$, there is some $a \in A$ such that $M_\beta(M_\alpha(a)) \notin B \supseteq A$, so we must have $M_\beta(M_\alpha(a)) \neq a$, so R1$_{\alpha,\beta}$(e) holds.

Suppose (F) holds. Then it must be the case that there is an $a \in A$ such that $M_\beta(M_\alpha(a)) \neq a$. For if not, then by the second part of (F) we get that there is an $a$ such that 
\[
M_\alpha(M_\beta(M_\alpha(a))) \neq M_\alpha(a).
\]
But if $M_\beta(M_\alpha(a)) = a$ for all $a \in A$, then the displayed equation simplifies to $M_\alpha(a) \neq M_\alpha(a)$, which is absurd. Thus R1$_{\alpha,\beta}$(e) is satisfied. 

Thus, case 1 ensures R1$_{\alpha,\beta}$, as claimed.

\item Suppose instead we are in case 2, in which $M_\alpha|_A \colon A \to M_\alpha(A)$ is a p-equipollence to a subset of $B$ with inverse $M_\beta$. Following the notation in the construction, let $Y = B \backslash E_s$.

\begin{enumerate}[label=Subcase 2.\arabic{enumii}.]
\item  If the condition of subcase 2.1 is satsified ($Y \not \subseteq \dom(M_\beta)$), then we add a $y \in Y \backslash \dom(M_\beta)$ to $C$, thus satisfying R1$_{\alpha,\beta}$(b).

\item In subcase 2.2, if the relevant condition is satisfied, we add to $C$ distinct $y_1, y_2 \in Y$ such that $M_\beta(y_1) = M_\beta(y_2)$. This satisfies the second part of R1$_{\alpha,\beta}$(b).

\item In subcase 2.3, if the relevant condition is satisfied, we add to $C$ a $y \in Y$ such that $M_\beta(y) \notin A$, thus satisfying R1$_{\alpha,\beta}$(d).

\item In subcase 2.4, if the relevant condition is satisfied, we add to $C$ a $y \in Y$ such that $M_\alpha(M_\beta(y)) \neq y$, thus satisfying the second part of R1$_{\alpha,\beta}$(e). 
\end{enumerate}
\end{enumerate}

Since the cases and subcases exhaust all possibilities (case 2 is the ``otherwise'' of case 1, and subcase 2.4 captures all cases not in the previous cases by construction), this stage ensures that R1$_{\alpha,\beta}$ is satisfied.

\textit{Stage $s+1 = 3\langle \alpha, \beta \rangle + 2$ ensures R2$_{\alpha,\beta}$.} Following the notation in the construction, let $X = B \backslash (M_\alpha(C_s) \cup E_s)$. 

\begin{enumerate}[label=Case \arabic{enumi}.]
\item We must show that in this case, without any further changes to $C_s, E_s$, it is already the case that R2 is satisfied. 

\begin{enumerate}[label=Subcase 1.\arabic{enumii}:]
\item $M_\alpha(C_s) \not\subseteq B$. Thus we have that $M_\alpha(C) \not\subseteq B$, which is precisely R2$_{\alpha,\beta}$(c).

\item $M_\alpha|_{C_s} \colon C_s \to M_\alpha(C_s)$ is not a p-equipollence with inverse $M_\beta$. In this case, one of R2$_{\alpha,\beta}$(a), (b), or (e) must be satisfied (since together they are the definition of a p-equipollence), and thus R2$_{\alpha,\beta}$ is satisfied already.

\item $X \not \subseteq \dom(M_\beta)$. Since $X \subseteq B$ by definition, R2$_{\alpha,\beta}$(b) is satisfied.

\item Not in the above subcases, and $M_\beta(X) \subseteq C_s$. In this case, we claim that $M_\beta$ cannot be injective on $B$. To see that $M_\beta$ is not injective on $B$, first note that since we are not in subcase 1.2, we have that $M_\beta$ is the p-equipollence $M_\alpha(C_s) \to C_s$ that is inverse to $M_\alpha|_{C_s}$. Thus $M_\beta$ maps $M_\alpha(C_s)$ bijectively onto $C_s$. Now let $x \in X$, and let $b = M_\alpha(M_\beta(x))$. Since $M_\beta(X) \subseteq C_s$ by assumption, we have $b \in M_\alpha(C_s)$. But $X$ is disjoint from $M_\alpha(C_s)$ by definition, so $b \neq x$. But since $M_\beta, M_\alpha$ are inverses on $M_\alpha(C_s), C_s$ respectively, and $M_\beta(X) \subseteq C_s$ by assumption, we have $M_\beta(b) = M_\beta(M_\alpha(M_\beta(x))) = (M_\beta \circ M_\alpha)(M_\beta(x)) = (id)(M_\beta(x)) = M_\beta(x)$. Thus $M_\beta$ is not injective on $B$, satisfying the second part of R2$_{\alpha,\beta}$(b).
\end{enumerate}
\item In this case, $E_s$ gets extended to $E_{s+1}$ by adding some $x \in X$ such that $M_\beta(x) \notin C_s$. Since $X \subseteq B$, this ensures that $M_\beta(B) \cap E$ is not empty, satisfying R2$_{\alpha,\beta}$(d). 
\end{enumerate}

Since the cases and subcases exhaust all possibilities, at the end of this stage the requirements R2$_{\alpha,\beta}$ are satisfied.

Putting these all together, the construction indeed produces an infinite set $C$ such that $\card{A} \prec \card{C} \prec \card{B}$, completing the proof of the theorem.
\end{proof}

\begin{remark}
In addition to R1 and R2 being stronger than needed (see Remark~\ref{rmk:requirements}), the reader may also wonder why R1 and R2 are not perfect analogues of one another. The answer is that we actually first wrote a proof with them being more symmetric, then pared them down to the only parts of the requirements we actually ended up needing. 

Finally, we note that the argument is \emph{almost} monotone in $C$. The only requirement that is not monotone in $C$ is R2$_{\alpha,\beta}$(d), and it was this requirement that forced us to keep track of the set $E$ of excluded strings (R2$_{\alpha,\beta}$(d) is monotone in $E$). 
\end{remark}

Given that $\aleph_0$ is the unique minimum infinite cardinal, it is natural to wonder whether minimal infinite p-cardinals exist. The following corollary answers this in the negative:

\begin{corollary}[No minimal infinite p-cardinals] \label{cor:minimal}
For every infinite $B \subseteq \Sigma^*$, there is an infinite set $C$ such that $\card{C} \prec \card{B}$.
\end{corollary}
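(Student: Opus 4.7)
The plan is to simply invoke Theorem~\ref{thm:cardinalDensity} with a carefully chosen $A$. The natural choice is $A = \emptyset$ (or, equivalently, any finite set). First I would verify that $\card{\emptyset} \ll \card{B}$ holds for every infinite $B$: by definition of p-equipollence, the only language with the same p-cardinality as $\emptyset$ is $\emptyset$ itself, so the only subset $B' \subseteq B$ with $\card{B'} = \card{\emptyset}$ is $B' = \emptyset$, whence $B \setminus B' = B$ is infinite. Combined with the obvious strict inequality $\card{\emptyset} \prec \card{B}$ (since $B$ is infinite, it cannot be p-equipollent to the empty set), this verifies the hypothesis of Theorem~\ref{thm:cardinalDensity}.

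Applying Theorem~\ref{thm:cardinalDensity} then yields an infinite language $C$ with $\card{\emptyset} \prec \card{C} \prec \card{B}$, and in particular $\card{C} \prec \card{B}$, which is the desired conclusion.

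There is no real obstacle here; the work was done in proving Theorem~\ref{thm:cardinalDensity}, and the only thing to check is that the ``much smaller than'' hypothesis is automatically satisfied when the lower set is empty (or finite). More generally, the same argument shows $\card{F} \ll \card{B}$ for any finite $F$ and infinite $B$, since subsets of $B$ p-equipollent to $F$ are exactly the subsets of $B$ of cardinality $|F|$, each of which has an infinite complement in $B$.
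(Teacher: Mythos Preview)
Your proposal is correct and is exactly the paper's approach: the paper's proof is the single line ``Apply Theorem~\ref{thm:cardinalDensity} with $A = \emptyset$.'' Your verification that $\card{\emptyset} \ll \card{B}$ is a welcome elaboration of what the paper leaves implicit.
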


\begin{proof}
Apply Theorem~\ref{thm:cardinalDensity} with $A = \emptyset$. 
\end{proof}

We suspect a modification of the above proof could be used to answer the following stronger question in the negative:

\begin{question} \label{q:minimal}
Do there exist p-cardinals that are minimal(ly infinite) under $\equiv$? That is, an infinite language $B \subseteq \Sigma^*$ such that if $\card{C} \prec \card{B}$, then either $C$ is finite or $\card{B} \preceq \card{C}$?
\end{question}

We conclude this section by showing that the assumption $\card{A} \ll \card{B}$ in Theorem~\ref{thm:cardinalDensity} cannot, in general, be replaced by $\card{A} \prec \card{B}$, even for infinite languages (for finite languages this is clear, whenever $|B|=|A|+1$), by the following corollary to Theorem~\ref{thm:diff}:

\begin{corollary}[Infinite p-cardinals can have immediate predecessors] \label{cor:counterex}
There exist infinite languages $A,B$ with $\card{A} \prec \card{B}$ such that any language $C$ with $\card{A} \preceq \card{C} \preceq \card{B}$ is p-equipollent to either $A$ or $B$.
\end{corollary}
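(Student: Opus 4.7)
My plan is to use the set $A_0 := \{1^{2^{2^{2^k}}} : k \in \N\}$ from Theorem~\ref{thm:diff}. Set $A = A_0$ and $B = A_0 \cup \{w\}$ for an arbitrary fixed string $w \notin A_0$. Theorem~\ref{thm:diff}(3) directly yields $\card{A} \prec \card{B}$, since the signed discrepancies from $A_0$ are $0$ and $1$, respectively.

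Now let $C$ be any language with $\card{A} \preceq \card{C} \preceq \card{B}$, witnessed by p-equipollences $f : A_0 \to C_1 \subseteq C$ and $g : C \to B_1 \subseteq B$. The composition $g \circ f$ is a p-equipollence $A_0 \to g(C_1) \subseteq A_0 \cup \{w\}$, and I case-split on whether $w \in g(C_1)$.

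If $w \notin g(C_1)$, then $g(C_1) \subseteq A_0$, so the Dedekind p-finiteness of $A_0$ (Theorem~\ref{thm:diff}(1)) immediately forces $g(C_1) = A_0$. Examining the restriction of $g$ to $C \setminus C_1$, whose image lies in $B \setminus A_0 = \{w\}$, yields either $C = C_1$ (so $\card{C} = \card{A}$) or $|C \setminus C_1| = 1$ (so $g$ becomes a p-equipollence $C \to B$ and $\card{C} = \card{B}$). If instead $w \in g(C_1)$, let $a_w = f^{-1}(g^{-1}(w)) \in A_0$; restricting $g \circ f$ gives a p-equipollence $A_0 \setminus \{a_w\} \to g(C_1) \setminus \{w\} \subseteq A_0$. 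The key step is to show that $g(C_1) \setminus \{w\}$ is finitely different from $A_0$, by re-running the gap argument from the proof of Theorem~\ref{thm:diff}(1): since $\ell_{k+1} = \ell_k^{\log_2 \ell_k}$ grows super-polynomially in $\ell_k$, any polynomial-time bijection between subsets of $A_0$ must fix $1^{\ell_k}$ for all sufficiently large $k$. Theorem~\ref{thm:diff}(2) then counts the symmetric difference to pin down $g(C_1) \setminus \{w\} = A_0 \setminus \{a_w'\}$ for some $a_w' \in A_0$, after which a final subcase on $g(C \setminus C_1) \subseteq \{a_w'\}$ again yields $\card{C} = \card{A}$ (if $C = C_1$, applying Theorem~\ref{thm:diff}(2) to the set $(A_0 \setminus \{a_w'\}) \cup \{w\}$) or $\card{C} = \card{B}$ (if $|C \setminus C_1| = 1$).

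The main obstacle is the second case: a priori, $g(C_1) \setminus \{w\}$ could be an arbitrary subset of $A_0$ with the right p-cardinality, and Theorem~\ref{thm:diff}(1) does not literally apply because its domain is $A_0 \setminus \{a_w\}$ rather than $A_0$. The rescue is that the gap argument underlying Theorem~\ref{thm:diff}(1) is really a statement about polynomial-time maps between sparse sets, and it adapts with no substantive change to force the image to agree with $A_0$ above some length; combined with Theorem~\ref{thm:diff}(2), this gives the exact structural identification needed.
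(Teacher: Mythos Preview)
Your argument is correct, and uses the same set $A_0$ and $B = A_0 \cup \{w\}$ as the paper. The route differs in one non-trivial way, though, and it is worth noting the comparison.

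The paper immediately replaces $C$ by its image $C' := g(C) \subseteq B$, so from that point on it works entirely with subsets of $A_0 \cup \{w\}$. If $C' \subseteq A_0$, Dedekind p-finiteness (Theorem~\ref{thm:diff}(1)) forces $C' = A_0$. If $C' = A' \cup \{w\}$ with $A' \subsetneq A_0$, the paper uses a simple \emph{swap trick}: when $|A_0 \setminus A'| \geq 2$, swap $w$ with some $w_1 \in A_0 \setminus A'$ to exhibit $\card{C'} = \card{A' \cup \{w_1\}}$ with $A' \cup \{w_1\} \subsetneq A_0$; then $\card{A_0} \preceq \card{C'}$ gives a p-equipollence of $A_0$ with a proper subset of itself, contradicting Theorem~\ref{thm:diff}(1). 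This forces $|A_0 \setminus A'| = 1$, after which Theorem~\ref{thm:diff}(2) finishes. The entire argument uses Theorem~\ref{thm:diff} only as a black box.

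Your approach instead tracks both maps $f,g$ and analyzes $g \circ f$. In your Case~2 you must re-open the gap argument from the proof of Theorem~\ref{thm:diff} to show that $g(C_1) \setminus \{w\}$ is cofinite in $A_0$ (since Theorem~\ref{thm:diff}(1) does not literally apply with domain $A_0 \setminus \{a_w\}$). That step is sound --- the super-polynomial gaps force any p-equipollence between subsets of $A_0$ to be the identity above some length --- but it means your proof depends on the internal structure of Theorem~\ref{thm:diff}'s proof rather than just its statement. The paper's swap trick avoids this, yielding a shorter and more modular argument; your version is more hands-on but equally valid.
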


By Theorem~\ref{thm:cardinalDensity}, necessarily any such pair cannot have $\card{A} \ll \card{B}$.

\begin{proof}
Let $A$ be the set guaranteed by Theorem~\ref{thm:diff}, let $w \notin A$, and let $B = A \cup \{w\}$. Suppose $C$ is such that $\card{A} \preceq \card{C} \preceq \card{B}$, but $C$ is not p-equipollent to $B$. We will show that $C$ must be p-equipollent to $A$. Since $\card{C} \preceq \card{B}$ by assumption, there is a p-equipollence $f \colon C \to C'$ to a proper subset $C' \subset B$. Since neither the conclusion nor the statement depends on the representative of $\card{C}$ chosen, without loss of generality let us replace $C$ by $C'$. 

Since $B = A \cup \{w\}$ and $C' \subseteq B$, $C'$ is either a subset of $A$, or is $A' \cup \{w\}$ for a subset $A' \subseteq A$. But by assumption, there is a p-equipollence from $A$ to a subset of $C'$. By Theorem~\ref{thm:diff}(1), $C'$ cannot be a proper subset of $A$. If $C'=A$, then we are done. 

So all that is left is to handle the case that $C' = A' \cup \{w\}$ for some $A' \subseteq A$. If $A' = A$, then $C' = B$, contradicting our assumption that $\card{C} \neq \card{B}$. Thus $A'$ must be a proper subset of $A$. 
If $|A \backslash A'| \geq 2$, then 
let $w_1 \in A \backslash A'$. Then we have $\card{C'} = \card{A' \cup \{w\}} = \card{A' \cup \{w_1\}}$. But $A' \cup \{w_1\}$ is a proper subset of $A$, contradicting Theorem~\ref{thm:diff}(1). So $|A \backslash A'| \leq 2$, and since $A'$ is a proper subset of $A$, we must have $|A \backslash A'| = 1$. Thus $|C' \backslash A| - |A \backslash C'| = |\{w\}| - |A \backslash A'| = 1-1=0$, 
so by Theorem~\ref{thm:diff}(2), $\card{C'} = \card{A}$.
\end{proof}

\section{Statements weaker than the Axiom of Choice} \label{sec:weaker}
Here we consider statements that are implied by the Axiom of Choice over ZF, but that are not known to be equivalent to AC (or are known to be strictly weaker), because their polynomial-time analogues make interesting connections.

\begin{cresult} Any union of countably many countable sets is countable. \end{cresult}

\begin{panalogue} \label{pta:union} If $L$ is a $\cc{P}$ collection of languages $L_x$ each of which is p-countable, then $L$ is p-countable. \end{panalogue}

One might hope to build a p-equipollence $L \to \Sigma^*$ by using the individual p-equipollences $L_x \to \Sigma^*$ together with the fact that $\card{\Sigma^*} = \card{\Sigma^* \times \Sigma^*}$. However, one runs into the issue that if $f_x\colon L_x \to \Sigma^*$ is a p-equipollence, there may not be a uniform polynomial upper bound on the runtimes of all the $f_x$. In the opposite direction, the fact that $L$ itself is in $\cc{P}$ means there \emph{is} a uniform polynomial upper bound on the times to decide each $L_x$, so perhaps there is still some hope that this is unconditionally true. We leave it as an open question

\begin{question} \label{q:union}
Prove or disprove the PTA\ref{pta:union}. Does one of the polynomial-time versions of AC imply PTA\ref{pta:union}?
\end{question}

\begin{cresult} If $A$ is infinite, there is an injection from $\N \to A$. \end{cresult}

A natural p-analogue of this fails:

\begin{corollary}
There is an infinite language $L \in \cc{P}$ such that $\card{\Sigma^*} \not\preceq \card{L}$.
\end{corollary}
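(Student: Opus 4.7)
The plan is to take an infinite but sparse language in $\cc{P}$ and show that density considerations rule out any p-equipollence from $\Sigma^*$ to a subset of it. Concretely, I would let $L = \{0^n : n \in \N\}$, which is clearly infinite and in $\cc{P}$.

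Suppose for contradiction that $\card{\Sigma^*} \preceq \card{L}$, so that there is a subset $L' \subseteq L$ with $\card{\Sigma^*} = \card{L'}$. By Lemma~\ref{lem:density}, the census functions $c_{\Sigma^*}$ and $c_{L'}$ must be polynomially related. However, $c_{\Sigma^*}(n) = 1 + |\Sigma| + \dotsb + |\Sigma|^n = \Theta(|\Sigma|^n)$, which grows exponentially since $|\Sigma| \geq 2$, while $c_{L'}(n) \leq c_L(n) \leq n+1$, which grows only linearly. No polynomial $p$ can satisfy $|\Sigma|^n \leq (p(n)+1)$ for all sufficiently large $n$, so the two census functions are not polynomially related, a contradiction. (Equivalently, one can invoke the immediate corollary of Lemma~\ref{lem:density} stating that p-countably infinite sets are exponentially dense, and note that no subset of $L$ is exponentially dense.)

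There is no real obstacle here; the argument is a one-line application of the density lemma once the right witness language is chosen. The only thing to be slightly careful about is that $\card{A} \preceq \card{B}$ is defined via a subset of $B$ (not all of $B$), which is why we must pass to $L'$ and use that density of subsets is bounded by density of the ambient set.
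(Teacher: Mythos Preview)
Your proof is correct and follows the same density-based approach as the paper: both argue that $\card{\Sigma^*} \preceq \card{L}$ would force a subset of $L$ to be exponentially dense (via Lemma~\ref{lem:density}), contradicting the sparsity of $L$. The only difference is the choice of witness: the paper reuses one of the tower-function languages from Proposition~\ref{prop:not_total} (which is only \emph{infinitely-often} linearly sparse), whereas your choice $L = \{0^n : n \in \N\}$ is simpler and sparse at every length, making the density contradiction slightly cleaner.
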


This is a corollary to the proof of Proposition~\ref{prop:not_total}.

\begin{proof}
Let $L$ be one of the languages constructed in Proposition~\ref{prop:not_total}. $\card{\Sigma^*} \preceq L$ would imply $L$ is exponentially dense, but $L$ is infinitely-often linearly sparse, so the result follows.
\end{proof}

\begin{cresult}[The Axiom of Uniformization] If $R \subset X \times Y$ where $X$ and $Y$ are Polish spaces, then there is a subset $f \subseteq R$ that is a partial function $f\colon X \to Y$ and such that $\dom(f) = \set{x \in X | \exists y \in Y (x,y) \in R}$. \end{cresult}

\begin{panalogue} If $X, Y \in \cc{P}$ and $R \subseteq X \times Y$ is in $\cc{P}$ as well, then there is a refinement $f$ of $R$ such that $f$ is a partial function that is polynomial-time computable and whose domain is exactly $\dom(f) = \set{x \in X : \exists y \in Y (x,y) \in R}$.  \end{panalogue}

This is a restatement of $\cc{NPMV}_g \subseteq_c \cc{PF}$, studied by Blass \& Gurevich \cite{blassGurevich2} about a decade before Selman introduced the $\cc{NPMV}$ notation \cite{SelmanFunctions}. Note that some of our polynomial-time versions of AC are about $\cc{NPMV}_t$ or $\cc{NPMV}_{gt}$, but collapses of $\cc{NPMV}_t$ are generally not known to imply collapses of $\cc{NPMV}$, in contrast to the classical, unbounded situation where AC implies the Axiom of Uniformization.

\section{Discussion and future work} \label{sec:discussion}
Beyond the many questions and conjectures already stated (Questions~\ref{q:NPUP}, \ref{q:pairwise_disjoint},  \ref{q:finite}, \ref{q:cylinder}, \ref{q:division}, \ref{q:minimal}, \ref{q:union}, questions around Theorem~\ref{thm:ranking}, Conjectures~\ref{conj:iso} and \ref{conj:not_fin_gen}), here we raise a few more that we think would be interesting avenues of exploration. 

\textit{Around the Isomorphism Conjecture.} Are Joseph \& Young's creative sets \cite{JY}, proposed as counterexamples to the Isomorphism Conjecture \cite{BH}, p-equipollent to each other? To SAT? (See, e.\,g., the introduction of \cite{KMRRandom} for a survey of developments around this.) Recall that a language $A$ is a \emph{cylinder} if $A \cong^p A \times \Sigma^*$; this is equivalent to $A$ being \emph{paddable}, and cylinders play an important role more generally in investigations around the Isomorphism Conjecture \cite{BH}. One of the main theorems of Berman \& Hartmanis \cite{BH} is that an $\cc{NP}$-complete set is p-isomorphic to SAT iff it is a cylinder. Note that if $A$ is a cylinder then $\card{A} = \card{A \times \Sigma^*} \succeq \card{\Sigma^*}$ Note that $A$ is a cylinder iff $\overline{A}$ is. What can be said about the cardinalities of cylinders? If $A$ and $\overline{A}$ both have p-cardinalities equal to their product with $\Sigma^*$, must $A$ be cylinder-like in some way (cf. Question~\ref{q:cylinder})?

\textit{Connections to other complexity notions.} Can some of our open questions about p-cardinality or polynomial-time axioms of choice be resolved assuming $\mu_p(\cc{NP}) \neq 0$ \cite{Lutz}? 

It seems there should be interesting connections to be had between p-cardinality and several of Selman's other interests, such as p-selectivity \cite{Selman79Psel, Selman81Psel, SelmanPsel, NaikSelmanPsel, HT}, disjoint pairs \cite{GSSZdisjoint, GSTWdisjoint, GHSWdisjoint, GSSdisjoint}, and mitoticity \cite{GPSZmitosis, GOPS07, GNSW17}. 

Several versions of AC use well-orderings. In the context of studying self-reducibility, Meyer \& Paterson introduced some definitions that could be useful for polynomial-time analogues of these versions of AC:

\begin{definition}[{Meyer \& Paterson \cite[Def.~4]{MP79}}] \label{def:well}
A partial order $\prec$ on $\Sigma^*$ is \emph{polynomially well-founded} if there is a polynomial $p$ such that every finite $\prec$-decreasing chain has at most $p(|x|)$ elements in it, where $|x|$ is the maximal length of any element in the chain.

A language $L \subseteq \Sigma^*$ is \emph{self-reducible} if there is a polynomial-time oracle TM $M^{\square}$ and an p-well-founded, length-related ordering $\prec$ such that $L(M^L)=L$ and for any input $x$, $M^L(x)$ only queries the oracle on strings that are strictly $\prec x$. 
\end{definition}

Selman studied this notion of self-reducibility in (at least) \cite{SelmanSelfReducible, HNOSPsel, SelmanPsel}.

\begin{cform}[See {\cite[WE1, p.~1]{RubinRubin}}] Every set can be well-ordered. \end{cform}

\begin{pform} Every language in $\cc{P}$ admits a p-well-ordering computable in $\cc{P}$. \end{pform}

More generally, aside from the use of well-orderings in relation to AC, an anonymous reviewer has suggested, and we agree, that it would be interesting to develop a theory of p-\emph{ordinals}, similar to the development of p-cardinals here. We point out that this could either be in terms of total orders on $\Sigma^*$ computable in $\cc{FP}$, or in terms of total orders on subsets $A \subseteq \Sigma^*$, where the order relation is computable in $\cc{PF}$ on $A \times A$. It is unclear to us at this point which would fit more naturally with our development of p-cardinals here.

Recall that a set is said to be Dedekind finite iff it does not have a bijection onto any of its proper subsets. In classical set theory, ``finite'' and ``Dedekind finite'' are equivalent, but in other settings this need not be the case. Dekker \cite{Dekker} and Dekker--Myhill \cite{DekkerMyhill} were originating works on isols as models of Dedekind-finite sets.

\begin{definition}[Dedekind p-finite] \label{def:Dedekind} A language $L \subseteq \Sigma^*$ is \emph{Dedekind p-finite} if it does not have the same p-cardinality as any of its proper subsets. \end{definition}

The set constructed in Theorem~\ref{thm:diff} is an infinite set that is Dedekind p-finite. How do Dedekind p-finite sets relate to the p-finite sets of Nerode \& Remmel \cite[Def.~6]{NerodeRemmelIsol}? What can Dedekind p-finite sets tell us about p-cardinality more generally? Do they have a nice theory of arithmetic? Can they be used to realize Downey's suggestion \cite{DowneyReview} (see the quote in Section~\ref{sec:intro})? 

Lastly, one could also consider finite axioms of choice, in which one starts with a collection consisting of \emph{finite} sets---here one may think correspondingly of a language $L$, where a single element of $L$ represents a binary set by considering the binary encoding of $x$ as the characteristic function of a finite subset of $\mathbb{N}$. Finite AC are discussed in \cite[Sec.~2.2]{Herrlich} and \cite{Truss}.

\section*{Acknowledgment}
We thank Lance Fortnow for initial discussions and some preliminary results on these problems when we first discussed them (I was in graduate school), and also for suggesting I submit it in memoriam. We thank two anonymous reviewers for their suggestions, which both improved the exposition and organization of the paper, and added results to it. In particular, Corollary~\ref{cor:minimal} was originally posed as an open question, but because of expository improvements to the proof of Theorem~\ref{thm:cardinalDensity} suggested by the reviewers, it needed only a 1-line modification to the original proof of that theorem (the theorem in the original submission did not guarantee that $C$ was infinite). Theorem~\ref{thm:diff} answered an open question posed in the original submission, using a modification of a construction suggested by one of the reviewers.

The preparation of this work was partially supported by NSF CAREER award CCF-2047756, and probably one of Lance's awards from when I was in graduate school, but that was long enough ago we don't remember and the NSF probably no longer cares. 

And of course we thank Alan for his immense curiosity and insight that he shared so well with us all.

\appendix

\section{Comments on other versions of the Axiom of Choice}
In this appendix we gather some versions of AC we came across whose polynomial-time versions seem feasible and interesting to study, but we have not undertaken such explorations. 

\subsection{Algebraic versions}
\begin{cform} Every vector space has a basis. \end{cform}

\cite{NerodeRemmel1} give oracles $A_1,A_2$ relative to which $\cc{P}^{A_i} \neq \cc{NP}^{A_i}$, and relative to $A_1$ every non-deterministic polynomial-time subspace has a polynomial-time basis, while relative to $A_2$ there are non-deterministic polynomial-time subspaces with no polynomial-time bases. But, as with the rest of their line of work, here vectors are all encoded over a unary alphabet.

\begin{cform} For every non-empty set $S$ there is a binary operation defined on $S$ that makes $S$ a group. \end{cform}

The polynomial-time version of this seems related to the notions of groupy witnesses \cite[Sec.~4.1.2]{FortnowGrochow} and group definability \cite{AV}.

\begin{cform}[{Tychonoff's Theorem, see \cite[P5, p.~69]{RubinRubin}}] The product of compact spaces is compact. \end{cform}

\subsection{Model-theoretic versions}

\begin{cresult} Every game $G_{S}$ in which $S$ is a Borel subset of Baire space is determined. \end{cresult} 

\begin{cform}[{See \cite[P6, p.~69]{RubinRubin}}] A formula having a model in a set of cardinality $n$ also has a model in a set of cardinality $m$ if $\aleph_0 \leq m \leq n$. \end{cform}

\begin{cform}[{See \cite[P7, p.~69]{RubinRubin}}] A formula having a model in a set of cardinality $\aleph_0$ also has a model in a set of any cardinality greater than $\aleph_0$. \end{cform}

\begin{cform}[{See \cite[P8, p.~69]{RubinRubin}}] If $Q$ is a set of formulas in which the set of individual constants has cardinality $m$ and every finite subset of $Q$ has a model, then $Q$ has a model in a set whose cardinality is not greater than $m + \aleph_0$. \end{cform}

\subsection{Order-theoretic versions}

\begin{cform}[{Hausdorff Maximum Principle, see \cite[Thm.~2.2]{Herrlich}}] In any poset, every totally ordered subset is contained in a maximal totally ordered subset.  Equivalently: every poset (merely) \emph{has} a maximal totally ordered subset. \end{cform}

\begin{cform}[{Kurepa's Maximal Antichain Condition, see \cite[Thm.~2.4(3)]{Herrlich}}] Every poset has a maximal antichain. \end{cform}

Recall that a collection $\mathcal{S}$ of sets is \emph{of finite character} if for every set $A \in \mathcal{S}$, every finite subset $F \subseteq A$ is also in $\mathcal{S}$, and conversely, that is, if there exists a set $A$ such that every finite subset $F \subseteq A$ is in $\mathcal{S}$, then $A$ is also in $\mathcal{S}$.

\begin{cform}[{Tukey's Lemma, see \cite[Thm.~2.2]{Herrlich}, \cite[M7, p.~13]{RubinRubin}}] Every non-empty collection of finite character has a maximal element with respect to inclusion.  \end{cform}

\begin{pform}
Every $\cc{P}$ collection of honestly non-empty languages of finite character has a maximal element with respect to inclusion.
\end{pform}

Finally, we mention a few based on cardinal exponentiation. Cardinal exponentiation is usually defined as $|A|^{|B|} := |\{f\colon A \to B\}|$. 

\begin{cform}[{See \cite[CN23, CN24, CN25, pp.~53--54]{RubinRubin}}]
\begin{enumerate}
\item For each fixed infinite cardinal $m$, for all infinite cardinals $p,q$, if $p^m < q^m$ then $p < q$.

\item There is a cardinal $n > 1$ such that for all cardinals $p,q$ there is a cardinal $1 < m \leq n$ such that $p^m < q^m$ implies $p < q$.

\item For all $m,p,q$, if $m^p < m^q$ and $m \neq 0$ then $p < q$.
\end{enumerate}
\end{cform}

On the one hand, when $B$ is finite, we should have $A^B = A \times A \times \dotsb \times A$, where the latter product has $|B|$ copies of $A$. On the other hand, when $B$ is infinite, it feels natural to potentially define the p-exponent $A^B$ as the set of Turing machines (/programs/indices) that compute partial polynomial-time functions $B \to A$. This causes two issues: first, when $B$ is finite, these two definitions disagree. Second, using the latter definition, whenever $A,B$ are nonempty, $A^B$ is uncomputable by Rice's Theorem. We nonetheless make the following provisional definition:

\begin{definition}[Exponentiation of p-cardinals]
Let $A \subseteq \Sigma^*, B \subseteq \Gamma^*$. We define $\card{A}^{\card{B}}$ as the p-cardinality of the set of Turing machines $M$ such that $M$ computes a polynomial-time partial function $B \to A$. 
\end{definition}

Let us unwrap what it means for expressions such as $\card{A}^{\card{C}} \preceq \card{B}^{\card{C}}$. This means there is a partial polynomial-time computable, p-invertible function $f$ (invertible on $\img(f)$) that, given any Turing machine $M$ computing a polynomial-time function $C \to A$, outputs a Turing machine $f(M)$ computing a partial polynomial-time function $C \to B$. This gives us some hope that, despite the uncomputability of $\card{A}^\card{B}$, meaningful statements might be provable about it.

\begin{pform}
\begin{enumerate}
\item For each fixed $C \in \cc{P}$, for all languages $A,B \in \cc{P}$, if $\card{A}^{\card{C}} \prec \card{B}^{\card{C}}$, then $\card{A} \prec \card{B}$.

\item There is an set $N \in \cc{P}$ with $\card{N} \succ 1$ such that for all sets $A,B \in \cc{P}$, there is a set $1 \prec \card{C} \preceq \card{N}$ in $\cc{P}$ such that $\card{A}^{\card{C}} \prec \card{B}^{\card{C}}$ implies $\card{A} \prec \card{B}$.

\item For all nonempty $C \in \cc{P}$ and all languages $A,B \in \cc{P}$, if $\card{C}^{\card{A}} \prec \card{C}^{\card{B}}$, then $\card{A} \prec \card{B}$.
\end{enumerate}
\end{pform}

\subsection{Forms of AC which it was difficult to formulate an interesting polynomial-time analogue}
Here we catalog some of the versions of the Axiom of Choice for which we had difficulty formulating a reasonable polynomial-time analogue, and the difficulties we encountered.

The first few make reference to the power set, which caused us trouble:

\begin{cform} \label{ac:subset} For any set $A$, its power set (with the empty set removed) has a choice function.  Equivalently: for any set $A$ there is a function $f$ such that for any nonempty subset $B \subseteq A$, $f(B) \in B$. \end{cform}

\begin{cform}[{See \cite[Thm.~2.4(5)]{Herrlich}}] The power set of each well-orderable set can be well-ordered. \end{cform}

On the one hand, we would like to talk about the ``polynomial-time power set'' of a language $L \in \cc{P}$. One attempt is to consider the collection of all $L' \subseteq L$ such that $L' \in \cc{P}$; a natural way to refer to these subsets $L'$ is to write $L' = L \cap L(M_{x})$ for some polynomial-time machine $M_{x}$, and then these indices $x$ can be used as the input to a choice function or well-ordering function. But then we run into the issue that any such polynomial-time function cannot even simulate all $M_x$, given $x$ as input.

Others made reference to the existence of a single element, which felt not amenable to asymptotic considerations, e.\,g., perhaps one of the most famous:

\begin{cform}[{Zorn's Lemma, see, e.\,g., \cite[Thm.~2.2]{Herrlich}}] Every non-empty poset in which every chain (totally ordered subset) has an upper bound contains at least one maximal element. \end{cform}

\bibliographystyle{plainurl}
\bibliography{AC}

\end{document}